\newtheorem{theorem}{Theorem}[section]
\newtheorem{lemma}{Lemma}[section]
\newtheorem{corollary}{Corollary}[section]
\newcommand{\bbN}{\mathbb{N}}
\newcommand{\rmi}{\mathrm{i}}
\newcommand{\di}{\displaystyle}
\begin{document}
\title{The discrete nonlinear Schr\"odinger equation with linear gain and nonlinear loss: the infinite lattice with nonzero boundary conditions and its finite dimensional approximations}
\author{G. Fotopoulos}
\affiliation{General Education Department, Abu Dhabi Polytechnic, P.O.\,Box 111499, Abu Dhabi, UAE}
\email{georgios.fotopoulos@actvet.gov.ae}
\affiliation{Department of Mathematics, Xiamen University Malaysia, 43900, Selangor, Malaysia}
\author{N.\,I. Karachalios}
\affiliation{Department of Mathematics, University of Thessaly, Lamia 35100, Greece}
\email{karan@uth.gr}
\author{V. Koukouloyannis}
\affiliation{Department of Mathematics, University of the Aegean, Karlovasi, 83200 Samos, Greece}
\email{vkouk@aegean.gr}
\author{P. Kyriazopoulos}
\affiliation{Department of Mathematics, University of Thessaly, Lamia 35100, Greece}
\author{K. Vetas}
\affiliation{Department of Mathematics, University of Thessaly, Lamia 35100, Greece}

\begin{abstract}
The study of nonlinear Schr\"odinger-type equations with  nonzero boundary conditions introduces challenging problems both for the continuous (partial differential equation) and  the discrete (lattice) counterparts. They are associated with fascinating dynamics emerging by the ubiquitous phenomenon of modulation instability.  
In this work, we consider the  discrete nonlinear Schr\"odinger equation with linear gain and nonlinear loss. For the infinite lattice supplemented with nonzero boundary conditions, which describe solutions decaying on the top of a finite background, we provide a rigorous proof that for the corresponding initial-boundary value problem, solutions exist for any initial condition, if and only if, the amplitude of the background has a precise value $A_*$ defined by the gain-loss parameters.  We argue that this essential property of this infinite lattice can't be captured by finite lattice approximations of the problem. Commonly, such approximations are provided by lattices with periodic boundary conditions or as it is shown herein, by a modified problem closed with Dirichlet boundary conditions. For the finite dimensional dynamical system defined by the periodic lattice, the dynamics for all initial conditions are captured by a global attractor. Analytical arguments corroborated by numerical simulations show that the global attractor is trivial, defined by a plane wave of amplitude $A_*$. Thus, any instability effects or localized phenomena simulated by the finite system can be only transient prior the convergence to this trivial attractor. Aiming to simulate the dynamics of the infinite lattice as accurately as possible, we study the dynamics of localized initial conditions on the constant background and investigate the potential impact of the global asymptotic stability of the background with amplitude $A_*$ in the long-time evolution of the system. 
\end{abstract}

\keywords{Dissipative discrete nonlinear Schr\"odinger systems, Ablowitz--Ladik lattice, non-zero boundary conditions, modulation instability}
\maketitle

\section{Introduction}
In the present paper, we continue our studies initiated in \cite{GL1} on the  dynamical behavior of the solutions of the following discrete nonlinear Schr\"odinger \cite{KevreDNLS} (DNLS)-type equation 
\begin{equation}
	\label{dnls_gl}
	\rmi\dot{u}_n  + k(u_{n+1} - 2u_n+u_{n-1}) + |u_n|^2 u_n = \rmi\gamma u_n + \rmi\delta |u_n|^2 u_n .
\end{equation} 

Eq.~(\ref{dnls_gl}) is a fundamental nonlinear lattice model
incorporating linear and nonlinear gain/loss effects.
The parameter $\gamma$ describes linear loss ($\gamma<0$) [or gain
($\gamma>0$)],
while $\delta$ describes
nonlinear loss ($\delta<0$) [or gain ($\delta>0$)].
The important of the presence of these
effects is crucial, particularly in the context of nonlinear optics, where the model ~(\ref{dnls_gl}) may describe the evolution of localized modes in optical waveguides, see \cite{Efrem1,Boris1,Boris2,Boris3} (and the references therein).  In this context, $\gamma$ describes a
linear absorption ($\gamma<0$) [or linear amplification ($\gamma>0$)], while $\delta$
stands for nonlinear amplification ($\delta>0$) [or gain saturation ($\delta<0$)].
The importance of these effects in various discrete and continuous set-ups has been also highlighted in \cite{Nail1, KodHas87, Agra1, Agra2, akbook, Gagnon} as they may have a prominent role in the potential stability/instability of
the localized modes. 

In \cite{GL1} we studied the dynamics of the model ~(\ref{dnls_gl}) in the gain/loss regimes for  $\gamma$ and $\delta$, where collapse in finite-time  (or blow-up) is manifested. We identified therein, distinct types of collapse and estimates for the blow-up time which in various cases proved to be sharp.  We remark that the potential destabilization by finite-time collapse has been proved to be a major characteristic of the dynamical behavior of DNLS models incorporating gain/loss effects \cite{Boris2, Boris3}. 

Herein, we will focus on the study of the dynamics in the case of linear gain $\gamma>0$ and nonlinear loss $\delta<0$. While the loss/loss regime $\gamma<0$, $\delta<0$ is yet of physical significance, it is characterized by the decay of solutions for all initial data. Therefore, our primary concern will be the former regime which is characterized by non-trivial long-time asymptotic behavior.

We will consider equation \eqref{dnls_gl} both in infinite and finite and lattices. The first case concerns the system 
supplemented with nonzero boundary conditions at infinity 
\begin{equation}\label{nv}
	\lim_{|n|\rightarrow\infty} u_n(t) =A\exp(\rmi A^2t),\;\;A>0,
\end{equation}
where $A$ is the amplitude of the constant background. Note that the specific choice of  boundary conditions \eqref{nv} is rather general.  The specific choice is imposed by the the unit strength of the nonlinearity in the left-hand side of \eqref{dnls_gl}, which is selected for simplicity. In the case of a general nonlinearity $f(|u_n|^2)u_n$, a consistent problem requires that the right-hand side of the boundary conditions \eqref{nv} must have the form $A\exp[\rmi f(A^2) t]$, for a sufficiently smooth function, so that $\lim_{|n|\rightarrow\infty} f(|u_n|^2)=f(\lim_{|n|\rightarrow\infty}|u_n|^2)=f(A^2)$.

On the one hand, there is a strong motivation from the mathematical point of view for the study of \eqref{dnls_gl} with non-zero boundary conditions like \eqref{nv}. The problem \eqref{dnls_gl}-\eqref{nv} is drastically different from the case of zero boundary conditions and is associated with the emergence of fascinating dynamics and a wide class of localized phenomena which may occur on the top of the finite background $A$, due to the ubiquitous phenomenon of modulation instability (MI). Numerical evidence is provided in \cite[Appendix C, pg. 903]{blmt2018} for the Hamiltonian non-integrable DNLS ($\gamma=\delta=0$) that the system exhibits the dynamical behavior of modulationally unstable integrable models, like the NLS partial differential equation and the Ablowitz-Ladik (AL) lattice, when supplemented with the boundary conditions \eqref{nv} and the relevant initial conditions of the form of a localized perturbation of a constant background \cite[initial data 8(a)-8(c), pg. 891]{blmt2018}. The potential persistence of this behavior in the presence of gain and loss effects will be a main question which will be investigated and discussed below. On the other hand, due to the physical relevance of the model \eqref{dnls_gl} in the context of optics, it is important to remark that {\em rogue wave-type structures} have been observed in remarkable experiments \cite{Nat2007}, \cite{Nat2016}. In the discrete realm, such structures are described by the famous class of discrete rational solutions of the Ablowitz-Ladik lattice \cite{APS} which satisfy nonzero boundary conditions like \eqref{nv}. The important question of the potential emergence, even robustness of such waveforms in the dynamics of non-integrable DNLS  equations initiated new studies as in \cite{Tsironis2009} in the framework of the Salerno lattice and its generalization which may interpolate between the DNLS and the AL system \cite{SDP2018}; see also \cite {EP}, \cite{WM} for the existence, stability and dynamics of periodic and quasi-periodic solutions on the top of a finite background for this modified Salerno model. The potential emergence of extreme waveforms in the dynamics of the dissipative DNLS \eqref{dnls_gl} will be also investigated in the present work. 

The second case concerns the system supplemented with periodic boundary conditions. For instance, we will consider an arbitrary number of $N+1$ oscillators which are placed equidistantly on the interval $\Omega=[-L,L]$ of length $2L.$ We denote  $k=1/h^2$ the discretization parameter, where $h=2L/N$ is the lattice spacing. The spatial coordinate of the oscillators is given then by $x_n=-L+nh$, $n=0,1,2,\ldots,N$. Then we supplement Eq.~\eqref{dnls_gl}  with periodic boundary conditions 
\begin{equation}\label{eq02}
	u_n =u_{n+N}
\end{equation} 
and initial conditions $u_n(0) = u_{n,0}\in {\ell}^p_{\mathrm{per}}$, the space of periodic sequences (see \cite{GL1} and below). 
We remark that apart of the physical significance of the periodic boundary conditions \eqref{eq02}, the periodic problem serves also as a finite lattice approximation of the infinite lattice problem in the case of the nonzero boundary conditions \eqref{nv} (but also for the vanishing boundary conditions $A=0$)  when $L$ is sufficiently large. 

However, regarding even its local in time solvability, the problem \eqref{dnls_gl}-\eqref{nv} is fundamentally different than its periodic counterpart. We prove that the problem \eqref{dnls_gl}-\eqref{nv} admits a unique, at least a local in time solution {\em if and only if $A=A_*:=\sqrt{-\gamma/\delta}$}. In other words, spatially localized solutions on the top of the finite background $A>0$ exist if and only if $A$ is fixed to the ``critical value'' $A_*$. Apart of its independent interest, this result raises a serious warning concerning the approximations of the infinite lattice problem by the finite lattice, as the periodic one: when $L$ is sufficiently large one may simulate the  dynamics of the problem  with non-zero boundary conditions \eqref{nv} for arbitrary $A>0$, by triggering initial conditions of the form 
\begin{eqnarray}
	\label{apinc}
	\lim_{|n|\rightarrow\infty} u_n(0)=A,
\end{eqnarray} 
with a sufficiently fast rate of decay of  $u_n(0)$ on $A$, 
in order to make the error induced by the periodic boundary conditions negligible since the initial data and the corresponding solutions will satisfy them only asymptotically. It is evident that the numerical solutions which can be produced for an arbitrary $A$, do not capture the crucial restriction for the solvability of the infinite lattice problem stated above. 

In  light of the above result for the infinite lattice, the periodic problem \eqref{dnls_gl}-\eqref{eq02} deserves much attention. It is shown that the relevant finite dimensional dynamical system posseses a global attractor and its spatially averaged $l^2$-norm is uniformly bounded by $A_*$.  
In this paper, to the best of our knowledge, we provide a novel argument that characterizes the global attractor. Specifically, we establish that the global attractor is a time-periodic orbit defined by the unique non-trivial plane wave solution with a constant amplitude denoted as $A_*$, a frequency of $\tilde{\omega}$, and a wave number of $q$. This solution satisfies the dispersion relation given by:
\begin{eqnarray*}
	\tilde{\omega} = 4k\sin^2\Big(\frac{hq}{2}\Big)-A_*^2.
\end{eqnarray*}
To implement this argument, we first demonstrate that the attractor represents the $\pmb{\omega}$-limit set of all initial conditions stemming from solutions in the form of:
\begin{equation*}
	u_w(x_n,t) = A(t)\exp\left[\rmi(q x_n - \Omega (t)\right],\quad t\in\mathbb{R}.
\end{equation*}
What sets our argument apart is that we prove convergence not only in terms of amplitudes, where $A(t)\rightarrow A_*$ as $t\rightarrow\infty$ but also through a detailed analysis of the exact frequency function $\Omega(t)$. We determine its behavior in relation to its slant asymptotes, ultimately establishing that $\Omega(t)\rightarrow \Omega_{\infty}(t) := \tilde{\omega}t$. We extend this result to cover the remaining initial conditions through a finite-dimensional representation using their discrete Fourier series. Secondly, we conduct a MI  analysis for the plane wave attractor, demonstrating its potential for instability. By combining the convergence results with the MI analysis, we establish that this instability is transient.

To illustrate the convergence and transient MI effects, we perform numerical simulations for two representative scenarios: plane-wave initial conditions and localized ones superimposed on a finite background. These numerical findings, which align closely with the theoretical results, also showcase the evolution of the initial conditions' spectrum. Of particular interest is the case of unstable wave numbers, where the system ultimately selects a node from within the stability band upon reaching convergence to the attractor.

The analysis presented above, particularly in the context of numerical simulations for the periodic lattice, underscores that any localized phenomena arising due to  MI are inherently transient. However, within a finite time frame before reaching ultimate equilibrium, these localized phenomena can be notably intriguing.

For localized initial conditions superimposed on top of $A=A_*$, we observe the emergence of waveforms similar to Peregrine solitons as initial events. Subsequently, at later time intervals, spatiotemporal patterns develop due to the MI of the underlying background. These patterns bear a remarkable resemblance to those observed in the dynamics of localized initial data in the integrable Ablowitz-Ladik lattice \cite{blmt2018}. They exhibit the universal structure described in \cite{bm2017} during the nonlinear stage of MI. This structure consists of two outer, quiescent sectors corresponding to the supporting background, separated by a wedge-shaped central region characterized by modulated periodic oscillations. The proximity of these dynamics is noteworthy, considering that DNLS \eqref{dnls_gl} represents a system that has moved ``two steps forward'' in breaking the integrability barrier defined by the AL system. DNLS \eqref{dnls_gl} is a dissipative perturbation of the non-integrable Hamiltonian DNLS. The stability of the quiescent sectors can be fully justified by the analytical results. Notably, $A=A_*$ represents the amplitude of the supporting background, which, as highlighted earlier, is the only amplitude supporting localized solutions in the infinite lattice. Moreover, for the periodic lattice used in the numerical simulations, $A=A_*$ corresponds to the amplitude of the globally asymptotically stable background. In the case where $A\neq A_*$ where solutions for the infinite lattice do not exist, the numerical simulations for the periodic lattice show that the MI pattern persists on the top, this time of the evolving background.  This continues until the amplitude reaches its limiting value $A_*$, at which point the entire MI structure disappears and is replaced by the attracting plane wave.

Motivated by the above results and our current investigations on the closeness between integrable and non-integrable lattice systems \cite{DJNa, DJNc,DJNb}, we examine the potential proximity between the solutions of the Ablowitz-Ladik lattice and the DNLS system \eqref{dnls_gl} when both systems are supplemented with periodic boundary conditions. While for bounded solutions the distance is always uniformly bounded, in the same fashion as in \cite{DJNa, DJNc,DJNb}, we prove that for finite times this distance grows at most linearly. Since the first events reminiscent to rogue waves occur at small time intervals, this growth is proved to be effective in measuring the distance between the rogue wave events emerging in the dynamics of \eqref{dnls_gl} and those defined by the analytical  discrete Peregrine soliton solution of the Ablowitz-Ladik lattice.

The paper is structured as follows. In Section~\ref{NVBC}, we provide the proof of the existence of solutions for the infinite lattice with nonzero boundary conditions. For the definition and properties of the sequence spaces that define the functional analytic set-up of the problem, we refer to \cite{GL1} and the references therein. In Section~\ref{GlobalA}, we present the results concerning the dynamics of the periodic lattice. Sections~\ref{num1} and \ref{num2} are devoted to the numerical results. In particular, in Section~\ref{num1}, we present the results of the numerical studies concerning the transient MI of the plane wave attractor. In Section~\ref{num2}, we present the numerical results for localized initial conditions, distinguishing between the case $A=A_*$, the only case permitted for the infinite lattice, and the case $A\neq A_*$, which can be relevant in the periodic lattice. In this section, we also provide the proof of the estimates for the distance between the solutions of the DNLS \eqref{dnls_gl} and the Ablowitz-Ladik periodic lattices, along with the corresponding numerical investigations. We conclude with Section~\ref{conclusions}, which summarizes our findings.

\section{The case of the infinite lattice with nonzero boundary conditions}
\label{NVBC}
Solutions of the initial value problem for the infinite lattice \eqref{dnls_gl} with the nonzero boundary conditions  and $\gamma>0$, $\delta<0$ exist if and only if $A=A_*$. This is the main result of this section stated in the following theorem. 
\begin{theorem}
\label{TH1}
Consider the DNLS system \eqref{dnls_gl} with $\gamma>0,\delta<0$ in the infinite lattice supplemented with the nonzero boundary conditions \eqref{nv}. We also assume that the initial condition satisfies \eqref{apinc}. Then the system has a solution if and only if $A=A_*$. The solution is unique. 
\end{theorem}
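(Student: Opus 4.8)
The plan is to pass to a reference frame that removes the constant background and to show that the resulting perturbation equation is a genuine ODE in the sequence space underlying \cite{GL1}, whose right-hand side carries a residual that cannot belong to that space unless $A=A_*$. In this sense the whole statement is driven by one elementary observation, which I would record first.

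\emph{The background identity.} Write $\beta_n(t):=A\exp(\rmi A^2t)$, the constant-in-$n$ sequence prescribed as the limit in \eqref{nv}. Since the discrete Laplacian annihilates $n$-independent sequences, substituting $u_n=\beta_n$ into \eqref{dnls_gl} leaves the residual
\[
R_n:=\rmi\dot\beta_n+|\beta_n|^2\beta_n-\rmi\gamma\beta_n-\rmi\delta|\beta_n|^2\beta_n=-\rmi(\gamma+\delta A^2)\beta_n=-\rmi\delta\,(A^2-A_*^2)\,\beta_n ,
\]
using $\rmi\dot\beta_n=-A^2\beta_n$ and $\gamma=-\delta A_*^2$. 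Thus $R_n$ is independent of $n$, with $|R_n|=|\delta|\,|A^2-A_*^2|\,A$, and $R\equiv 0$ if and only if $A=A_*$; equivalently, $\beta_n(t)$ solves \eqref{dnls_gl} exactly precisely when $A=A_*$.

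\emph{The perturbation equation and the ``only if'' part.} Set $w_n(t):=u_n(t)-\beta_n(t)$. Because \eqref{nv} pins the asymptotic state to $\beta_n(t)$ and \eqref{apinc} supplies a decaying initial datum, any admissible solution must have $w(t)$ in the relevant sequence space $\ell^p$ ($1\le p<\infty$), with the regularity built into the notion of solution ($w,\dot w\in\ell^p$; see \cite{GL1}). Subtracting the background identity from \eqref{dnls_gl}, and again using that the Laplacian does not see the $n$-independent part, yields
\[
\rmi\dot w_n+k(w_{n+1}-2w_n+w_{n-1})-\rmi\gamma w_n+(1-\rmi\delta)\bigl(|u_n|^2u_n-|\beta_n|^2\beta_n\bigr)=-R_n,\qquad u_n=\beta_n+w_n .
\]
Now every term on the left lies in $\ell^p$: the lattice Laplacian is a bounded operator on $\ell^p$; $w,\dot w\in\ell^p$; and the cubic difference is, by a routine algebraic expansion, a finite sum of monomials each carrying at least one factor $w_n$ or $\overline{w_n}$, hence an element of $\ell^p$ since $\beta\in\ell^\infty$ and $\ell^p\cdot\ell^\infty\subset\ell^p$. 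Evaluating at any fixed time therefore forces $R\in\ell^p$; but $R$ is a nonzero constant sequence unless $A^2=A_*^2$, so necessarily $A=A_*$.

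\emph{The ``if'' part and uniqueness.} Conversely, if $A=A_*$ then $R\equiv 0$ and the perturbation equation becomes an ODE $\rmi\dot w=\mathcal{L}w+\mathcal{N}(t,w)$ in $\ell^p$ with $\mathcal{L}$ a bounded linear operator (lattice Laplacian plus a multiple of the identity) and $\mathcal{N}(t,\cdot)$ a polynomial nonlinearity whose coefficient sequence $\beta(t)$ is bounded and smooth in $t$; hence $\mathcal{N}$ is locally Lipschitz on $\ell^p$, uniformly for $t$ in compact intervals. The Picard--Lindel\"of theorem for ODEs in Banach spaces then gives, for every decaying initial perturbation, a unique maximal solution $w\in C^1([0,T_{\max}),\ell^p)$, and $u_n:=\beta_n+w_n$ is the asserted unique local-in-time solution of \eqref{dnls_gl}--\eqref{nv}; uniqueness of $u$ is exactly uniqueness of $w$. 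The only genuinely non-routine point is justifying that $\dot w(t)\in\ell^p$ (equivalently, that every term of the perturbation equation is evaluated in $\ell^p$), which is where one must invoke the precise functional-analytic framework of \cite{GL1}; once that is in place, the constant-sequence-versus-$\ell^p$ dichotomy closes the ``only if'' direction, and the ``if'' direction is a textbook Banach-space ODE argument, the boundedness of the discrete Laplacian being essential.
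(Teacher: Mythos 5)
Your proposal is correct and follows essentially the same route as the paper: both reduce to a perturbation $w=u-\beta$ (the paper does this in two steps, a gauge transformation followed by subtracting the constant $A$, which is equivalent to your one-step subtraction of the time-dependent background), identify the obstruction as the constant-in-$n$ residual $-\rmi(\gamma+\delta A^2)\beta_n$ — the paper's term $\rmi\gamma A+\rmi\delta A^3$ inside $\mathcal{F}_2$ — which cannot lie in $\ell^p$ unless $A=A_*$, and then run Picard--Lindel\"of in $\ell^p$ for existence and uniqueness when $A=A_*$. Your explicit isolation of the residual $R_n$ as ``the background solves the equation iff $A=A_*$'' is a slightly cleaner packaging of the same argument, but there is no substantive difference.
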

\begin{proof}
First, we apply to the DNLS \eqref{dnls_gl} the gauge transformation
\begin{equation}
	\label{gauge1}
u_n(t)=\psi_n(t)\exp(\rmi A^2t),	
\end{equation}
(in order to make the  boundary conditions time independent), yielding the system in the form
\begin{equation}
	\label{dnls_gla}
\rmi\dot{\psi}_n  + k(\psi_{n+1} - 2\psi_n+\psi_{n-1})-A^2\psi_n + |\psi_n|^2 \psi_n = \rmi\gamma \psi_n + \rmi\delta |\psi_n|^2 \psi_n,
\end{equation}
which satisfies the boundary conditions
\begin{equation}
	\label{Vbc}
	\lim_{|n|\rightarrow\infty} \psi_n(t) =A.
\end{equation}
Next, we apply to the system \eqref{dnls_gla}, the change of variables
\begin{equation}
	U_n={\psi}_n-A, \,\,\, n\in {\mathbb{Z}},\label{shift}
\end{equation}
The system for the new variable $U_n$ reads as
\begin{align}
\label{dnls_glb}
\mathrm{i}\dot{U}_n+k(U_{n+1} - 2U_n+U_{n-1})-A^2(U_n+A) &+|U_n+A|^2(U_n+A)\nonumber\\ &=\mathrm{i}\gamma (U_n+A)+\mathrm{i}\delta|U_n+A|^2(U_n+A),\;\;
\end{align}
where $U_n$, due to the non-zero boundary conditions \eqref{Vbc}, should satisfy {\em the vanishing boundary conditions at infinity}
\begin{equation}
	\label{Vbc1}
	\lim_{|n|\rightarrow\infty} U_n(t) =0.
\end{equation}
Thus, the system should be examined for its solvability in the sequence spaces $\ell^p$,  for any $p\geq 1$. Let us recall the crucial embedding 
\begin{equation}
	\label{lp1}
	\ell^q\subset\ell^p,\quad \|u\|_{\ell^p}\leq \|u\|_{\ell^q}, \quad 1\leq q\leq p\leq\infty.
\end{equation}	
For the local existence of solutions, we may apply the generalized Picard-Lindel\"{o}f Theorem \cite[Theorem 3.A, pg. 78]{zei85a}. 
The discrete Laplacian $$\Delta_dU_n=k(U_{n+1} - 2U_n+U_{n-1}),$$ is a bounded linear operator $\Delta_d:\ell^p\rightarrow \ell^p$, for any $p\geq 1$, that is, there exists a constant $C>0$, such that
\begin{equation}
	\label{genlp8}
	\|\Delta_d U\|_{\ell^p}\leq C\|U\|_{\ell^p},\quad \mbox{for all}\quad U\in\ell^p.
\end{equation}
Next, we consider  the nonlinear operators
\begin{align*}
\mathcal{F}_1(U_n)&=-A^2(U_n+A)+|U_n+A|^2(U_n+A),\\
F_2(U_n)&=\mathrm{i}\gamma (U_n+A)+\mathrm{i}\delta|U_n+A|^2(U_n+A).
\end{align*}
For the operator $\mathcal{F}_1$, we observe that 
\begin{align*}
\mathcal{F}_1(U_n)&=-A^2U_n-A^3+(|U_n|^2+AU_n+A\overline{U}_n+A^2)(U_n+A)\\
&=|U_n|^2 U_n+A(U_n^2+2|U_n|^2)+A^2(U_n+\overline{U}_n).
\end{align*}
Using the inequality 
\begin{equation*}
	\left| |u_n|^2u_n-|v_n|^2v_n\right|\leq |v_n|^2|u_n-v_n|+|u_n|\left(|v_n|+|u_n|\right)|u_n-v_n|,
\end{equation*}	
and the embedding \eqref{lp1}, we may show that $\mathcal{F}_1:\ell^p\rightarrow\ell^p$, for any $p\geq 1$, is well defined and is Lipschitz continuous on bounded sets of $\ell^p$. That is, for any $U,V\in B_R$, with $B_R$ a closed ball of $\ell^p$ of center $0$ and radius $R$, we derive the existence of constants $K_1(A, R)$ and $K_{1,1}(A, R)$, such that 
\begin{align}
	\label{op1}
	\| \mathcal{F}_1(U)\|_{\ell^p}&\leq	K_{1}(A, R),\\
	\label{op2}
	\| \mathcal{F}_1(U)-\mathcal{F}_1(V)\|_{\ell^p}&\leq	K_{1,1}(A, R)\|U-V\|_{\ell^p}.
\end{align}
Similarly, for the operator $\mathcal{F}_2$, we have that 
\begin{equation}
\label{devq0}	
	\mathcal{F}_2(U_n)=\mathrm{i}\gamma U_n+\mathrm{i}\gamma A+\mathrm{i}\delta A^3+\mathrm{i}\delta\left\{|U_n|^2 U_n+A(U_n^2+2|U_n|^2)+A^2(2U_n+\overline{U}_n\right\}.
\end{equation}
Assume that $A=A_*$. Then, $\mathrm{i}\gamma A+\mathrm{i}\delta A^3=0$ and we can show that the operator $\mathcal{F}_2$ has the properties \eqref{op1}-\eqref{op2}. We may proceed  by recasting the system \eqref{dnls_glb} in the form of the abstract evolution equation 
\begin{align}
	\label{semifl1}
	\dot{U}=\mathcal{L}(U),\quad \mathcal{L}(U)=\mathrm{i}[-\Delta_d(U)-\mathcal{F}_1(U)+\mathcal{F}_2(U)],
\end{align}
and we will implement the aforementioned  Picard-Lindel\"{o}f Theorem on its equivalent integral formula
\begin{equation}
	\label{mildsl1}
	U(t)=U(0)+\int_{0}^{t}\mathcal{L}[U(s)]ds,
\end{equation}
to establish the existence of a unique solution $U\in l^p$, for any $p\geq 1$, when the initial condition $U(0)\in \ell^p$. This is the case, since the operator $\mathcal{L}:\ell^p\rightarrow\ell^p$ has the properties \eqref{op1}-\eqref{op2} and the conditions of the Picard-Lindel\"{o}f Theorem are satisfied. The solution has the following properties:  There exists some $T^*(U(0))>0$ such that the  corresponding initial value problem for \eqref{dnls_glb}-\eqref{Vbc1}, has a unique solution which is continuously differentiable with respect to time, i.e., $U\in C^1([0,T],\ell^p)$  for all $0<T<T^*(U(0))$. In addition, the following alternatives hold: Either $T^*(U(0))=\infty$ (global existence) or $T^*(U(0))<\infty$ and $\lim_{t\uparrow T^*(U(0))}\|U(t)\|_{\ell^p}=\infty$ (collapse). Furthermore the solution $U$ depends continuously on the initial condition $U(0)\in \ell^p$, with respect to the norm of $C([0,T],\ell^p)$. Using the change of variables \eqref{gauge1} and \eqref{shift}, we establish the existence of a unique solution of the original problem \eqref{dnls_gl}-\eqref{nv},
\begin{equation}
\label{sol}
u_n(t)=(U_n(t)+A_*)\exp(\rmi A_*^2t).
\end{equation}
For the reverse direction of the proof, we have to show that if a solution of the initial boundary value \eqref{dnls_gl}-\eqref{nv}-\eqref{apinc} problem exists, then $A= A_*$. We argue by contradiction: Assume that a solution of the problem \eqref{dnls_gl}-\eqref{nv}-\eqref{apinc} exists for all $t\in [0,T]$, for some $T>0$ and that  $A\neq A_*$. Then, through the same transformations \eqref{gauge1}-\eqref{shift}, a solution $U$ of the problem \eqref{dnls_glb}-\eqref{Vbc} should exist for all $t\in [0,T]$,  and  $U\in C^1([0,T],\ell^p)$, which means that $\|\dot{U}(t)\|_{\ell^p}<\infty$, for all $t\in [0,T]$. Hence, due to \eqref{semifl1}, $\|\mathcal{L}(U)\|_{\ell^p}<\infty$ for all $t\in [0,T]$. This can't be valid because if  $A\neq A_*$, the operator $\mathcal{F}_2$ is not bounded on bounded sets of $\ell^p$ as it follows from \eqref{devq0}. Indeed, for $U$ the solution of the problem which satisfies $\|U(t)\|_{\ell^p}\leq R$, for all $t\in [0,T]$, the boundedness of $\mathcal{F}_2$ in bounded sets of $\ell^p$ would imply the counterpart of  \eqref{op1},
\begin{equation}
\label{op3}	
\| \mathcal{F}_2(U(t))\|_{\ell^p}\leq	K_{2}(A, R),\quad \mbox{for all}\quad  t\in [0,T].
\end{equation}
Since from \eqref{devq0},
\begin{equation}	
\label{op4}
\mathrm{i}\gamma A+\mathrm{i}\delta A^3=-\mathcal{F}_2(U_n)+\mathrm{i}\gamma U_n+\mathrm{i}\delta\left\{|U_n|^2 U_n+A(U_n^2+2|U_n|^2)+A^2(2U_n+\overline{U}_n\right\},
\end{equation}
\eqref{op3} and the fact that the rest of the terms of the right-hand side of \eqref{op4} are bounded in $\ell^p$ for all $t\in [0,T]$, would imply that the left hand side of \eqref{op4} is summable, and this  is impossible. Therefore, the  initial-boundary value problem  \eqref{dnls_gl}-\eqref{nv}-\eqref{apinc} has a solution if and only if $A=A_*$.
\end{proof}
\subsection{Remarks on Theorem \ref{TH1} for finite lattice approximations}
The finite lattice approximations are defined either in the spaces of periodic sequences with  period $N$,  denoted by
\begin{equation*}
	{\ell}^p_{per}:=\bigg\{U=(U_n)_{n\in\mathbb{Z}}\in\mathbb{R}:\quad U_n=U_{n+N},\quad
	\|U\|_{\ell^p_{\mathrm{per}}}:=\bigg(h\sum_{n=0}^{N-1}|U_n|^p\bigg)^{\frac{1}{p}}<\infty\bigg\}, \quad 1\leq p\leq\infty,
\end{equation*}	
or in the case of the Dirichlet boundary conditions, in the finite dimensional subspaces of $\ell^2$,
\begin{equation*}
	{\ell}^p_{0}:=\bigg\{U=(U_n)_{n\in\mathbb{Z}}\in\mathbb{R}:\quad U_0=U_{N}=0,\quad
	\|U\|_{\ell^p_{0}}:=\bigg(h\sum_{n=1}^{N-1}|U_n|^p\bigg)^{\frac{1}{p}}<\infty\bigg\}, \quad 1\leq p\leq\infty.
\end{equation*}
If $\mathcal{X}^p$ is either one of the above spaces, their norms  are  equivalent, according to the inequality
\begin{equation}
	\label{equi}
	\|U\|_{\mathcal{X}^q}\leq \|U\|_{\mathcal{X}^p}\leq N^{\frac{(q-p)}{qp}}\|U\|_{\mathcal{X}^q},\quad 1\leq p\leq q<\infty.
\end{equation}
In this finite dimensional set-up, it is evident from the definition of the operators $ \mathcal{F}_1$ and $\mathcal{F}_2$ that Theorem \ref{TH1}  is valid for any $A$ since the operators $\mathcal{F}_1,\mathcal{F}_2:\mathcal{X}^p\rightarrow\mathcal{X}^p$ are well defined and locally Lipschitz continuous for any $A$. Therefore, finite lattice approximations do not capture the essential necessary and sufficient conditions on  $A$,  for the existence of solutions of the infinite lattice which is proved in Theorem \ref{TH1}.
\section{The case of the lattice with periodic boundary conditions}
\setcounter{equation}{0}
\label{GlobalA}
In this section we consider the case of the finite lattice \eqref{dnls_gl} when supplemented with the periodic boundary conditions \eqref{eq02}. This finite dimensional system, apart of its physical significance may serve as one of the main  finite lattice approximations of the problem of the infinite lattice supplemented with the nonzero boundary conditions discussed in the previous section. In this connection, the value of the background amplitude $A_*$, which is the only one which may support localized solutions on top of the background in the infinite lattice, chiefly governs the dynamics. The phase space of the system is
the Hilbert space $\ell^2_{per}$ endowed with the scalar product 
\begin{equation*}
	\label{lp2}
	(\phi,\psi)_{\ell^2_{\mathrm{per}}}=h\,\mathrm{Re}\sum_{n=0}^{N-1}\phi_n\overline{\psi}_n,\quad\phi,\,\psi\in\ell^2_{\mathrm{per}}.
\end{equation*}
 
\begin{theorem}\label{T1}~\\ \vspace{-0.5cm}
	\begin{enumerate}
		\item
(\textit{The global attractor in $\ell^2_{per}$}). Let $u_n(0)=u_{n,0} \in\ell^2_{per},$ and  $\gamma>0$, $\delta<0.$ Then, the solution of the initial value problem for \eqref{dnls_gl}--\eqref{eq02} $u_n\in C^1([0,\infty),\ell^2_{per})$. Furthermore, 
for the dynamical system 
\begin{equation}
\label{wds1}
\varphi(t, u_{n,0}): \ell^2_{per}\rightarrow \ell^2_{per},~~\varphi(t, u_{n,0})=u_n(t),
\end{equation}
there exists a global attractor and the averaged power $\di P_a[u(t)]:=\frac{1}{N}\sum_{n=0}^{N-1}|u_n(t)|^2$  of the solutions satisfy the estimate
\begin{equation}\label{bound1}
	\lim\sup_{t\to\infty}P_a[u(t)]\leq A_*^2.
\end{equation}
\item  The system \eqref{dnls_gl}-\eqref{eq02} admits  solutions of the form  
\begin{equation}\label{plane}
	w_n(t):=w(x_n,t) = A(t)\exp\left[\rmi(q x_n - \Omega (t)\right],\quad t\in\mathbb{R},
\end{equation}
where   $q = K\pi/L$, $K\in\bbN$ is the constant wavenumber and the time variable frequency function $\Omega(t)$ and  amplitude $A(t)$ satisfy the system of equations
\begin{align}\label{dispersion}
	\Omega(t) & = 4k\sin^2\Big(\frac{hq}{2}\Big)t-\Theta (t), \quad \dot{\Theta}(t)=A^2(t)\\
	\dot{A}(t) & = \gamma A(t) + \delta A^3(t).
	\label{eqbern}
\end{align}
for all $t\in\mathbb{R}$.  The solutions \eqref{plane} have the following property under the flow \eqref{wds1}: The limit set
\begin{equation}
	\label{limitp}
	\pmb\omega(w_n(0))=\mathcal{C}_*,
\end{equation} 
where $\mathcal{C}_*$ is the time-periodic orbit defined by the only non-trivial plane wave solution of constant amplitude $A_*$ with frequency $\tilde{\omega}$ and wave number $q$ satisfying \eqref{dispersion}, when we set in \eqref{dispersion}  $A(t)\equiv A_*$, that is,
\begin{equation}
\label{pwfor15th}
	\tilde{\omega} = 4k\sin^2\Big(\frac{hq}{2}\Big)-A_*^2.
\end{equation}
For the solutions \eqref{plane}, the limiting bound \eqref{bound1} is exact, in the sense that
\begin{equation}\label{boundp}
	\lim_{t\to\infty}P_a[w_n(t)]= A_*^2.
\end{equation}
\end{enumerate}
\end{theorem}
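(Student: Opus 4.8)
The plan is to treat the two parts in turn, both of which ultimately reduce to the scalar logistic ODE $\dot{Z}=2\gamma Z+2\delta Z^{2}$ (equivalently $\dot{A}=\gamma A+\delta A^{3}$ with $Z=A^{2}$), whose positive solutions converge monotonically to the equilibrium $Z=-\gamma/\delta=A_*^{2}$.

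For Part 1, local existence and $C^{1}$ regularity follow exactly as in Theorem~\ref{TH1}: in the finite-dimensional space $\ell^{2}_{per}$ the discrete Laplacian is bounded and the nonlinearities $\mathcal{F}_1,\mathcal{F}_2$ are locally Lipschitz, so the generalized Picard--Lindel\"of theorem applies (and, in contrast with the infinite lattice, no restriction on $A$ appears since there are no boundary terms to be cancelled). To upgrade to global existence and to extract an absorbing set, I would test \eqref{dnls_gl} with $\bar{u}_n$, sum over one period $n=0,\dots,N-1$, and take real parts. A discrete summation by parts shows that $\sum_{n=0}^{N-1}\bar{u}_n\,k(u_{n+1}-2u_n+u_{n-1})$ is real, so the dispersive term drops out and, writing $X(t)=\sum_{n=0}^{N-1}|u_n(t)|^{2}$, one obtains the identity $\dot{X}=2\gamma X+2\delta\sum_{n=0}^{N-1}|u_n|^{4}$. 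The power-mean (Cauchy--Schwarz) inequality $\sum_{n}|u_n|^{4}\ge X^{2}/N$ combined with $\delta<0$ turns this into the differential inequality $\dot{X}\le 2\gamma X+\tfrac{2\delta}{N}X^{2}$. Comparison with the associated logistic ODE gives $X(t)\le Y(t)$ with $Y(t)\to NA_*^{2}$, so $X$ stays bounded (hence $T^{*}=\infty$ and $u\in C^{1}([0,\infty),\ell^{2}_{per})$) and $\limsup_{t\to\infty}X(t)\le NA_*^{2}$, which is precisely \eqref{bound1} after dividing by $N$. This bound also yields a bounded absorbing ball; since $\ell^{2}_{per}$ is finite-dimensional, bounded sets are precompact, the semiflow $\varphi(t,\cdot)$ is continuous by continuous dependence on initial data, and the standard existence theorem for the global attractor of a point-dissipative continuous semiflow then produces $\mathcal{A}=\bigcap_{s\ge0}\overline{\bigcup_{t\ge s}\varphi(t,B)}$.

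For Part 2, I substitute the ansatz \eqref{plane} into \eqref{dnls_gl}; using $x_{n\pm1}=x_n\pm h$ the lattice Laplacian term collapses to $-4k\sin^{2}(hq/2)\,w_n$, and after dividing by $\exp[\rmi(qx_n-\Omega(t))]$ and separating real and imaginary parts one gets $A\dot{\Omega}=4kA\sin^{2}(hq/2)-A^{3}$ and $\dot{A}=\gamma A+\delta A^{3}$, i.e.\ the system \eqref{dispersion}--\eqref{eqbern}, while $N$-periodicity of $x\mapsto e^{\rmi qx}$ on $[-L,L]$ forces $2Lq\in2\pi\mathbb{Z}$, that is $q=K\pi/L$. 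For $A(0)>0$ the Bernoulli equation \eqref{eqbern} is solved explicitly: $A(t)\to A_*$ with $|A^{2}(t)-A_*^{2}|$ decaying exponentially, so $\Theta(t)=\int_0^{t}A^{2}(s)\,ds=A_*^{2}t+c+o(1)$ with $c=\int_0^{\infty}(A^{2}(s)-A_*^{2})\,ds$ finite, whence $\Omega(t)-\tilde{\omega}t\to-c$ with $\tilde\omega$ as in \eqref{pwfor15th}; this is the announced slant asymptote. To identify the limit set, write $w_n(t)=A(t)\,e^{\rmi qx_n}\,e^{-\rmi\Omega(t)}$: along any $t_j\to\infty$ one has $A(t_j)\to A_*$ and, by compactness of the unit circle, a subsequence of $e^{-\rmi\Omega(t_j)}=e^{-\rmi\tilde{\omega}t_j}e^{\rmi c}(1+o(1))$ converges, so every element of $\pmb\omega(w_n(0))$ has the form $A_*e^{\rmi(qx_n+\phi)}$ and hence lies on $\mathcal{C}_*$; conversely, since $\{-\tilde{\omega}t\bmod 2\pi:t\ge0\}$ is dense (or reduces to $\{0\}$ when $\tilde{\omega}=0$), every point of $\mathcal{C}_*$ is realized as such a limit, giving \eqref{limitp}. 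Finally $P_a[w_n(t)]=\tfrac1N\sum_{n=0}^{N-1}A^{2}(t)=A^{2}(t)\to A_*^{2}$, which is \eqref{boundp}.

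I expect the only genuinely delicate point to be this two-sided identification of $\pmb\omega(w_n(0))$ with the periodic orbit $\mathcal{C}_*$: the amplitude convergence $A(t)\to A_*$ is immediate, but one must control the phase through the non-autonomous frequency $\Omega(t)$, prove it differs from its slant asymptote $\tilde{\omega}t$ by a convergent remainder, and then combine compactness of the circle with the density of $\{\tilde{\omega}t\bmod 2\pi\}$ to recover the entire orbit rather than a single accumulation point. The remaining ingredients — the energy identity, the logistic comparison argument, the dispersion relation computation, and the invocation of the abstract attractor theorem in a finite-dimensional phase space — are routine.
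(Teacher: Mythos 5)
Your proposal is correct and follows essentially the same route as the paper: the power balance law obtained by pairing the equation with the solution, the Cauchy--Schwarz reduction to a Bernoulli/logistic differential inequality giving \eqref{bound1} and the absorbing ball, and for Part 2 the reduction of the plane-wave ansatz to the amplitude--phase system \eqref{dispersion}--\eqref{eqbern} followed by the slant-asymptote analysis of $\Theta(t)=\int_0^t A^2(s)\,ds$. Your two-sided identification of $\pmb\omega(w_n(0))$ with the whole periodic orbit $\mathcal{C}_*$ (using compactness of the circle and the fact that the phase sweeps it entirely) is in fact slightly more careful than the paper's treatment of that step, but the underlying argument is the same.
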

\begin{proof}
1. We multiply equation \eqref{dnls_gl} by $-\mathrm{i}u_n$ in the $\ell^2_{per}$-inner product. Using that $\di (\Delta_{d}u,\mathrm{i}u)_{\ell^2_{\mathrm{per}}}=-\mathrm{i}\sum_{n=0}^{N-1}|u_{n+1}-u_{n}|^2$,  by keeping the real  parts of the resulting equation, we derive the following power balance law
\begin{equation}\label{balance_eq}
\frac{d}{dt}\bigg(h\sum_{n=0}^{N-1}|u_n|^2\bigg) = 2\gamma\, h\sum_{n=0}^{N-1}|u_n|^2 +2\delta\, h\sum_{n=0}^{N-1}|u_n|^4.
\end{equation}
Note that for $\gamma = \delta=0$ we have the conservation of the power, which is one of the  conserved quantities of the conservative DNLS . 

Let $\gamma>0$ and $\delta:=-\tilde{\delta}<0$.   The second term of the right-hand side of Eq.~\eqref{balance_eq} can be estimated as 
$$
-2\tilde{\delta}\,h\sum_{n=0}^{N-1}|u_n|^4\leq -\frac{2\tilde{\delta}}{Nh}\bigg(h\sum_{n=0}^{N-1}|u_n|^2\bigg)^2,
$$ 
where we used the Cauchy--Schwarz inequality
$$
\sum_{n=0}^{N-1}|u_n|^2 \le \sqrt{N}\bigg(\sum_{n=0}^{N-1}|u_n|^4\bigg)^{1/2}.
$$
 Consequently,  Eq.~\eqref{balance_eq} turns to a Bernoulli's differential inequality 
\begin{equation}\label{eq23}
\frac{d}{dt}\bigg(h\sum_{n=0}^{N-1}|u_n|^2\bigg)\leq 2\gamma\,h\sum_{n=0}^{N-1}|u_n|^2-\frac{2\tilde{\delta}}{Nh}\bigg(h\sum_{n=0}^{N-1}|u_n|^2\bigg)^2,
\end{equation}
and equivalently, when diving by $Nh$, the equation for 
$P_a[u(t)]$
\begin{equation}\label{eq24}
\frac{d}{dt}P_a[u]\leq 2\gamma P_a[u]-2\tilde{\delta}P^2_a[u].
\end{equation}
Setting $\phi= 1/P_a[u]$ we see that $\phi$ satisfies 
$$
\frac{d}{dt}\phi+2\gamma \phi\geq 2\tilde{\delta},
$$
which (with the aid of the integrating factor $\exp(2\gamma t)$) gives
$$
\phi(t)\geq \phi(0)\exp(-2\gamma t)+\frac{\tilde{\delta}}{\gamma}\left[1-\exp\left(-2\gamma t\right)\right]$$
and finally 
\begin{equation}\label{eq27}
P_a[u(t)] \leq \frac{1}{P_a[u(0)]^{-1}\exp(-2\gamma t)+\frac{\tilde{\delta}}{\gamma}\left[1-\exp\left(-2\gamma t\right)\right]}, \quad  \forall t\geq 0.
\end{equation}
The latter means that  $P_a[u(t)]$ is uniformly bounded. 
Letting $t\to\infty,$ we conclude that  $P_a[u(t)]$ satisfies the estimate \eqref{bound1}.

From \eqref{bound1}, we deduce that the dynamical system  \eqref{wds1} has bounded orbits $\forall t\in[0,\infty)$. Let $\mathcal{B}:=\{u_{n,0} \in\ell^2_{per}:P_a[u] \leq R^2\}$ be an arbitrary closed ball in $\ell^2_{per}$.  
The ball $\mathcal{B}_a(0,\rho):=\{u\in\ell^2_{per}: P_a[u]\leq\rho^2,\;\;\;\rho^2>A_*^2\}$ is an absorbing set for the dynamical system  $\varphi(t, u_{n,0})$: there exists $T^*(\mathcal{B},\mathcal{B}_a)>0$ such that  for any $t\geq T^*(\mathcal{B},\mathcal{B}_a)$ it holds that $P_a[u]\leq \rho^2$ and $\varphi(t, \mathcal{B})\subset\mathcal{B}_a.$
Hence, we may define the $\pmb{\omega}$-limit set in $\ell^2_{per}$, for any bounded set $\mathcal{B}$, 
\begin{equation*}
\pmb{\omega}(\mathcal{B}) = \bigcap_{s\geq 0}\overline{\bigcup_{t\geq s}\varphi(t, \mathcal{B})}.
\end{equation*}
Since the dynamical system is finite-dimensional the above limit set defines its compact global attractor. 

2. We start by seeking solutions of the form
\begin{equation}
	\label{pwfor1}
u_n(t)=W(t)\exp\left[\rmi(q x_n - \tilde{\omega} t)\right],\quad \tilde{\omega}\in\mathbb{R}.
\end{equation}
By substituting \eqref{pwfor1} to \eqref{dnls_gl} we derive the equation for $W(t)$
\begin{equation}
	\label{pwfor2}
\rmi\dot{W}+\lambda W+|W|^2W=\rmi\gamma W+\rmi\delta|W|^2W,\quad \lambda=\tilde{\omega}-4k\sin^2\Big(\frac{hq}{2}\Big).
\end{equation}
The linear term $\lambda W$ will be removed by the gauge transformation
\begin{equation}
	\label{pwfor3}
W(t)=\Phi(t)\exp(\rmi\lambda t),
\end{equation}
applied to \eqref{pwfor2}. Then, the equation for $\Phi(t)$ is
\begin{equation}
\label{pwfor4}
\rmi\dot{\Phi}+|\Phi|\Phi=\rmi\gamma \Phi+\rmi\delta |\Phi|^2\Phi.
\end{equation}
The last transformation we will apply is 
\begin{equation}
	\label{pwfor5}
\Phi(t)=A(t)\exp[\rmi\Theta (t)],\quad A,\Theta\in C^1(\mathbb{R}),
\end{equation}
to \eqref{pwfor4}.  We derive the equation for $A(t)$,
\begin{equation}
	\label{pwfor6}
\rmi\dot{A}-\dot{\Theta}A+A^3=\rmi\gamma A+\rmi\delta A^3.
\end{equation}
The choice
\begin{equation}
	\label{pwfor7}
	\dot{\Theta}=A^2,
\end{equation}
eliminates the second and the third term of the left-hand side of \eqref{pwfor6},  in order to obtain \eqref{eqbern}. Then returning to the starting form of solutions \eqref{pwfor1}, by applying the transformations \eqref{pwfor3} and \eqref{pwfor5} one after another, we build the solution \eqref{plane} with the system \eqref{dispersion}-\eqref{eqbern}. Note that in this process, the term $\tilde{\omega} t$ was eliminated resulting in equation \eqref{dispersion} for $\Omega (t)$. We could start with the ansatz $u_n(t)=W(t)\exp(\rmi q x_n)$ instead of \eqref{pwfor1} to build exactly the same solution, however we started with \eqref{pwfor1} for notational purposes relevant to the solution of constant amplitude $A(t)=A_*$ which will be discussed below.  Indeed, we will discuss the limit of the orbit \eqref{plane} as $t\rightarrow\infty$, by using \eqref{dispersion}-\eqref{eqbern}. 

Let $A(0)$ be the initial amplitude of the solution \eqref{plane}. 
Equation \eqref{eqbern} has the unique solution 
\begin{equation}\label{eqf}
	A^2(t) = \frac{\gamma A^{2}(0)}{(\gamma  +\delta A^{2}(0)) \exp(-2\gamma t) - \delta A^{2}(0)},
\end{equation} 
for which, it holds that
\begin{equation}\label{limit}
	\lim_{t\to\infty}A^2(t)=A_*^2.
\end{equation}
Next, by integrating \eqref{pwfor7} in the interval $[0,t]$, we get 
\begin{equation}
\label{pwfor9}
\Theta (t)-\Theta (0)=\int_{0}^tA^2(s)ds.
\end{equation}
The function $\Theta (t)-\Theta (0)$ has the following properties:
\begin{align}
	\label{pwfor10}
	\lim_{t\rightarrow\infty}\big(\Theta (t)-\Theta (0)\big)&=\infty,\\
		\label{pwfor11}
	\lim_{t\rightarrow+\infty}\frac{\Theta (t)-\Theta (0)}{t}&=\lim_{t\rightarrow\infty}\dot{\Theta}(t)=A_*^2,\\
		\label{pwfor12}
	\lim_{t\rightarrow\infty}\big[\big(\Theta (t)-\Theta (0)\big)-A^2_*t\big]&=-\frac{1}{2\delta}\ln\Big(\frac{ A^2(0)}{A_*^2}\Big)=: b>0.
\end{align}
 Note that when $A^2(0)=A_*^2$, we have $A^2(t)=A_*^2$ for all $t\geq 0$, and in this case,  we have that $b =0$.  From \eqref{pwfor11} and \eqref{pwfor12}, we have that as $t\rightarrow \infty$, the function $\Theta (t)$ has the slant asymptote 
 \begin{equation}
 	\label{pwfor13}
 \Theta_{\infty}(t)=A_*^2t+\Theta(0)+b.
 \end{equation} 
Without loss of generality we can select $\Theta(0)=-b$ (since we can always adjust by the relevant phase factor). Then, in the limit $t\rightarrow\infty$, we replace the function $\Omega (t)$ by 
\begin{align}
\Omega_{\infty}(t)&=4k\sin^2\Big(\frac{hq}{2}\Big)t
-\Theta_{\infty} (t)=4k\sin^2\Big(\frac{hq}{2}\Big)t-A_*^2t=\tilde{\omega} t,
\label{pwfor14}\\
\tilde{\omega}&= 4k\sin^2\Big(\frac{hq}{2}\Big)-A_*^2. \label{pwfor15}
\end{align}  
Now, we may use \eqref{limit} in the solution \eqref{plane} and \eqref{pwfor14}-\eqref{pwfor15} (or alternatively \eqref{pwfor6} and \eqref{pwfor11}), to see that as $t\rightarrow \infty$, $\mathcal{C}_*$ is the orbit defined by the solution
\begin{equation}
	\label{pwfor16}
	w_{\infty,n}(t)=A_*\exp\left[\rmi(q x_n - \Omega_{\infty} (t)\right]=A_*\exp\left[\rmi(q x_n - \tilde{\omega} t\right], 
\end{equation}
where $\tilde{\omega}$ and $q$ satisfy  \eqref{pwfor15}. This is exactly the claimed dispersion relation \eqref{pwfor15th} for the unique plane wave of constant amplitude $A_*$ \eqref{pwfor16}, to exist. Usually we derive the dispersion relation \eqref{pwfor15} by substitution of the solution \eqref{pwfor16} to the DNLS \eqref{dnls_gl}.  Here, it is interesting to find, that the solution \eqref{pwfor16} can be derived when using \eqref{eqf} for $A(0)^2=A_*^2$ for its amplitude, and importantly, by the interpretation of the function $\Theta_{\infty}(t)$ for $\beta=0$ and $\Theta (0)=0$ in \eqref{pwfor13}. The assertion \eqref{limitp} is proved. 
The proof of \eqref{boundp} follows by the calculation of 
\begin{equation}
	\label{equi2}
P_a[w_n(t)]=\frac{1}{N}\sum_{n=0}^{N-1}|w_n(t)|^2=A^2(t),
\end{equation}
and the limit \eqref{limit}.

\end{proof}
\noindent{\bf Remark:} It is important to stress the following: The system \eqref{dnls_gl} do not admits solutions of the form \eqref{plane} with constant amplitude $A$, unless this amplitude is $A=A_*$. {\em In other words, there is no other constant $A>0$ than $A_*$, which satisfies both equations \eqref{dispersion} and \eqref{eqbern} simultaneously}.   On the other hand, we may always trigger initial conditions of the form 
\begin{equation}
\label{pwin}
u_n(0)=A\exp(\rmi qx_n), \quad A\neq A_*,
\end{equation}
and arbitrary $q$, which can be considered generically as perturbations of the unique  plane wave solution of constant amplitude.
The fate of such initial conditions will be determined in the next paragraph.   
\subsection{Consequences of Theorem \ref{T1} with respect to a linear instability analysis}
\label{TransA}
The main consequence of Theorem \ref{TH1} is that the dynamical system defined by the periodic lattice \eqref{dnls_gl}-\eqref{eq02} is {\em globally asymptotically stable} for any initial condition in $\ell^2_{per}$: for arbitrary initial conditions all the orbits are converging to a state whose averaged norm satisfies the bound \eqref{bound1}. In the special case where the initial condition is defined by a solution of the form \eqref{plane}, the attractor is trivial, defined by the only plane wave solution which can have constant amplitude, which is $A^*$. This global asymptotic stability do not excludes the emergence of instability effects. Nevertheless, these effects {\em can be only transient}, prior the convergence of the system to its ultimate state. We  investigate the potential of these transient instability effects in the framework of the linear MI analysis of plane waves. The motivation  is that the fate of such initial conditions is predicted by the Theorem \ref{TH1} as stated above.  

The modulation instability of  plane-waves is well studied in the case of the Hamiltonian DNLS \cite{Kivshar}. We will perform here this analysis for the system \eqref{dnls_gl} with the presence of the linear gain and nonlinear loss effects.  The significance of these findings lies on the fact that they prove that {\em  the simplest initial data of the form  \eqref{pwin} have interesting dynamics prior the convergence to the attractor, and that even this attractor can be transiently, modulationally unstable}.

To justify the above statement, we consider the perturbation of the only plane wave solution \eqref{plane} with constant amplitude $A(t)\equiv A_*$, 
\begin{equation}
\label{p1}
u_n(t) = (A_*+b_n(t))\exp\left[\rmi (\theta_n(t)+\psi_n(t))\right],
\end{equation}
where $\theta_n(t)=q x_n-\tilde{\omega} t$, $x_n = -L + nh, n=0,1,\ldots,N$ and $b_n,\psi_n$ are the amplitude and the angle perturbations, respectively.  Substitution of \eqref{p1} to \eqref{dnls_gl}, yields the system 
\begin{equation}
	\label{p2}
	\left\{
	\begin{aligned}
		- A_*\dot{\psi}_n&-kA_*(\psi_{n+1}-\psi_{n-1})\sin(hq)+k( \Delta_db_n )\cos(hq)+2A_*^2b_n=0,\\
		\dot{b}_n&+kA_*(\Delta_d\psi_{n})\cos(hq)+k( b_{n+1}-b_{n-1})\sin(hq) -2\delta A_*^2b_n=0.
	\end{aligned}
	\right.
\end{equation} 
For the derivation of \eqref{p2}, we used  the dispersion relation \eqref{pwfor15} and that $\gamma+\delta A_*^2=0$.

To simplify the system \eqref{p2}, we  consider amplitude perturbations of the form $b_n(t) = b_0\exp(\rmi(Q x_n - \Omega_p t))$ and phase perturbations of the form $\psi_n(t) = \psi_0\exp(\rmi(Q x_n - \Omega_p t))$. Then the system \eqref{p2} becomes 
\begin{equation}
	\label{p4}
\left\{
\begin{aligned}
	&\left(\Omega  -2k \sin(hQ)\sin(hq)\right) A_*\psi_0\rmi +(-4 k\sin^2(hQ/2)\cos(hq)+2A_*^2)b_0=0,\\
	& -4kA_*(\sin^2(hQ/2)\cos(hq)\psi_0\rmi + \left(\Omega-2k\sin(hQ) \sin(hq)  -  2\delta A_*^2\rmi\right)b_0 = 0.
\end{aligned}
\right.
\end{equation}
The linear homogeneous system \eqref{p4} for $(b_0,\psi_0)$  has a nontivial solution if and only if its determinant is zero. This request provides the  modulation instability formula: 
\begin{equation}\label{MI}
	\Lambda^2 -  2\rmi\delta A_*^2 \Lambda -\Gamma\left(\Gamma-2A_*^2 \right) = 0,
\end{equation}
with 
$\Lambda=\Omega-2k \sin(hQ)\sin(hq) $ and  $\Gamma=4k\sin^2(Q/2)\cos(hq)$. 
We elaborate further on the formula \eqref{MI}, to derive the conditions for the MI of the examined plane wave solution. 
Since \eqref{MI} is a quadratic equation in $\Lambda$ with complex coefficients, it has only complex roots.
The discriminant of \eqref{MI} is
$$
\Delta_{\Lambda} = 4(\Gamma(\Gamma - 2 A_*^2 ) - \delta^2 A_*^4),
$$ 
and accordingly, the roots are 
\begin{equation}
	\Lambda_{\pm} = \delta A_*^2 \rmi \pm \frac{\sqrt{\Delta_{\Lambda}}}{2} = \delta A_*^2 \rmi \pm  \sqrt{\Gamma(\Gamma - 2 A_*^2 ) - \delta^2 A_*^4}.
\end{equation}
We are interested in the case $\text{Im}(\Omega) = \text{Im}(\Lambda_{\pm}) > 0$ for the exponential growth of the perturbations, and thus the emergence of instability.  If the quantity under the square root is positive, we have that 
$$
\text{Im}(\Lambda_{\pm})=\delta<0.
$$
If the quantity under the square root is negative, we have that 
$$
\text{Im}(\Lambda_{-})=\delta -\sqrt{\delta^2 A_*^4- \Gamma(\Gamma - 2 A_*^2 )}<0.
$$

Thus, we are only interested for the case
\begin{align}
	\text{Im}(\Lambda_{+})> 0 \iff &  \delta A_*^2 +  \sqrt{ \delta^2 A_*^4 - \Gamma(\Gamma - 2 A_*^2 )}  > 0 \nonumber\\ 
	\label{negIm}	
	\iff &  \Gamma(\Gamma - 2 A_*^2 )<0. \\
	\iff & 4k\sin^2(hQ/2)\cos(hq)[4k\sin^2(hQ/2)\cos(hq) - 2A_*^2] <0.\nonumber
\end{align}
If $\cos(hq) < 0$, then \eqref{negIm} is violated and the plane wave solution is stable. 
Therefore, instability occurs when 
\begin{equation}
	\label{conq}
\cos(hq) > 0.
\end{equation}
Indeed, in order to verify \eqref{conq}, it suffices to show that there exists at least one $Q$ such that \eqref{negIm} holds. This $Q=Q^*$ satisfies the equation 
$$
\sin(hQ^*/2) = \frac{h^2A_*^2}{2\cos(hq)},
$$
and implies the requested condition for instability \eqref{negIm}, if \eqref{conq} holds.

We will illustrate the above transient MI of the plane wave attractor with numerical simulations, in the next section. 
\section{Numerical study I: Transient modulation instability effects and convergence of plane wave initial conditions to the plane wave attractor}
\setcounter{equation}{0}
\label{num1}
This section is devoted to the first part of the numerical investigations of the paper. We illustrate the transient MI of the plane wave attractor, which is verified by the \textcolor{red}{analysis} of the section \ref{TransA}. For this purpose, we study the dynamics of the simplest plane wave initial conditions, defining perturbations of the plane wave attractor,  

\begin{equation}\label{plane_ic}
	u_n(0) = (A_*+A_p)\exp\Big(\frac{\rmi K\pi x_n}{L}\Big),
\end{equation} 
where $A_p$ is the perturbation of the amplitude $A_*$ and $K$ is the wave number of the plane wave initial  condition \eqref{plane_ic}. The spatial parameters of the lattice are fixed to $L=50$,  $h=1$, thus $N=100$. The linear gain and nonlinear loss strengths are $\gamma=-\delta=1.5$. These values of $\gamma$ and $\delta$ are chosen to achieve a faster convergence to the attractor.

Due to the periodic boundary conditions \eqref{eq02} and discreteness, the physically meaningful $K$ are integers in the interval $[0,N/2]$. In terms of the analysis of section \ref{TransA}, $q=\pi K/L$, and the MI condition \eqref{conq} is satisfied for  $0<K<25$, defining the unstable modes. The rest of the modes $K$  are modulationally stable.
\begin{figure}[h!]
	\centering 
	\begin{tabular}{ccc}
		(a)&(b)&\hspace{0.3cm}(c)\\
		\includegraphics[height=4cm]{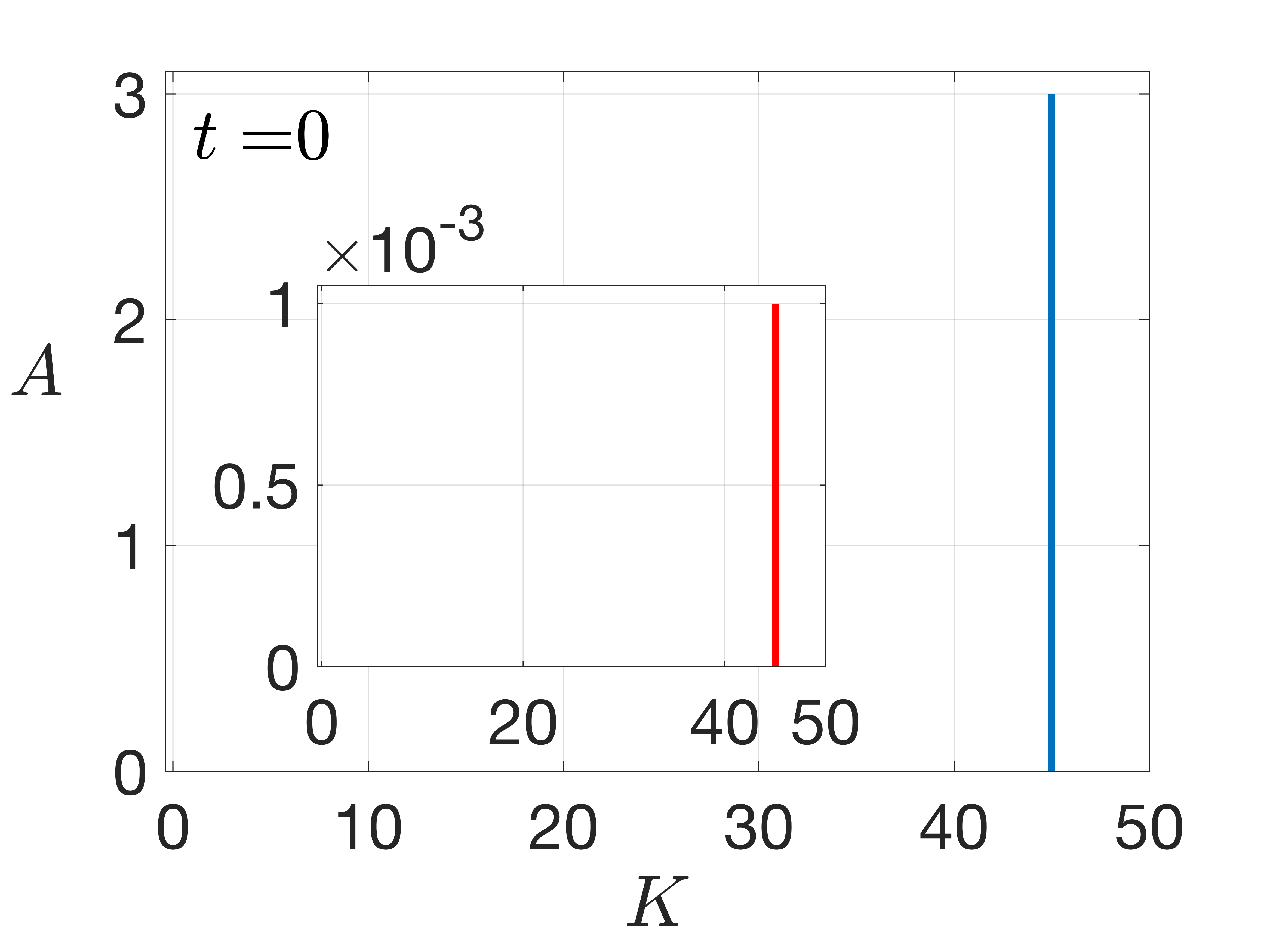}& 
		\includegraphics[height=4cm]{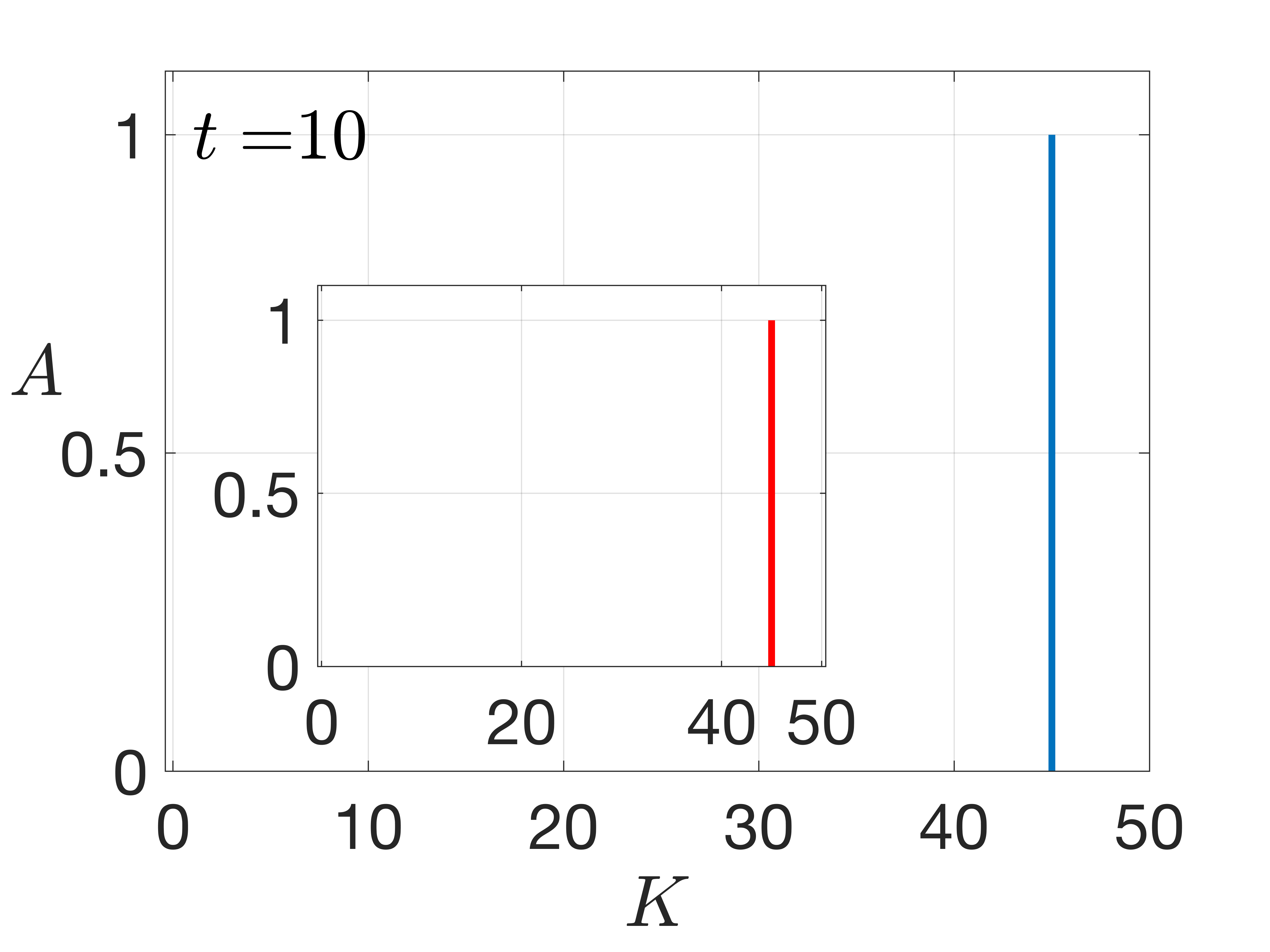}&		\hspace{0.3cm}\includegraphics[height=4cm]{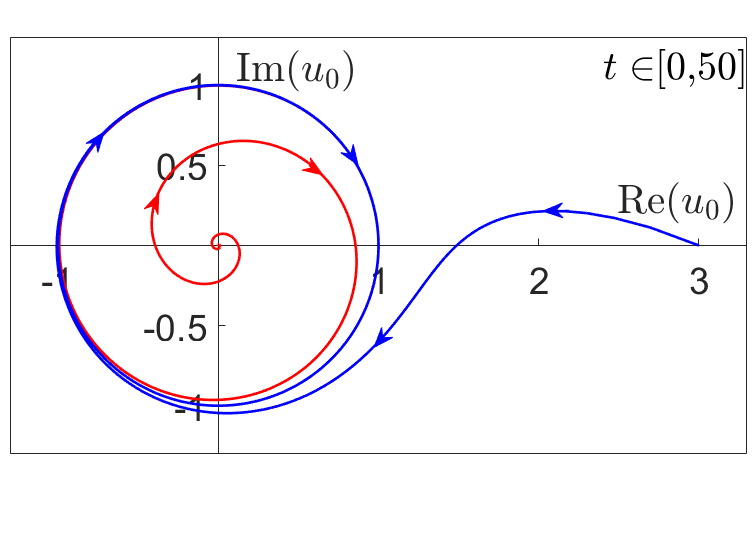}
	\end{tabular}
	\caption{Convergence to the plane wave attractor with amplitude $A_*=1$ for the stable mode $K=45$. Rest of parameters: $L=50$, $h=1$, $\gamma=-\delta=1.5$.  Panel (a): Fourier spectrum of the initial condition \eqref{plane_ic} for $A_p=2$ and $A_p=-0.999$.  Panel (b): Fourier spectrum of the solutions at $t=10$, for $A_p=2$ and $A_p=-0.999$. Panel (c): Convergence of the orbits of the central node of the lattice  $u_0(t)$ in the phase plane $\big(\textrm{Re}(u_0(t)),\,\textrm{Im}(u_0(t))\big)$ to the limit cycle of radius $A_*=1$; internal (red) orbit for $A_p=-0.999$ and external (blue) orbit for $A_p=2$.  Details are given in the text.}
	\label{Fig5}
\end{figure}

The first example concerns the dynamics of the  \textit{stable modes}. We choose the stable mode $K=45$. According to the results of Section~\ref{GlobalA}, for arbitrary perturbations $A_p$, the orbit starting from the initial condition \eqref{plane_ic} will converge to the plane wave attractor with amplitude $A_*=1$ and the initially chosen mode $K=45$.  The dynamics for the amplitude perturbations $A_p=2$ and $A_p=-0.999$ is depicted in Fig.~\ref{Fig5}.  Panel (a) depicts the Fourier spectrum of the initial condition \eqref{plane_ic} for $A_p=2$ and $A_p=-0.999$, where $K=45$ is the chosen stable mode. Panel (b) depicts the Fourier spectrum of the solutions at $t=10$ where convergence to the attractor is attained, for both cases of the perturbations $A_p$. It shows that the attractor preserves the initial wave number $K=45$, as predicted. No other modes are excited transiently (intermediate snapshots are omitted as they are identical to (b)), since the initial mode is stable.  Panel (c) illustrates the convergence to the attractor in the plane $\big(\mathrm{Re}(u_0(t)), \mathrm{Im}(u_0(t))\big) $  which is the phase plane for the central node $u_0(t)$ of the lattice  located at $n=0$. The dynamics is portrayed as a convergence to the asymptotically stable limit cycle of radius $A_*$; the orbit in the interior of the limit cycle is the one for $A_p=-0.999$ and the orbit in the exterior of the limit cycle is the one for $A_p=2$.   

\begin{figure}[tbh!]
	\centering 
	\begin{tabular}{cc}
		(a)&\hspace{0.7cm}(b)\\	
		\includegraphics[scale=0.25]{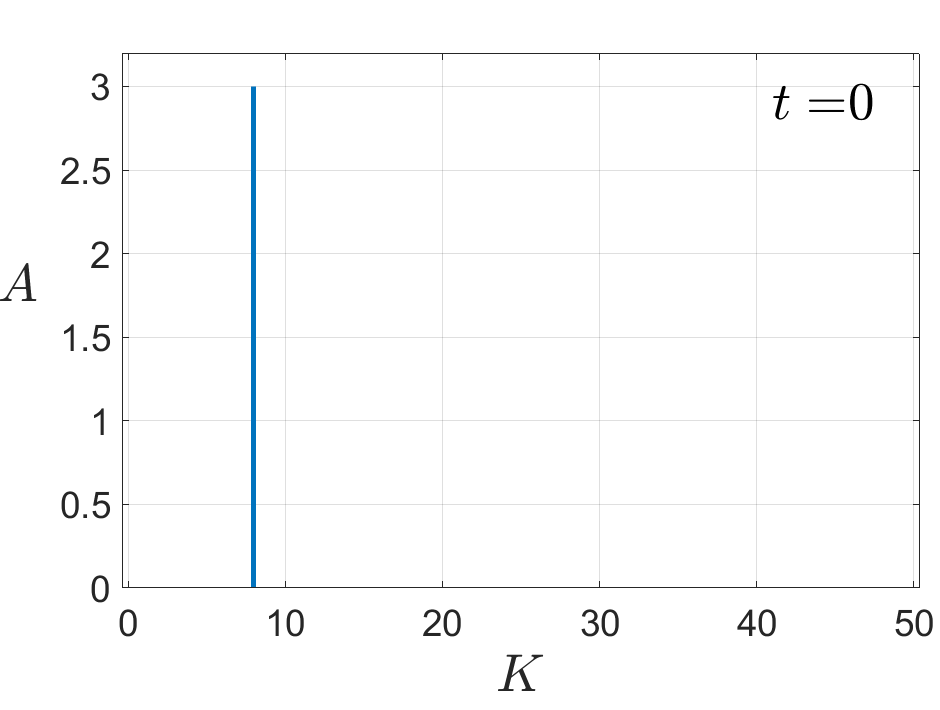}&
		\hspace{0.3cm}\includegraphics[scale=0.25]{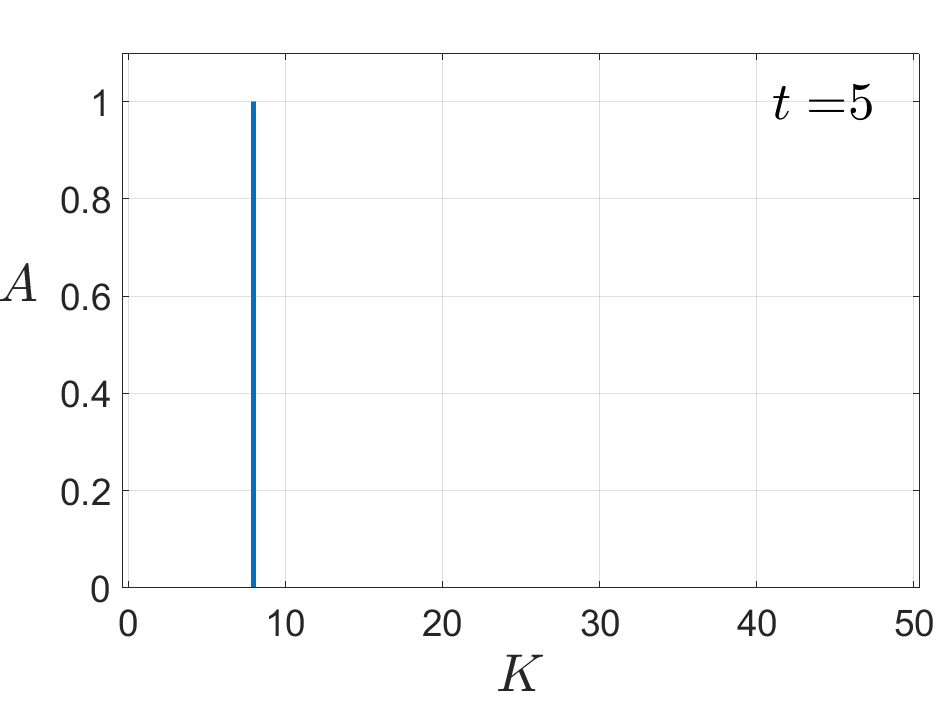}\\[5pt]
		(c)&\hspace{0.7cm}(d)\\
		\includegraphics[scale=0.25]{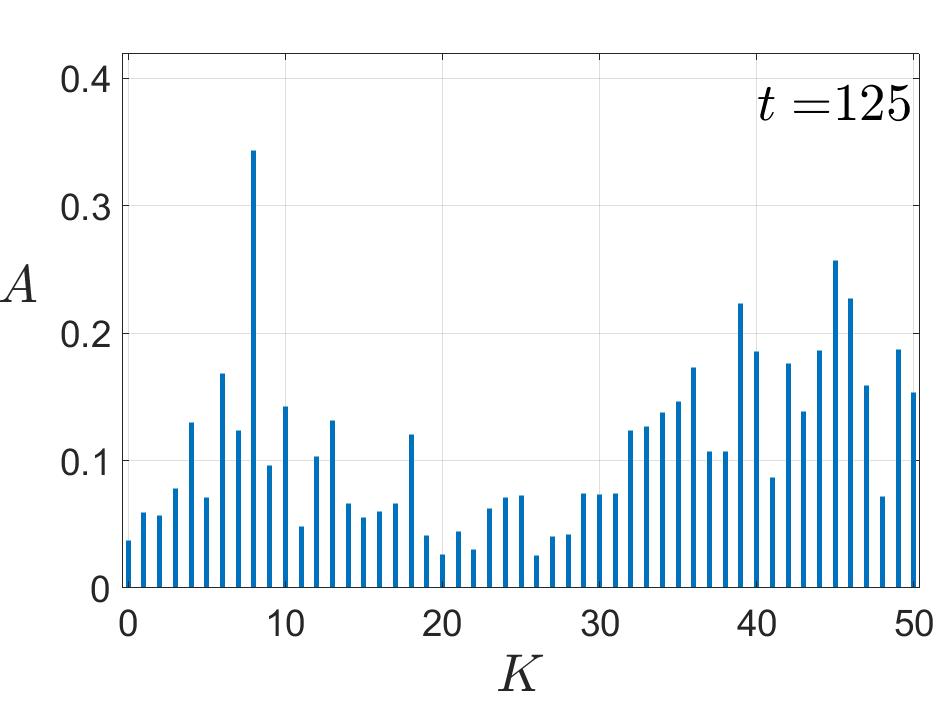}&
		\hspace{0.3cm}\includegraphics[scale=0.25]{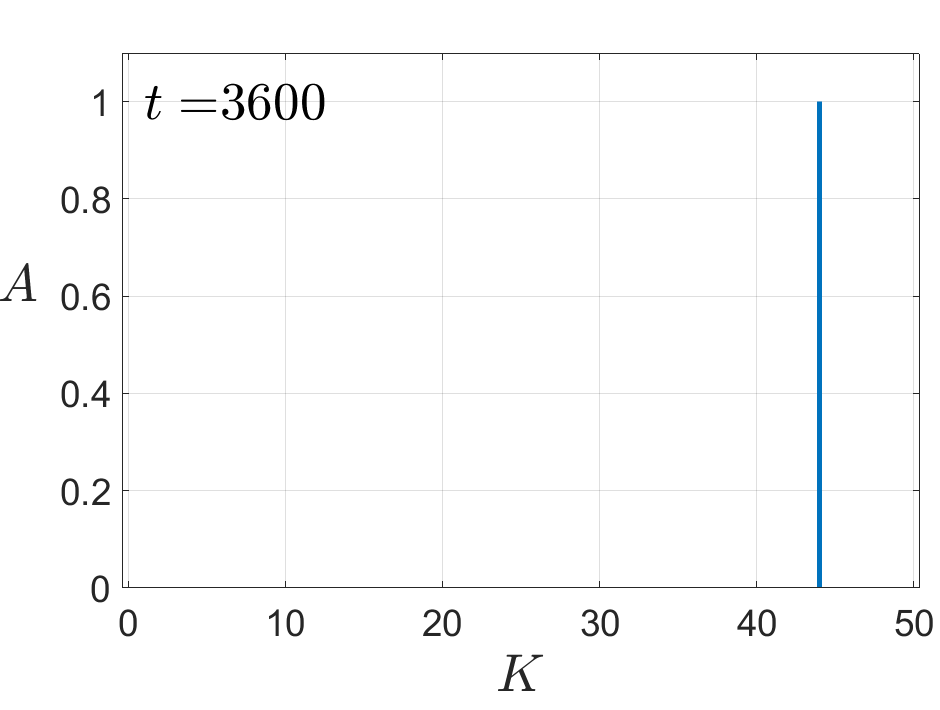}
	\end{tabular}
	\caption{ Snapshots of the evolution of the Fourier spectrum of the solution with initial condition \eqref{plane_ic} for $A_p=2$ and $K=8$. Rest of parameters: $L=50$, $h=1$, $\gamma=-\delta=1.5$. Details are given in the text.}
	\label{Fig6}
\end{figure}
The second example concerns the case of the  {\em unstable modes}.  We choose the unstable mode $K=8$.  According to the results of Section~\ref{TransA} the corresponding orbit should converge to the plane wave attractor with amplitude $A_*=1$. Concerning the wave number of the attracting state, the analytical arguments provide the prediction for its choice:  For the linear gain strength $\gamma=1.5$ the convergence to the amplitude $A_*$ should be fast. On the other hand, the initial unstable wave number should introduce transient MI effects. While the convergence to the steady-state amplitude $A_*$ has a fast exponential rate,  the emergence of instability effects will occur later, perturbing the phase of the attracting state, transiently. Nevertheless, the ultimate state should choose a stable wave number, since it is the globally asymptotically stable state for the system.  This prediction is fully illustrated in Figures \ref{Fig6} and \ref{Fig6b} depicting the dynamics for $A_p=2$. The dynamics for the corresponding unstable case of $A_p=-0.999$ is very similar and, thus, is not shown herein. Figure \ref{Fig6} shows the evolution of the Fourier spectrum of the solution. The initial wave number $K=8$, of  panel (a) is preserved as shown in panel (b) for $t=5$; at this time the system has almost converged to the plane wave with amplitude $A_*=1$ at a fast exponential rate as shown in the panel (a) of Figure \ref{Fig6b}, and MI effects are not yet visible. These are developed later as depicted in panel (c) of Figure \ref{Fig6}, demonstrating the excitation of the whole Fourier spectrum at $t=125$. Panel (b) of Figure \ref{Fig6b} portrays the deformed orbit  for  $t\in[0,125]$  which is a time period within the stage of MI.  The instability can last for large times, however, eventually,  the global attractor is attained and its wave number should be selected by the stability band. This is exactly the case, depicted in  panel (d) of Figure \ref{Fig6}, showing the snapshot of the Fourier spectrum at $t\in[3600,3700]$. At this time the only active wavenumber is the stable mode $K=44$ which is the one selected by the plane wave attractor of amplitude $A_*$. 

\begin{figure}[t!]
	\centering 
	\begin{tabular}{ccc}
		(a)&\hspace{0.5cm}(b)&\hspace{1cm}(c)\\
		\includegraphics[height=3.4cm]{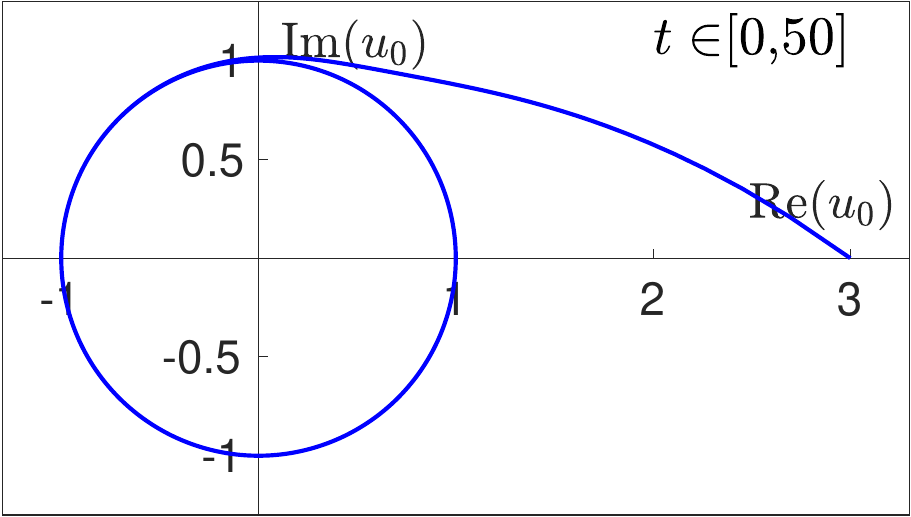}
		&
		\hspace{0.5cm}\includegraphics[height=3.4cm]{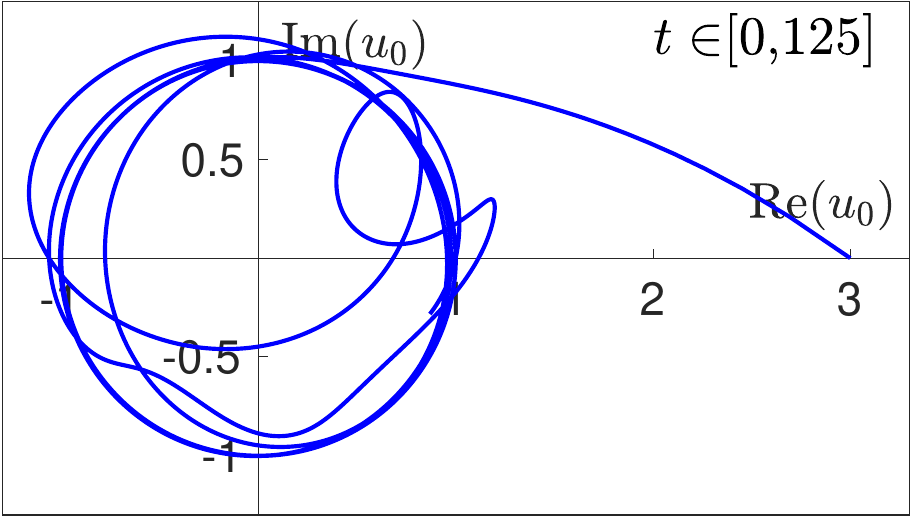}&
		\hspace{0.5cm}\includegraphics[height=3.4cm]{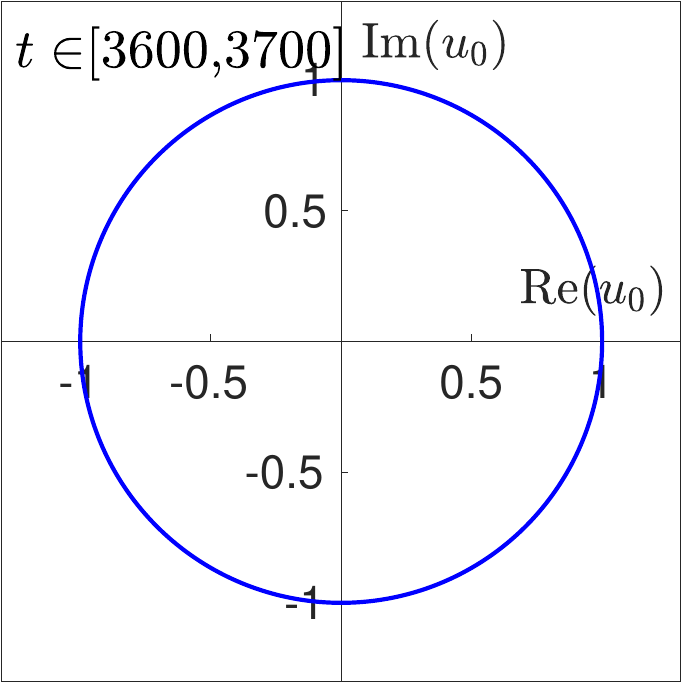}
	\end{tabular}
	\caption{Snapshots of the orbits of the central node of the lattice  $u_0(t)$ in the phase plane $\big(\textrm{Re}(u_0(t)),\,\textrm{Im}(u_0(t))\big)$ for the solution of Figure \ref{Fig6}. Details are given in the text.}
	\label{Fig6b}
\end{figure}
\section{Numerical Study II: Simulating the infinite lattice with nonzero boundary conditions by the finite lattice with periodic boundary conditions }
\label{num2}
This section is devoted to the second part of the numerical investigations of the paper. We study the dynamics of localized initial conditions on the top of a background $A>0$ of the form
\begin{equation}\label{loc_ic}
	u_n(0) = A+f(x),
\end{equation} 
where $f(x)$ is a decaying function $\lim_{|x|\rightarrow\infty}f(x)=0$. The case of localized data is relevant to the simulations of the infinite lattice \eqref{dnls_gl}-\eqref{nv} and thus, they will be explored in detail regarding their global and transient dynamics.
\subsection{Localized initial data \texorpdfstring{\eqref{loc_ic}}{} and the global attractor}
\label{subsec1}
The analytical results of sections \ref{GlobalA} and \ref{TransA} corroborated with the numerical findings of section \ref{num1}, provide a strong indication that the global attractor for all the initial data under the dynamical system \eqref{wds1} is the plane wave attractor \eqref{limitp}. 
These indications are enhanced by the fact that any solution of the DNLS \eqref{dnls_gl} in $\ell^2_{per}$ can be represented by the finite sum of plane waves which is its discrete Fourier series,
 \begin{align*}
 	\label{ser1}
u_n(t)&=\frac{1}{N}\sum_{K=0}^{N-1}A_K(t)\exp\Big(\frac{\rmi K 2\pi n}{N}\Big),\\
A_K(t)&=h\sum_{n=0}^{K-1}u_n(t)\exp\Big(-\frac{\rmi K 2\pi n}{N}\Big)
\end{align*}
and the fact that any plane wave initial condition is attracted by the orbit  $\mathcal{C}_*$.

Figure \ref{Fig8} depicts the dynamics of the Fourier spectrum of the solution for the initial condition \eqref{loc_ic} with
\begin{equation}
	\label{alg1}
	f(x)=\frac{\lambda_1}{\lambda_2+\lambda_3 x^2},\quad \lambda_i>0,\quad i=1,2,3,
\end{equation}
the standard example of an initial condition decaying on $A$ at a quadratic rate. For the initial condition \eqref{loc_ic}-\eqref{alg1}, we choose $A=0.5$, $\lambda_1=\lambda_2=1,\lambda_3=4$. The gain and loss parameters are $\gamma=-\delta=0.1$; note that $A\neq A_*$. The rest of the lattice parameters are $L=50$, $h=1$, $N=100$. 

\begin{figure}[tbh!]
	\centering 
	\begin{tabular}{ccc}
(a)&(b)&(c)\\		
	\includegraphics[scale=0.22]{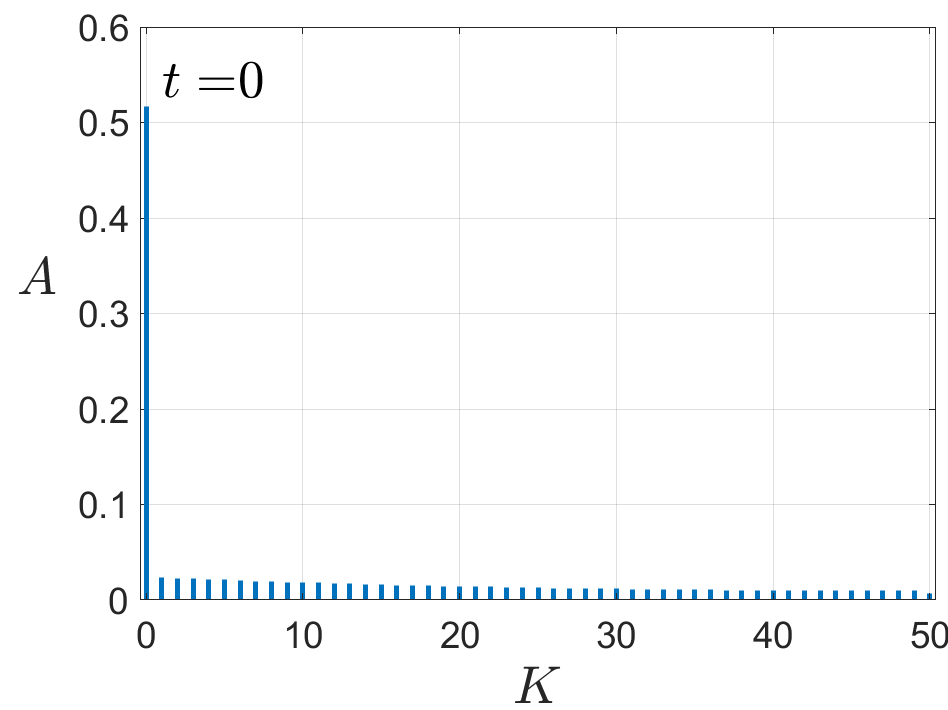}
	&\includegraphics[scale=0.22]{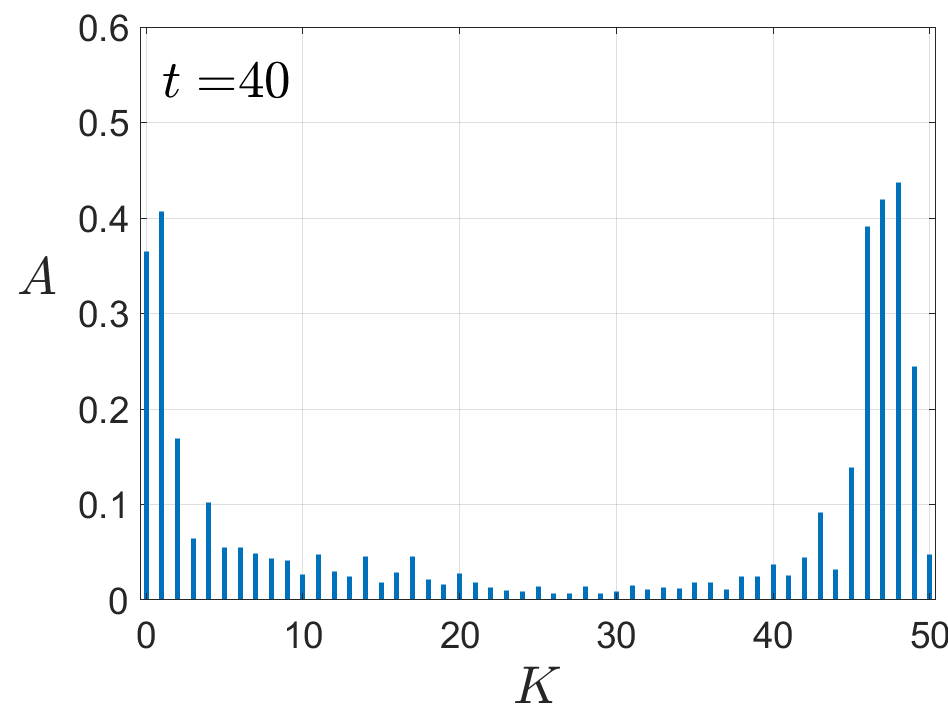}&\includegraphics[scale=0.22]{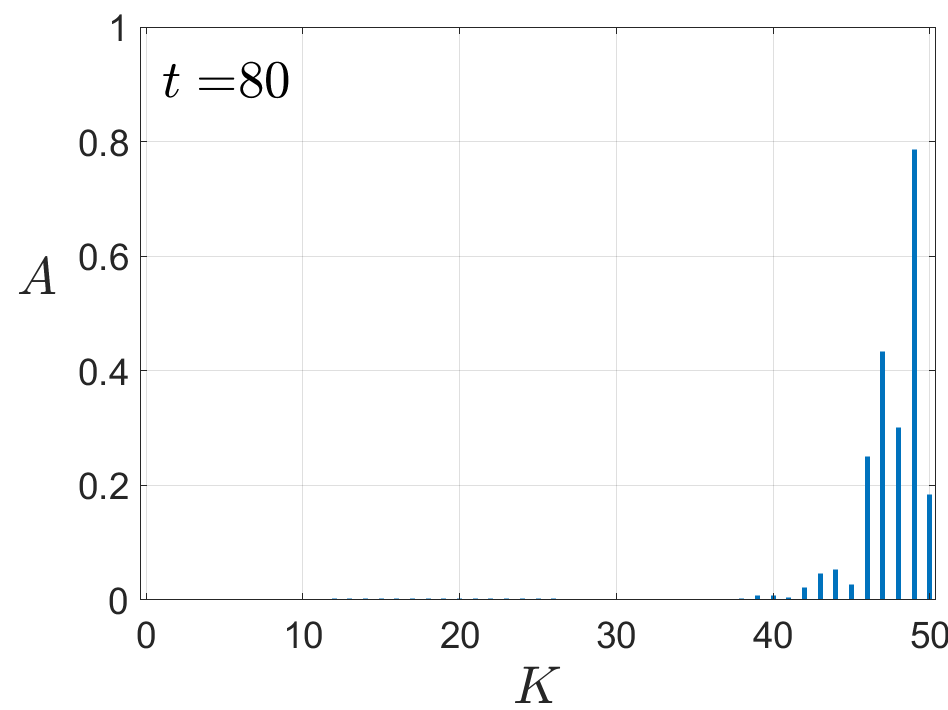}
	\\[1ex]
	(d)&(e)&(f)\\
	\includegraphics[scale=0.22]{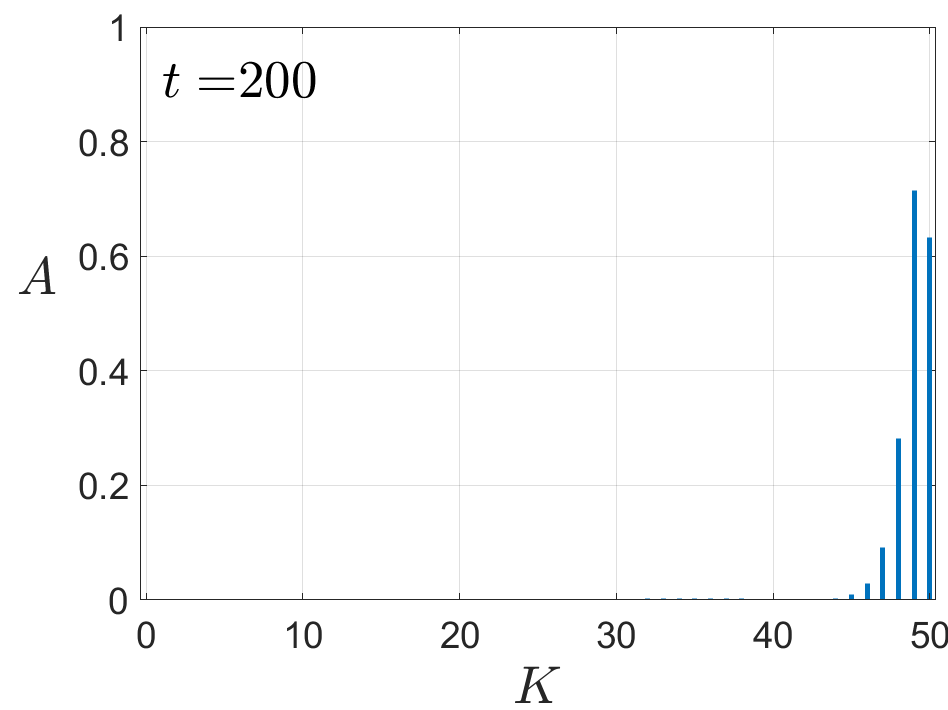}
	&\includegraphics[scale=0.22]{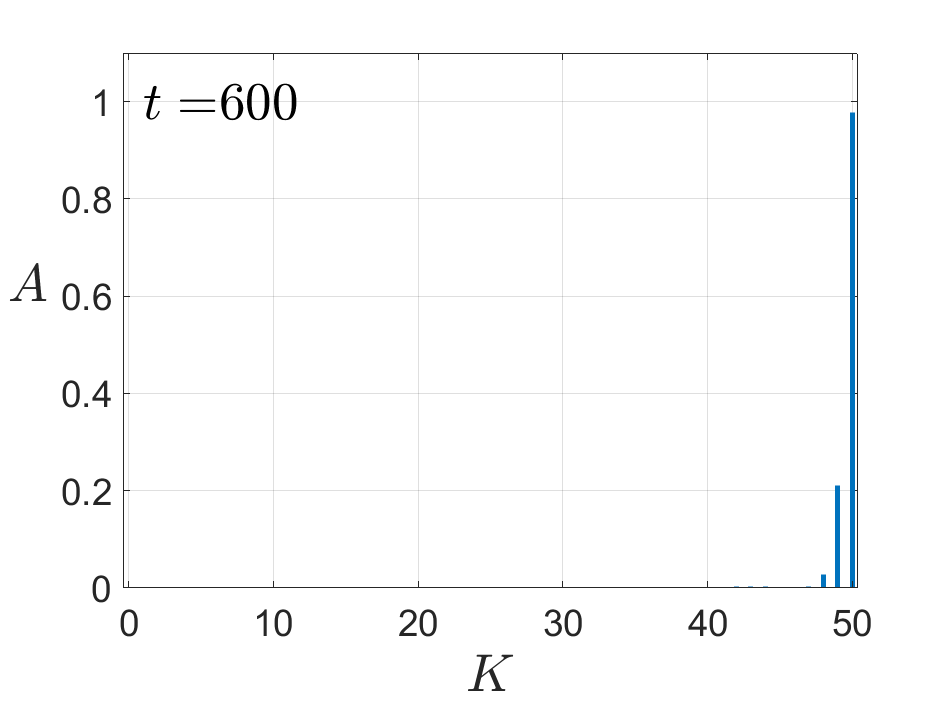}
	&\includegraphics[scale=0.22]{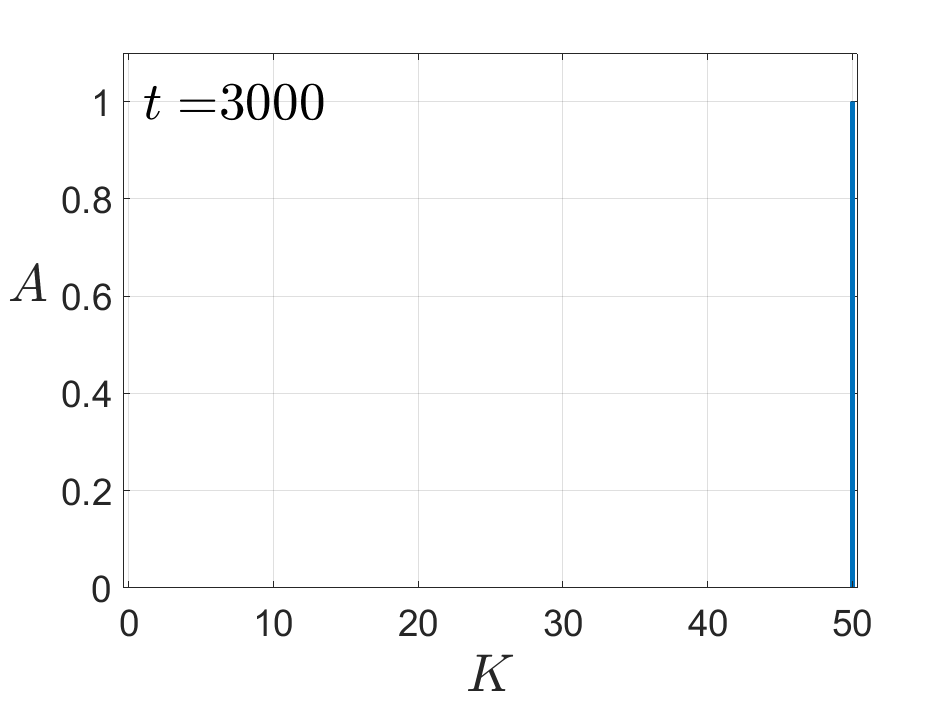}
	\end{tabular}
	\caption{Snapshots of the Fourier spectrum of the solution for the initial condition  \eqref{loc_ic}-\eqref{alg1} with $A=0.5$.  Lattice parameters:, $L=50$, $h=1$, $\gamma=-\delta=0.1$. Details  are given in the text.}
	\label{Fig8}
\end{figure}

The transient dynamics prior to the convergence to $\mathcal{C}_*$ is illustrated. This involves the excitation of the full spectrum at the early stages of the evolution as it is shown in  panel (b) for $t=40$, while progressively, the active Fourier modes are reduced, as it is shown in the snapshots (c) for $t = 80$, (d) for $t=200$ and (e) for $t=600$.  When the orbit $\mathcal{C}_*$ is reached, the only active mode is the stable mode $K=50$, see panel (f). Qualitatively, the same dynamics was observed for other choices of the lattice parameters and gain and loss strengths. 

We can summarize the results regarding the convergence of all initial data to the plane wave attractor as follows: Theorem \ref{T1} and the MI analysis of section \ref{TransA} theoretically predict that the amplitude and spectrum of the initial condition will undergo a transient evolution before converging to the attractor. Ultimately, this process selects the amplitude $A_*$ and a wavenumber within the stable modes of the spectrum. It is important to note that this prediction is qualitative rather than quantitative, as the theoretical results do not specify the exact stable mode of the attractor.
\subsection{Transient dynamics of localized initial data \texorpdfstring{\eqref{loc_ic}}{}: MI dynamics and excitation of extreme wave events in the dynamics of the DNLS \texorpdfstring{\eqref{dnls_gl}}{}.}
\label{RWE}
For the periodic lattice \eqref{dnls_gl}-\eqref{eq02} the numerical simulations illustrate that the dynamics prior the convergence to the global attractor are associated with the emergence of characteristic patterns of MI dynamics and the emergence of extreme wave events. This is a consequence of the transient MI discussed above. Our numerical investigations consider two examples of initial conditions of the form \eqref{loc_ic}, with different rates of localization. The first concerns  \eqref{loc_ic} with the quadratically decaying function $f(x)$ \eqref{alg1}, and the second with the exponentially decaying function
\begin{equation}
	\label{expini}
	f(x)=\sigma\mathrm{sech}(\rho x),\quad \sigma,\;\rho>0.
\end{equation} 
In the light of Theorem \ref{TH1} we distinguish between two cases for $A$ in \eqref{loc_ic}. The case where $A=A_*$ and the case where $A\neq A_*$.
\paragraph{The case $A=A_*$.} We discuss the dynamics for the lattice parameters $h=1$, $N=400$, $L=200$ and gain and loss strengths  $\gamma=0.0025$, $\delta=-0.01$.  According to Theorem \ref{TH1}, the infinite lattice supplemented with the non-zero boundary conditions \eqref{nv} support localized solutions if and only if $ A=A_*=\sqrt{-\gamma/\delta}=0.5$. This is the suitable choice of the background of the initial condition \eqref{loc_ic}, in order to simulate the dynamics of the infinite lattice.
\begin{figure}[tbp!]
	\centering 
	\begin{tabular}{cc}
		(a)&\hspace{0.7cm}(b)\\	
		\includegraphics[scale=0.4]{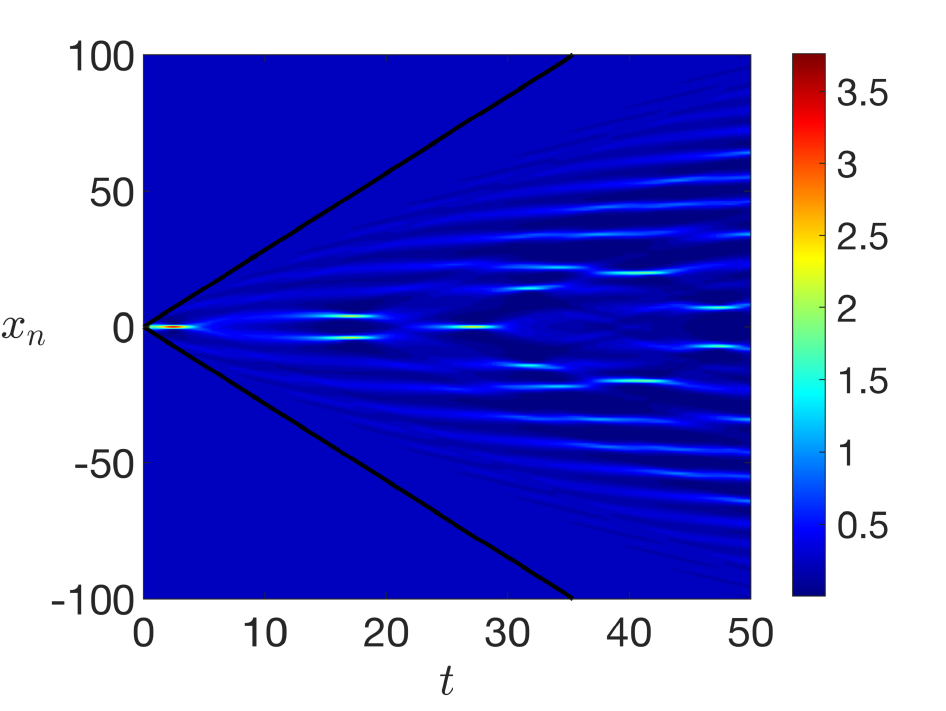}&
		\hspace{0.3cm}\includegraphics[scale=0.4]{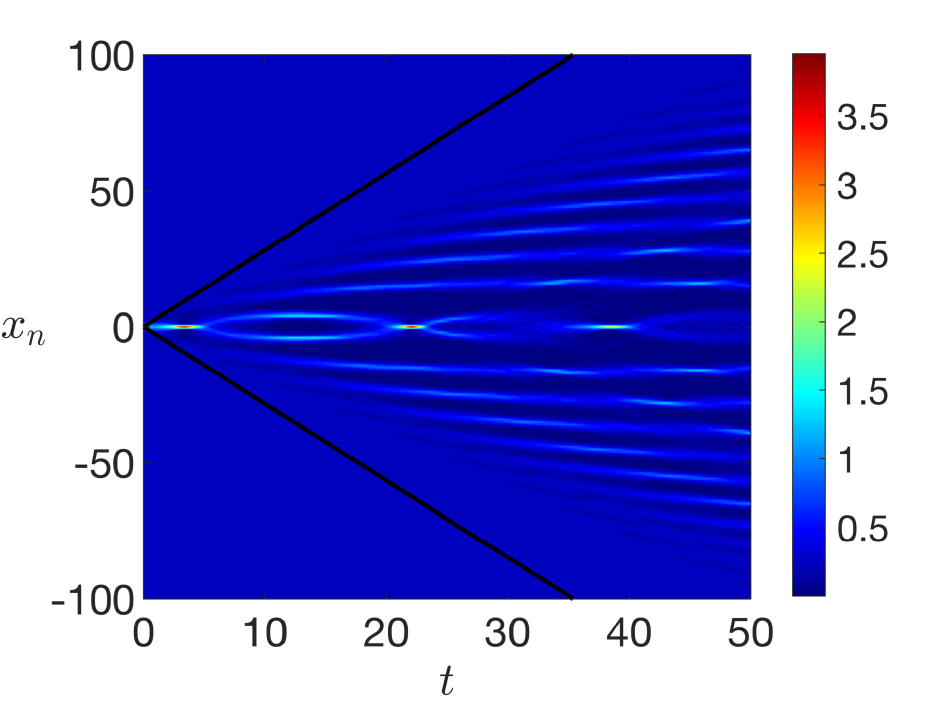}\\[5pt]
		(c)&\hspace{0.7cm}(d)\\
		\includegraphics[scale=0.4]{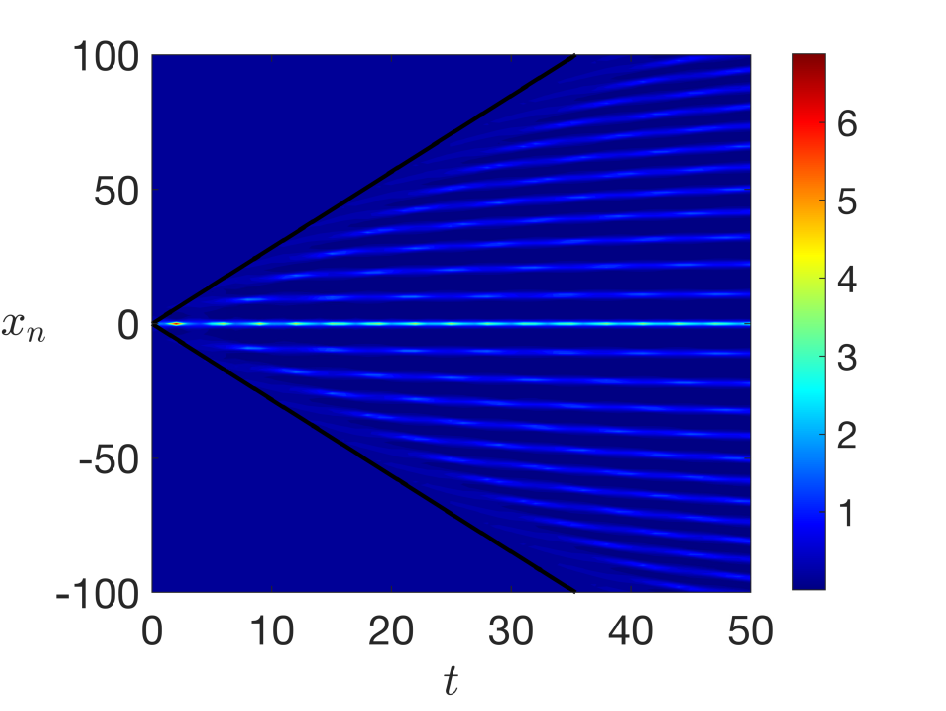}&
		\hspace{0.3cm}\includegraphics[scale=0.4]{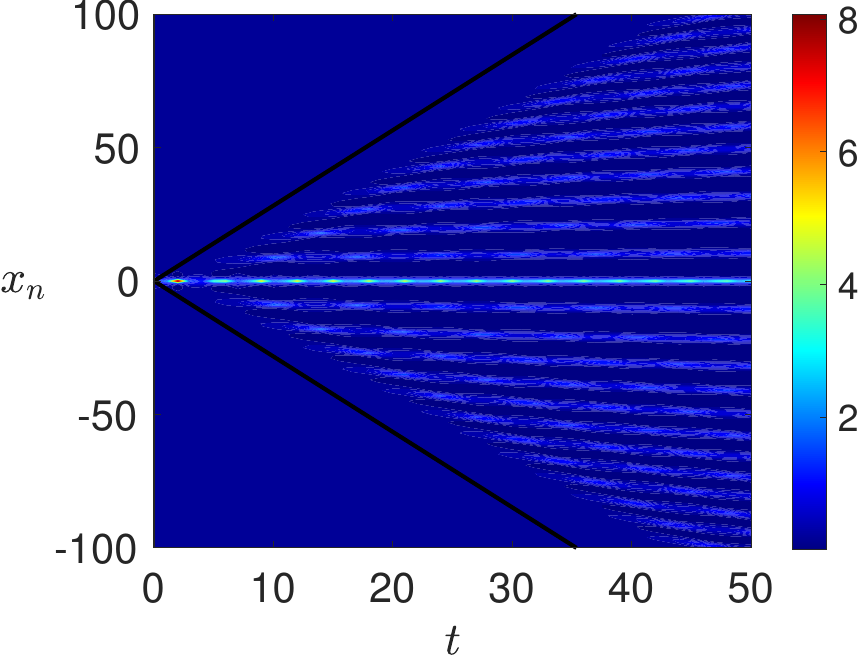}
	\end{tabular}
	\caption{
	\label{Fig9A} Top row: Spatiotemporal evolution of the density $|u_n(t)|^2$ for the DNLS equation \eqref{dnls_gl}, with initial conditions \eqref{loc_ic} for $A=A_*=0.5$  when $\gamma=0.0025$ and $\delta=-0.01$. Panel (a) for $f(x)$ \eqref{alg1} with $\lambda_1=\lambda_2=1$ and $\lambda_3=4$. Panel (b) for  $f(x)$ \eqref{expini} with $\sigma=0.6$, $\rho=1$. Bottom row: The spatiotemporal evolution of the density $|\phi_n(t)|^2$ of the solutions of the AL-lattice \eqref{AL_eq}. Panel (c) for  $f(x)$ \eqref{alg1} and panel (d) for $f(x)$ \eqref{expini} with parameters as above. Rest of the lattice parameters for both systems: $L=200$, $h=1$.   The straight (black) lines depict the graphs of the lines $x=\pm 4\sqrt{2} A t$. More details in the text. }
\end{figure}
\begin{figure}[tbp!]
	\centering 
	\begin{tabular}{cc}
		(a)&\hspace{0.7cm}(b)\\	
		\includegraphics[scale=0.4]{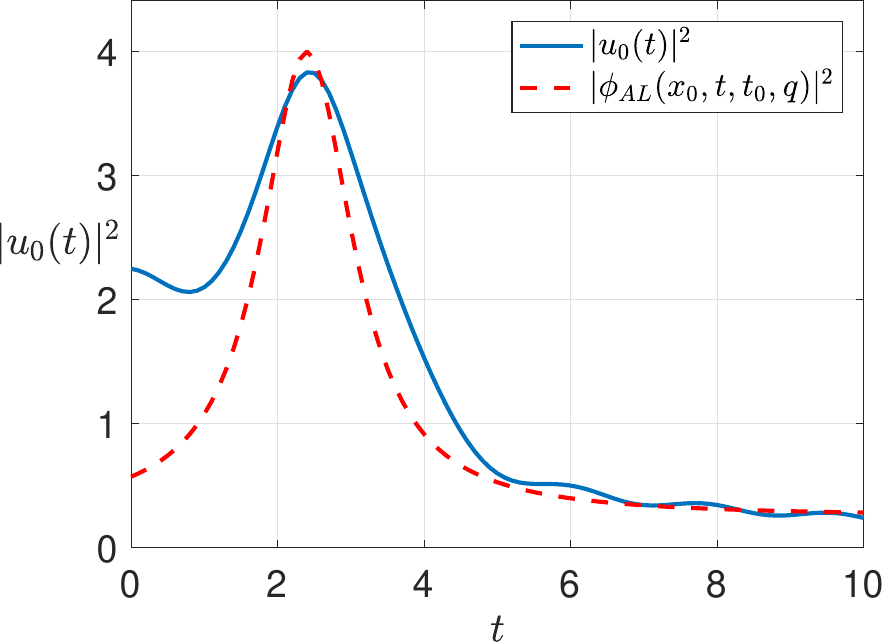}&
		\hspace{0.3cm}\includegraphics[scale=0.4]{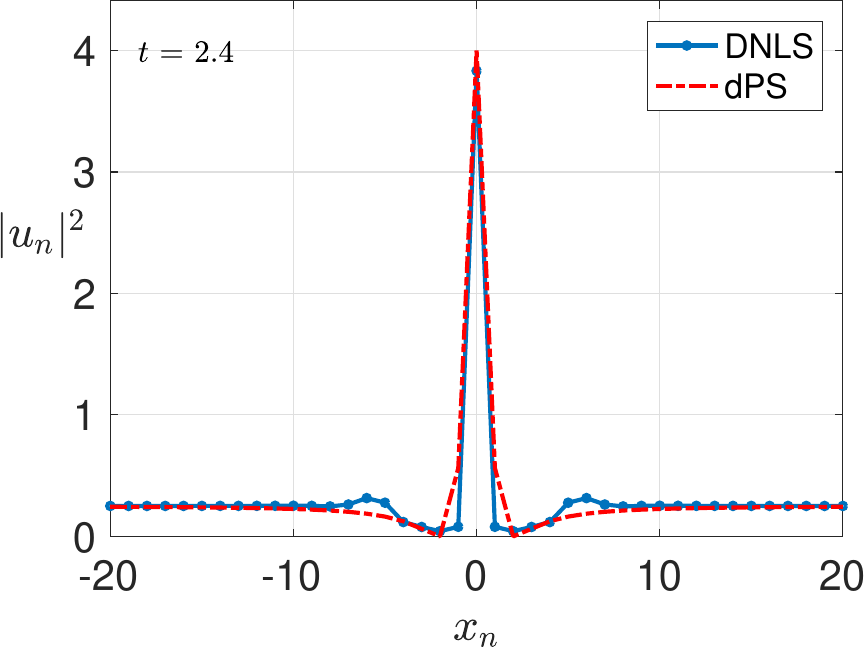}\\[5pt]
		(c)&\hspace{0.7cm}(d)\\
		\includegraphics[scale=0.4]{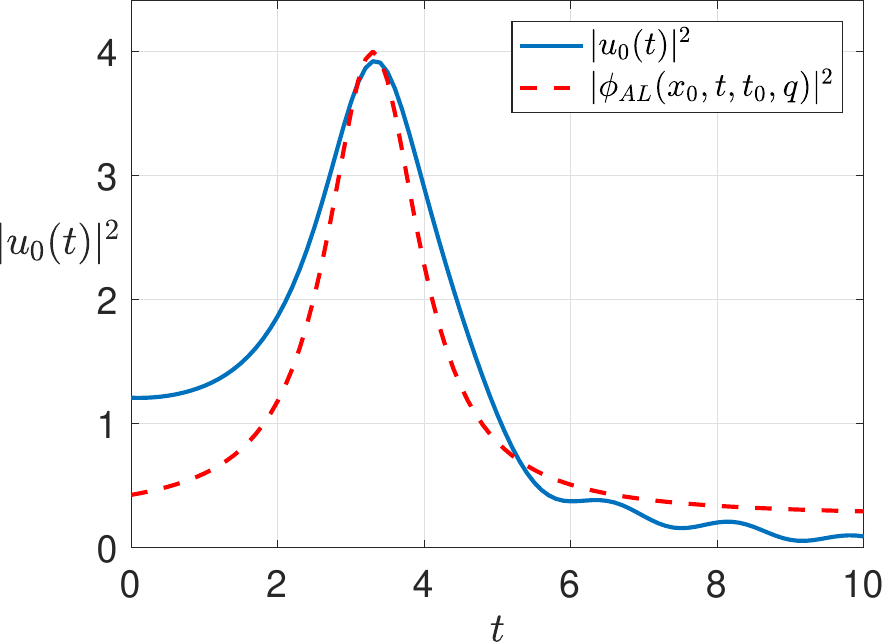}&
		\hspace{0.3cm}\includegraphics[scale=0.4]{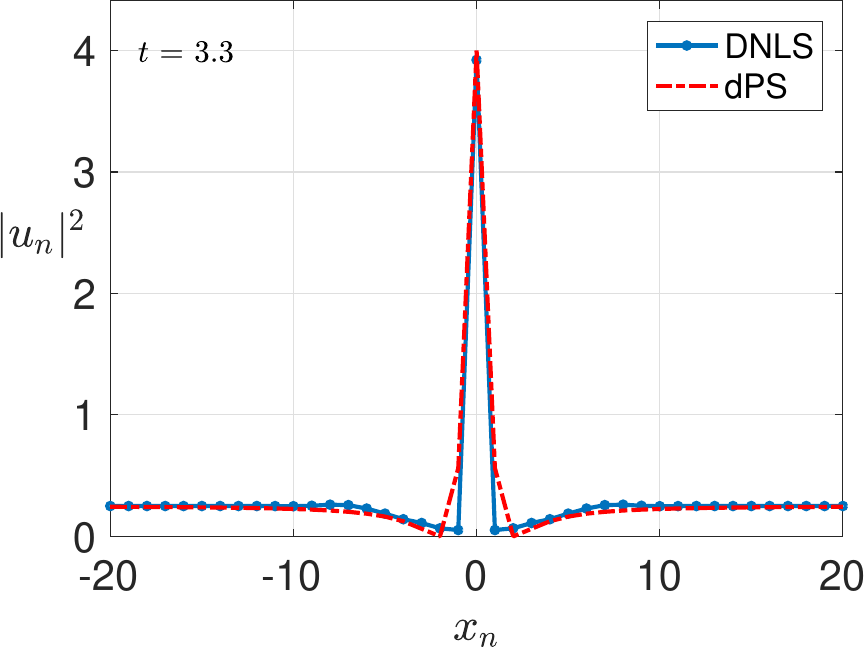}
	\end{tabular}
	\caption{
		\label{Fig9B} Parameters of the lattice  and initial conditions  as in Figure \ref{Fig9A}. Top row: initial condition \eqref{loc_ic} with $f(x)$  \eqref{alg1}. The continuous (blue) curve in panel (a) traces the temporal evolution of the density  $|u_0(t)|^2$  of the central node of the DNLS \eqref{dnls_gl}.  The dashed (red) curve shows the same evolution for the density of the central node  $|\phi_{AL}(x_0,t,2.40,A_*)|^2$ of the analytical dPS solution \eqref{AL}.  Panel (b) depicts a comparison of the profile of the first extreme wave event for the DNLS \eqref{dnls_gl} (see panel (a) of Figure \ref{Fig9A}), against the profile of the analytical dPS solution \eqref{AL} with $q=A_*$, $t_0=2.40$. Bottom row:   Same comparisons as above but for the initial condition \eqref{loc_ic} with $f(x)$  \eqref{expini}.  In this case, the compared dPS solution is  $\phi_{AL}(x_0,t,3.30,A_*)$. More details in the text. }
\end{figure}
\begin{figure}[tbp!]
	\centering 
	\begin{tabular}{cc}
		(a)&\hspace{1cm}(b)\\		\includegraphics[scale=0.4]{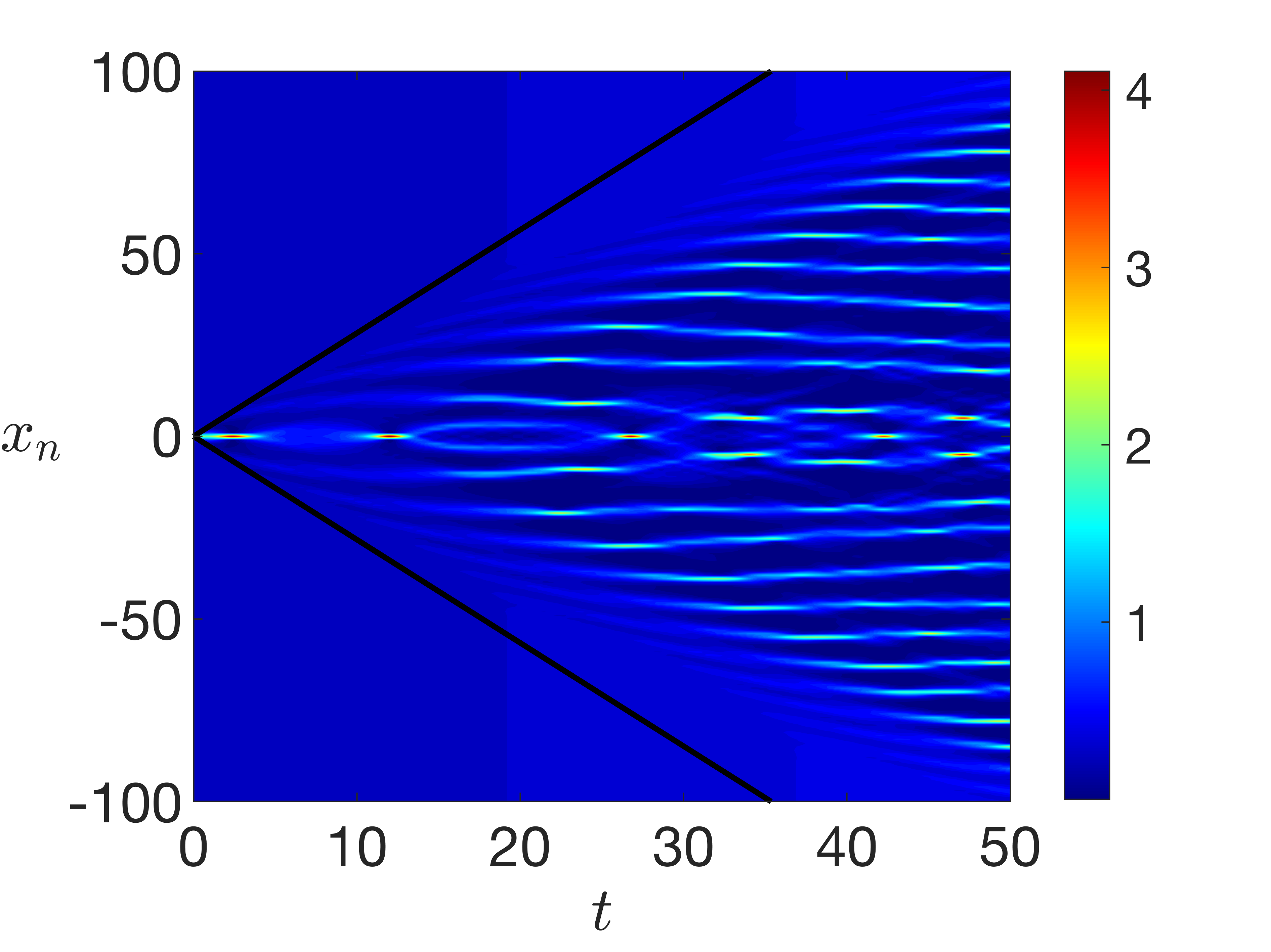}
		&
		\hspace{1cm}\includegraphics[scale=0.4]{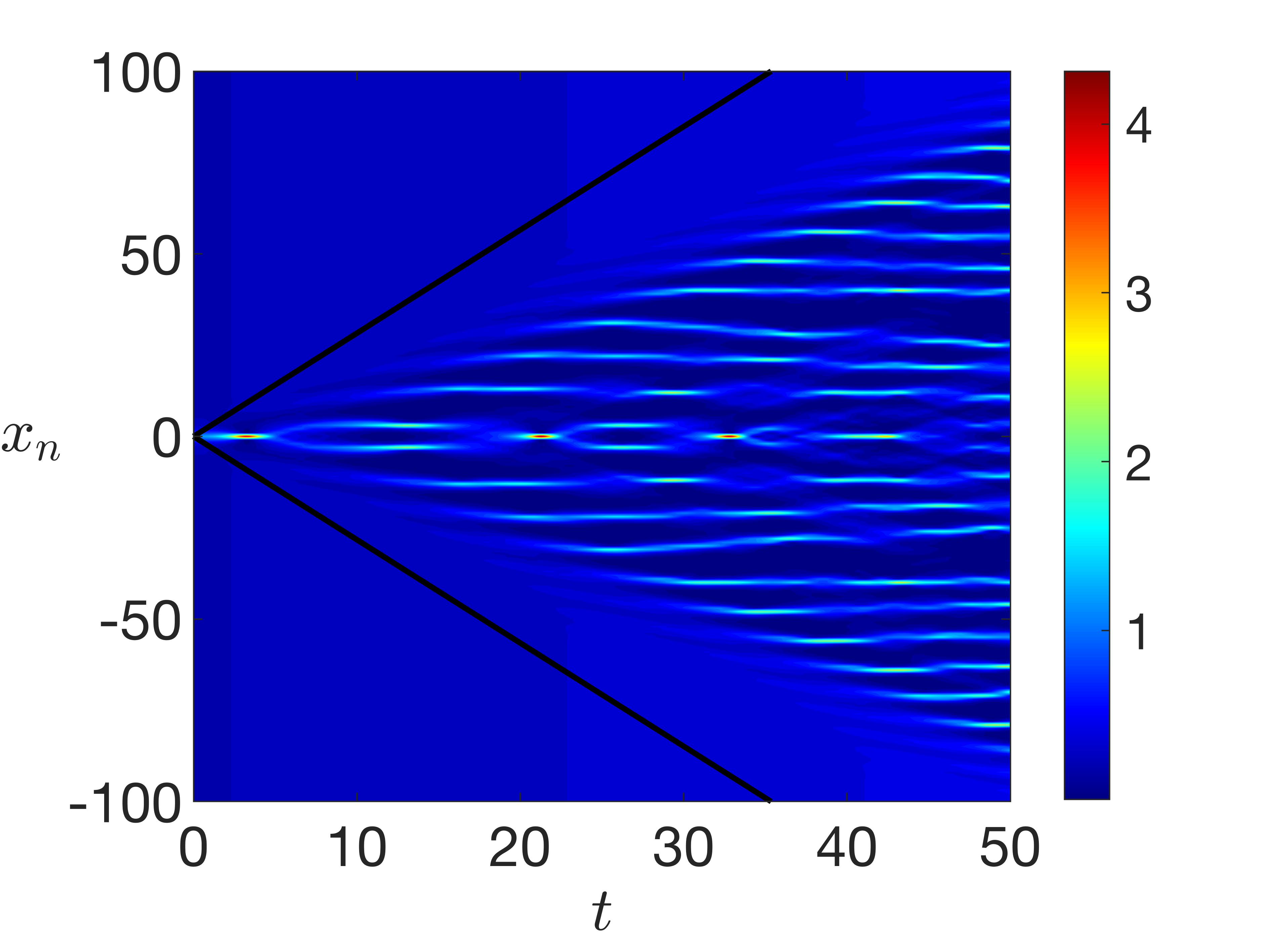}
	\end{tabular}
	\caption{Spatiotemporal evolution of the density $|u_n(t)|^2$ for the DNLS equation \eqref{dnls_gl} with initial conditions \eqref{loc_ic} with $A=0.5$ when $\gamma=0.01$, $\delta=-0.01$, i.e., $A\neq A_*$. Panel (a) for $f(x)$ \eqref{alg1} with $\lambda_1=\lambda_2=1$ and $\lambda_3=4$. Panel (b) for  $f(x)$ \eqref{expini} with $\sigma=0.6$, $\rho=1$.  Rest of the lattice parameters for both systems: $L=200$, $h=1$. The straight (black) lines depict the graphs of the lines $x=\pm 4\sqrt{2} A t$. More details in the text.}
	\label{Fig10A}
\end{figure}

The top row of Figure \ref{Fig9A} depicts the contour plots of the spatiotemporal evolution of the density $|u_n(t)|^2$ of the DNLS \eqref{dnls_gl}, for the initial conditions \eqref{loc_ic}, with $A=A_*$. Panel (a) depicts the dynamics for the quadratically decaying $f(x)$ \eqref{alg1} with $\lambda_1=\lambda_2=1$ and $\lambda_3=4$.  The choice of $\lambda_3=4$ is made so that the initial condition has the same spatial decaying rate as the discrete Peregrine soliton (dPS) \cite{APS}
\begin{equation}\label{AL}
\phi_{AL}(x_n, t, t_{0}, q) = q\Big( 1-\frac{4(1+q^2)(1+4\mathrm{i}q^2(t-t_0 ))}{1+4x_n^2q^2  + 16q^4 (1+q^2)(t-t_0 )^2}\Big) \exp\big(2\mathrm{i}q^2 (t-t_0 )\big),
\end{equation}
the rational analytical solution of the {\em integrable}
AL lattice
\begin{equation}\label{AL_eq}
	\rmi \dot{\phi}_n + k(\phi_{n+1} -2 \phi_n+\phi_{n-1}) + |\phi_n|^2 (\phi_{n-1} + \phi_{n+1}) = 0,
\end{equation}
in the case where $k=1$. We remark that this choice for $\lambda_i$ is not restrictive, since similar dynamics exhibited for various cases of $\lambda_i$. Actually, regarding the general structure of the pattern, the localization rate of the initial condition is proved to be irrelevant as it is shown in panel (b), where apart of the differences of the centered oscillations, almost  the same pattern is structured for the exponentially decaying initial condition $f(x)$ \eqref{expini}, with $\sigma=0.6$ and $\rho=1$. 

The patterns of the DNLS \eqref{dnls_gl} share major characteristics of the ones emerging in the dynamics the AL- lattice when the same initial conditions are used, as it is shown in the bottom row of Figure \ref{Fig9A}, where panel (c) corresponds to the initial condition with $f(x)$ \eqref{alg1} and panel (d) to the initial condition with $f(x)$ \eqref{expini}. In particular, the patterns of the AL system are specific examples, here in the case of the integrable AL-lattice, of the {\em universal behavior  of modulationally unstable media} \cite{blmt2018}. Let us recall that in \cite{blmt2018},  the long-standing open
question about the nonlinear stage of MI on the infinite line for the  integrable NLS equation 
\begin{equation}\label{NLS}
	\rmi u_t + u_{xx}+ |u|^{2}u=0,
\end{equation}
with non-zero boundary conditions, 
\begin{equation}\label{nvNLS}
	\lim_{|x|\rightarrow\infty} u (x,t) =A\exp(\rmi A^2t),\quad A>0
\end{equation}
was fully resolved. It was proved in \cite{bm2017}, that for {\em any given} initial condition
$u(x, 0)$ representing a sufficiently localized initial perturbation of the constant background
$A$,  the solution $u(x, t)$ of \eqref{NLS} is
$u(x, t) = u_{asymp}(x, t) + \mathrm{O}(1/\sqrt{t})$ as $t\rightarrow\infty$. Cruciallly, $u_{asymp}(x, t)$ has different forms
in different sectors of the $xt$-plane, namely: (i) in the two “plane-wave” regions,
$x < -4\sqrt{2}At$ and $x>4\sqrt{2}At$, 
we have that $|u_{asymp}(x, t)| = A$, that is, the solution has the
same amplitude as the {\em undisturbed} background $A$ (ii) in the “modulated elliptic wave” region $ -4\sqrt{2}At<x<4\sqrt{2}At$
the solution is expressed by a slow modulation of the elliptic solutions of \eqref{NLS}. In \cite{blmt2018}, analytical arguments corroborated with numerical simulations provided evidence that this behavior is universal, as it is shared by a wide class of NLS-type equations and systems, including the AL-lattice \eqref{AL_eq} and the Hamiltonian DNLS, which corresponds to the case of $\gamma=\delta=0$ of the dissipative DNLS \eqref{dnls_gl} (see \cite[Appendix C, pg. 903]{blmt2018}). The findings of \cite{blmt2018} regarding the universal behavior of MI for the DNLS systems, combined with the theoretical results proven herein, will serve as a roadmap for explaining our numerical results, as elaborated below. 

In this context, of the universality of the MI behavior, we claim that the findings depicted for the DNLS \eqref{dnls_gl} in Figure \ref{Fig9A} are important. First we should stress that the DNLS \eqref{dnls_gl} is a system at least ``two steps forward" regarding the breaking of the integrability barrier defined by the AL system, since it is a dissipative perturbation of the non-integrable Hamiltonian DNLS. Second, the theoretical results of Theorems \ref{TH1} and \ref{T1} accompanied with the MI analysis of the plane-wave of amplitude $A_*$ are fully explaining the dynamics observed in panels (a) and (b) if combined with the findings of \cite{blmt2018} mentioned above. The illustrated dynamics show that the breaking of integrability induced by the DNLS \eqref{dnls_gl} is not dramatic. It shares the major characteristics of the dynamics of the  AL-lattice and of the Hamiltonian DNLS observed in \cite{blmt2018}, that is, the structure comprised of  the two outer, quiescent sectors separated by the wedge-shaped central region characterized by the oscillatory behavior. The straight (black) lines depict the graphs of the lines $x=\pm 4\sqrt{2} A t$, which are the exact boundaries of the wedge in the case of the integrable NLS partial differential equation, calculated in \cite{bm2017}. It is crucial to remark that, according to the analysis of \cite{blmt2018}, the slopes of the boundaries are modified depending on the particular system considered; this modification also occurs in the case of the dissipative DNLS lattice \eqref{dnls_gl}. It is interesting to observe, however, that the slopes of the lines $x=\pm 4\sqrt{2} A t$ approximate quite well the slopes of the boundaries of the wedge in the case of the AL lattice (panels (c) and (d)). It could be of interest to investigate if this fact could be another manifestation of the integrability of the AL lattice despite its discrete nature. In the outer sectors, the amplitude of $u_n(t)$ is $A_*$, of the undisturbed background, the only one which may sustain localized solutions in the case of the infinite lattice. On the other hand, since the numerical solutions are generated by solving numerically the periodic lattice, this background will never lose stability in terms of the amplitude in the outer regions since the solutions are initiated on the stable attractor of the finite system regarding its amplitude. Yet for the finite dimensional system, the MI dynamics in the wedge-shaped central region can be only transient as it is proved in Theorem \ref{T1} and analyzed in the discussion of section  \ref{subsec1}. Importantly, in the case of the finite lattice the wedge will eventually disappear as the system will select a stable wave number defining the global attractor, and the asymptotic state of the solution will be the plane wave of amplitude $A_*$ with the stable wavenumber. However, in subsections C and D that follow, we provide estimates for the $\ell^2_{per}$- distance between the solutions of the AL lattice and the DNLS \eqref{dnls_gl}. As we will discuss therein, these estimates provide additional theoretical justification for the proximity of the dynamics between the AL lattice and the DNLS \eqref{dnls_gl}, {\em at least in the early stages of the MI behavior.}

Focusing in shorter time intervals, where the formation of the first events occur, it is interesting to observe that these are reminiscent of the dPS. This similarity is illustrated in Figure \ref{Fig9B}. In this figure, a comparison of the dynamics of the DNLS \eqref{dnls_gl}  against the analytical dPS solution \eqref{AL} of the AL-lattice \eqref{AL_eq} is made, for short times around the emergence of the first extreme events spotted in panels (a) and (b) of Figure \ref{Fig9A}.  Panels (a) and (b) correspond to the case of the initial condition \eqref{loc_ic} with the quadratically decaying $f(x)$ \eqref{alg1}, and panels (c) and (d) to the case of the sech-type $f(x)$ \eqref{expini}. Panels (a) and (c) depict the temporal evolution of the density of the central node $|u_0(t)|^2$  of DNLS \eqref{dnls_gl}, traced by the continuous (blue) curve.   The dashed (red) curves depict the same evolution for the density of the central node  $|\phi_{AL}(x_0,t, t_0,A_*)|^2$ of the analytical dPS solution \eqref{AL}.  For the algebraic $f(x)$ \eqref{alg1}, the time of occurrence of the rogue wave is $t_0=2.40$ and for the sech-type $f(x)$ \eqref{expini}, the time is $t_0=3.30$. These instants $t_0$ are used in the formula \eqref{AL} in order to define the dPS solutions which are used for the comparison.  We observe the remarkable similarity of the time growth and decay rates of the events, close to that of the analytical dPS solutions.   Panels (b) and (d) depict a comparison of the profiles of the first extreme wave events for the DNLS \eqref{dnls_gl}, against the profiles of the analytical dPS solution \eqref{AL} with $q=A_*$, and the corresponding times $t_0$ mentioned above. Again, the similarity of the profiles is remarkable, and importantly, the coincidence of the supporting backgrounds of the solutions $A=A_*$. It is interesting to observe that the profile emerging from the $\mathrm{sech}$-profiled initial condition is more proximal to the dPS analytical profile. 
\begin{figure}[tbp!]
	\centering 
	\begin{tabular}{cc}
		(a)&\hspace{0.7cm}(b)\\	
		\includegraphics[scale=0.4]{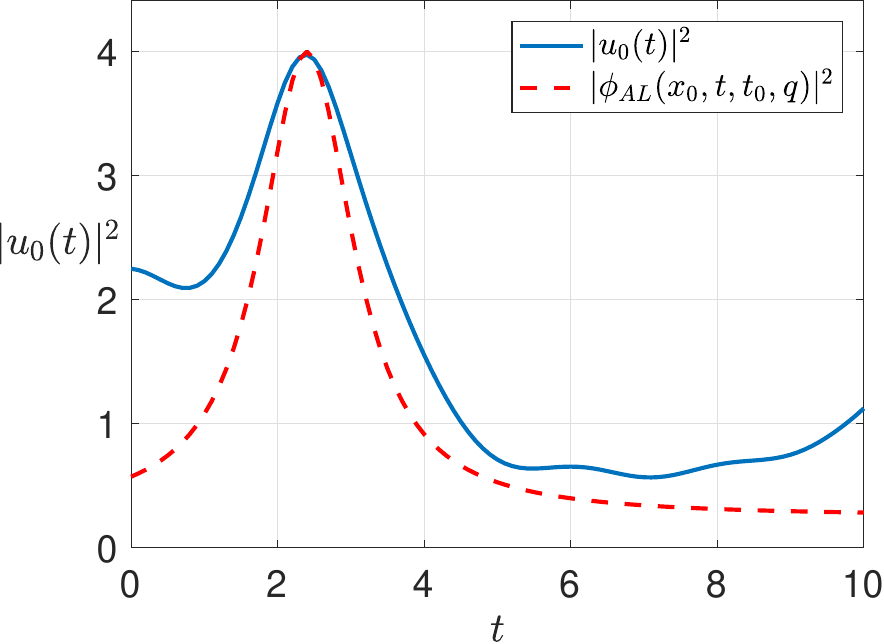}&
		\hspace{0.3cm}\includegraphics[scale=0.4]{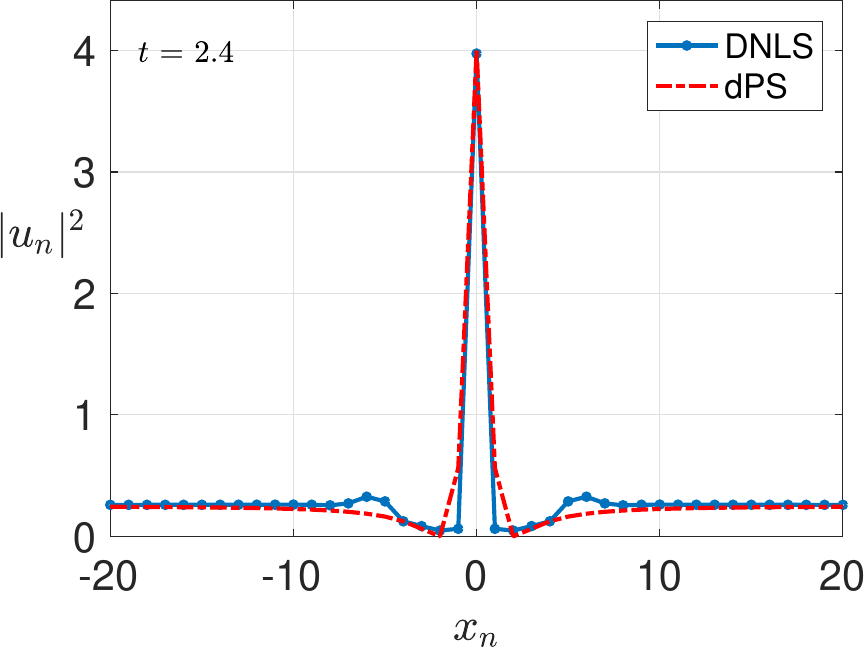}\\[5pt]
		(c)&\hspace{0.7cm}(d)\\
		\includegraphics[scale=0.4]{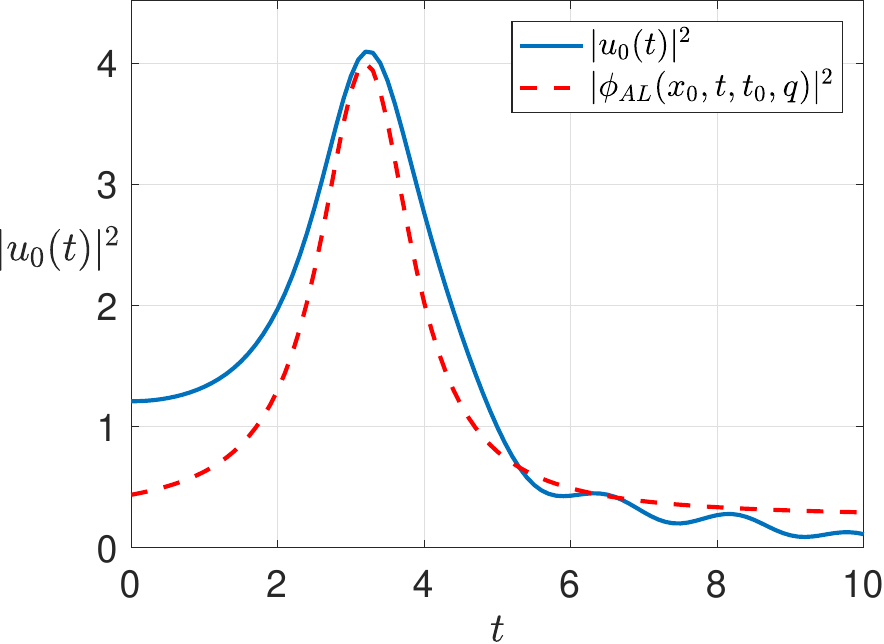}&
		\hspace{0.3cm}\includegraphics[scale=0.4]{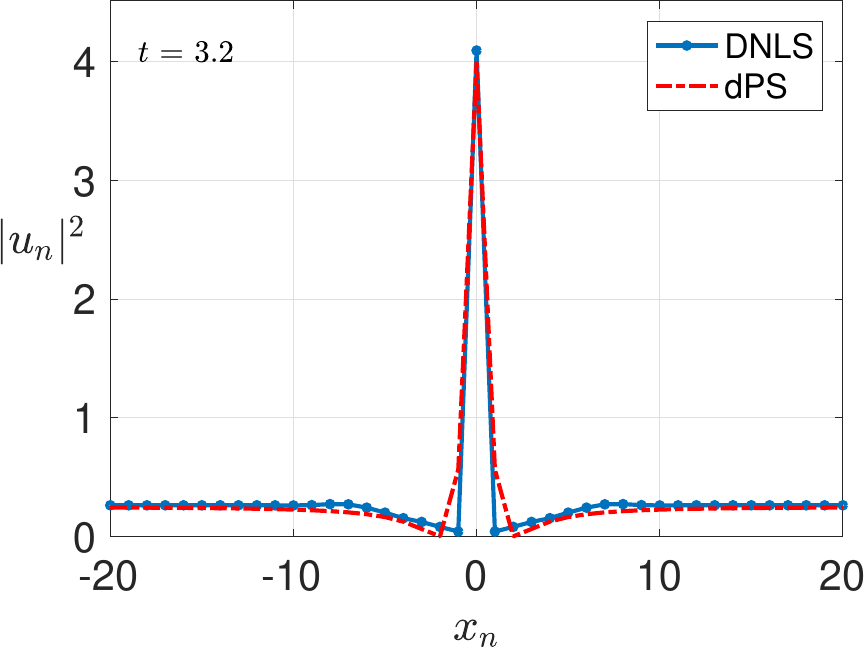}
	\end{tabular}
	\caption{
		\label{Fig11A} The counterpart of the study of Figure \ref{Fig9B} in the case $A\neq A_*$. Parameters of the DNLS lattice \eqref{dnls_gl} and initial conditions  as in Figure \ref{Fig10A}. Details in the text. }
\end{figure}
\paragraph{The case $A\neq A_*$.}  The parameters and initial conditions are as in the previous case, with the major difference that $\gamma=-\delta=0.01$ which define the value for $A_*=1$. Therefore, the initial conditions are on top of the background with amplitude  $A=0.5\neq A_*$. {\em It is crucial to remark that the results are not relevant for the infinite lattice with nonzero boundary conditions since according to Theorem \ref{T1}, solutions do not exist when $A\neq A_*$.}   However, we see in Figure \ref{Fig10A}, that for finite times, the characteristic MI pattern is present. The important feature is that since the initial condition is on the top of $A\neq A_*$,  the wedge-shaped region is not formed on an undisturbed plane wave region. In full agreement with the theoretical analysis, since $A<A^*$, the wedge is developed on a background of increasing amplitude according to the ode solution  \eqref{eqf}; this is evident by the change of ''shading'' in the outer regions of the wedge.  The background eventually converges to that of the global attractor $A^*$, accompanied by the disappearance of the wedge, as predicted theoretically by Theorem \ref{TH1}. 

However, since we have chosen $A=0.5<A_*=1$, for short times prior those for which the effect of the increasing amplitude will become significant, we expect a similarity with the dynamics illustrated in Figure \ref{Fig9B} for the case $A=0.5=A_*$. This fact is shown in Figure \ref{Fig11A} which illustrates the same comparison of the first emerging extreme wave events spotted in Figure \ref{Fig10A} against the analytical dPS solutions. The dynamics are found to be almost indistinguishable to the one presented in Figure \ref{Fig9B}. Nevertheless, when examining the dynamics for larger times, we can clearly see a distinction between the scenario where  $A=0.5=A_*$ and the scenario where $A=0.5\neq A_*=1$. To illustrate this, we focus on the sech-type initial conditions \eqref{expini} and refer to Figures \ref{Fig9A}b and \ref{Fig10A}b. Panel (a) of Figure \ref{Fig_new} shows a snapshot of the solution at $t=38.5$ when $A=0.5=A_*$. It is evident that the amplitude of the background $A_*$ is preserved in full accordance with Theorem \ref{TH1}.  In contrast, in panel (b) of Figure \ref{Fig_new} which corresponds to the case $A=0.5\neq A_*=1$ and shows a snapshot of the solution at $t=32.8$, the background of the solution is elevated, in accordance with Theorem \ref{T1}.

Thus, we may expect that two systems with initial conditions \eqref{loc_ic} with  $A_1=A_2$  may exhibit similar dynamics for short time intervals when $A_*=A_1$ and $A_*\neq A_2$ which defer for large times, particularly regarding their long-time asymptotics. Accordingly, we may expect a similarity of the dynamics of both of the above systems to the dynamics of the AL-lattice for short-time intervals. We will discuss this issue theoretically, in the next section.

\begin{figure}[tbp!]
	\centering 
	\begin{tabular}{cc}
		(a)&\hspace{0.7cm} (b)\\
		\includegraphics[scale=0.4]{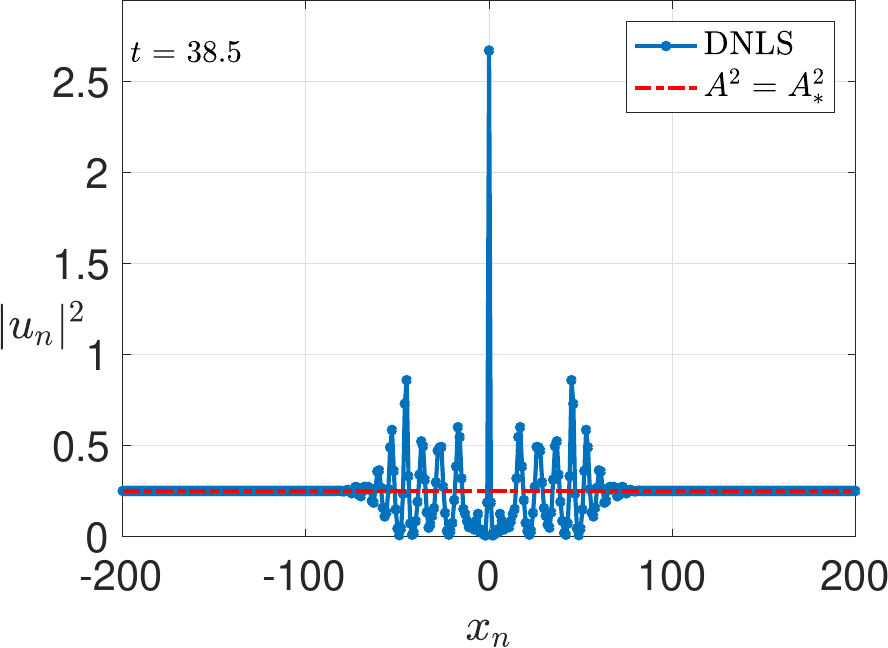}&
		\hspace{0.3cm}
		\includegraphics[scale=0.4]{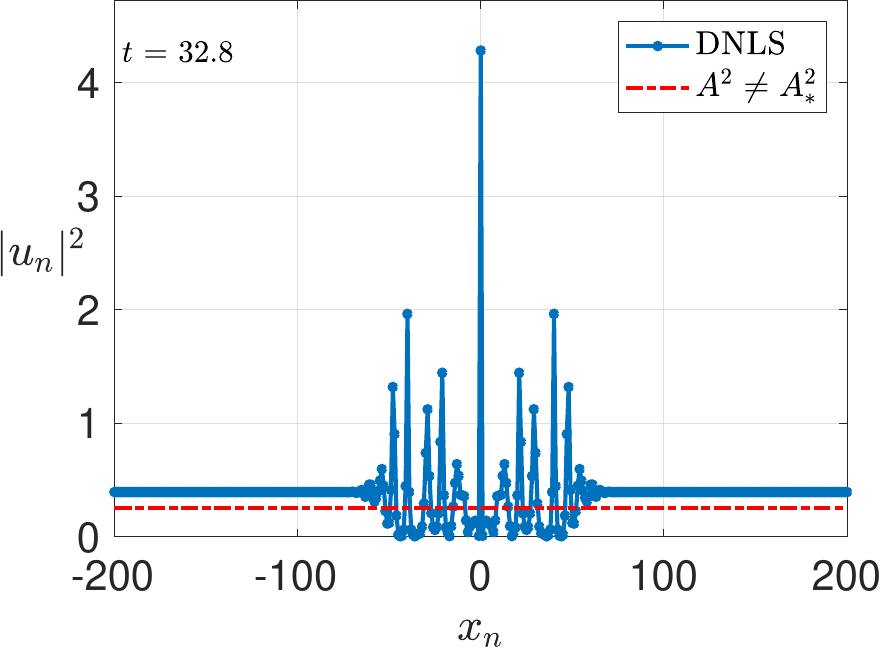}
	\end{tabular}
	\caption{
		\label{Fig_new} The profile of the density $|u_n(t)|^2$ (blue curve) of the third extreme event of the DNLS \eqref{dnls_gl} with  initial conditions \eqref{loc_ic} and $f(x)$ \eqref{expini} with $\sigma=0.6$, $\rho=1$.  The dashed (red) curve shows the background of the initial condition for  $A=0.5$.  Panel (a): $A_*= A$ and $t=38.5$.  Panel (b): $A_*= 1$ and $t=32.8$. }
\end{figure}
\subsection{Return to the analytical considerations: measuring the distance between the solution the DNLS \eqref{dnls_gl} and of the Ablowitz-Ladik lattice}
The following results provide a theoretical consideration of the numerical observations on the distance between the analytical rogue wave solution of the AL-lattice and the rogue wave alike structure emerged as a first event in the dynamics of the DNLS \eqref{dnls_gl}.  First, we prove the following property of the solutions of the AL-lattice \eqref{AL_eq} when supplemented with the periodic boundary conditions \eqref{eq02}. 
\begin{lemma}
	\label{ALper}
	For the initial condition $\phi_n(0)\in\ell^2_{per}$ of the AL-lattice \eqref{AL_eq}, we consider the quantity
	\begin{equation}
		\label{aux0}
		\mathcal{N}(0)=h\sum_{n=0}^{N-1}\ln(1+|\phi_n(0)|^2)< \infty.
	\end{equation}
	Then, the corresponding solution of the AL-lattice satisfies the estimate

	\begin{equation}
		\| \phi(t) \|_{\ell^2_{per}}^2 \le h\exp(h^{-1}\mathcal{N}(0))-h,\qquad \forall t\ge 0.\label{nAL}
	\end{equation}
\end{lemma}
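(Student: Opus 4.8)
The plan is to exploit the well-known fact that the Ablowitz--Ladik lattice possesses the conserved quantity $\mathcal{N}(t)=h\sum_{n=0}^{N-1}\ln(1+|\phi_n(t)|^2)$, and then to pass from the control of $\mathcal{N}(t)$ to a control of the $\ell^2_{per}$-norm by means of an elementary pointwise inequality. So the first step is to establish the conservation law $\frac{d}{dt}\mathcal{N}(t)=0$. I would differentiate $\ln(1+|\phi_n|^2)$ in time, writing $\frac{d}{dt}\ln(1+|\phi_n|^2)=\frac{\dot\phi_n\overline{\phi}_n+\phi_n\dot{\overline\phi}_n}{1+|\phi_n|^2}=\frac{2\,\mathrm{Re}(\overline{\phi}_n\dot\phi_n)}{1+|\phi_n|^2}$, substitute $\rmi\dot\phi_n=-k(\phi_{n+1}-2\phi_n+\phi_{n-1})-|\phi_n|^2(\phi_{n-1}+\phi_{n+1})$ from \eqref{AL_eq}, and observe that the nonlinear term combines with the $1+|\phi_n|^2$ denominator so that $\frac{2\,\mathrm{Re}(\overline\phi_n\dot\phi_n)}{1+|\phi_n|^2}=2k\,\mathrm{Im}\!\left(\frac{\overline\phi_n(\phi_{n+1}+\phi_{n-1})}{1+|\phi_n|^2}\right)+2\,\mathrm{Im}\!\big(\overline\phi_n(\phi_{n+1}+\phi_{n-1})\big)$; summing over a period and using a discrete summation by parts (shifting indices, which is legitimate under periodic boundary conditions \eqref{eq02}) shows the total telescopes to zero. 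Hence $\mathcal{N}(t)=\mathcal{N}(0)$ for all $t\ge 0$ for which the solution exists; since $\ell^2_{per}$ is finite-dimensional and the right-hand side of \eqref{AL_eq} is locally Lipschitz, and the conserved $\mathcal{N}$ together with the bound below precludes blow-up, the solution is in fact global.

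The second step is the pointwise inequality. For any finite sequence of nonnegative numbers $a_0,\dots,a_{N-1}$ one has $1+\sum_n a_n\le \prod_n(1+a_n)$, simply by expanding the product. Applying this with $a_n=|\phi_n(t)|^2$ gives
\begin{equation}
1+\frac{1}{h}\|\phi(t)\|_{\ell^2_{per}}^2=1+\sum_{n=0}^{N-1}|\phi_n(t)|^2\le\prod_{n=0}^{N-1}\big(1+|\phi_n(t)|^2\big)=\exp\!\Big(\sum_{n=0}^{N-1}\ln(1+|\phi_n(t)|^2)\Big)=\exp\!\big(h^{-1}\mathcal{N}(t)\big).
\end{equation}
Combining this with the conservation law $\mathcal{N}(t)=\mathcal{N}(0)$ and rearranging yields $\|\phi(t)\|_{\ell^2_{per}}^2\le h\exp(h^{-1}\mathcal{N}(0))-h$ for all $t\ge 0$, which is exactly \eqref{nAL}. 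The finiteness hypothesis \eqref{aux0} guarantees the right-hand side is finite.

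I do not expect a serious obstacle here; the only point requiring care is the algebraic verification that the nonlinear AL term $|\phi_n|^2(\phi_{n-1}+\phi_{n+1})$ is precisely what makes $\mathcal{N}$ (rather than the bare $\ell^2$ norm) conserved — this is the classical ``logarithmic'' conserved quantity of the AL hierarchy, and the computation must be done honestly to see the cancellation after summation by parts. A second minor point is to note that global existence is not assumed a priori but follows a posteriori: the a priori bound \eqref{nAL} is itself what rules out finite-time escape to infinity for this finite-dimensional ODE system, so the estimate holds on the maximal interval, which is therefore $[0,\infty)$. One could alternatively cite the standard integrability literature for the conservation law and keep the proof to the two displayed inequalities.
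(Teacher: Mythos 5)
Your proposal is correct and arrives at exactly the bound \eqref{nAL}, but the key step is carried out by a genuinely different (and more elementary) route than the paper's. The paper also starts from the conservation of $\mathcal{N}(t)$ (which it does not re-derive but cites from the integrability literature \cite{PM1}), and then passes from $\mathcal{N}$ to the $\ell^2_{per}$-norm by writing $|\phi_n|^2=\exp(|\lambda_n|^2)-1$ with $\lambda_n^2=\ln(1+|\phi_n|^2)$, expanding the exponential as a power series, and estimating each term $h\sum_n|\lambda_n|^{2j}$ via the finite-dimensional norm equivalence \eqref{equi}, which resums to $h\exp(h^{-1}\mathcal{N})-h$. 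You instead invoke the single product inequality $1+\sum_n a_n\le\prod_n(1+a_n)$ with $a_n=|\phi_n|^2$ and exponentiate $\mathcal{N}$ directly; this gives the identical constant in one line and is arguably cleaner. Your a posteriori global-existence remark is a worthwhile addition that the paper leaves implicit. One caution on your sketch of the conservation law: the decomposition you display, $2k\,\mathrm{Im}\bigl(\overline\phi_n(\phi_{n+1}+\phi_{n-1})/(1+|\phi_n|^2)\bigr)+2\,\mathrm{Im}\bigl(\overline\phi_n(\phi_{n+1}+\phi_{n-1})\bigr)$, does not telescope for general $k$, because the first sum carries the $n$-dependent weight $(1+|\phi_n|^2)^{-1}$. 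The clean cancellation occurs because $2\,\mathrm{Re}(\overline\phi_n\dot\phi_n)=-2(k+|\phi_n|^2)\,\mathrm{Im}\bigl(\overline\phi_n(\phi_{n+1}+\phi_{n-1})\bigr)$, so that dividing by $1+|\phi_n|^2$ removes the weight entirely only when $k=1$ (the case actually used in the paper's comparisons); for general $k$ the conserved quantity involves $\ln(k+|\phi_n|^2)$ instead. Since the paper simply cites the conservation law, this does not affect the validity of your argument for the case at hand, but the sketch should be fixed or replaced by the citation.
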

\begin{proof} The fact that $\mathcal{N}(0)$ is finite follows by the elementary inequality $\ln(1+x)\leq x$ for all $x>0$ which implies that
	\begin{equation}
		\label{AL01}
		\mathcal{N}(0)\leq h\sum_{n=0}^{N-1}|\phi_n(0)|^2=\|\phi(0)\|^2_{\ell^2_{per}}
	\end{equation}
Setting $\lambda_n^2(t)=\ln(1+|\phi_n(t)|^2)$, we write $\mathcal{N}(t)$ in the form
	\begin{equation}
		\label{auxeq1}
		\mathcal{N}(t)= h\sum_{n=0}^{N-1}\ln(1+|\phi_n(t)|^2)=h\sum_{n=0}^{N-1}\lambda^2_n(t).
	\end{equation}
	We recall that $\mathcal{N}(t)$ is one of the conserved quantities of the AL-lattice and has this property still in the case of the periodic boundary conditions \eqref{eq02}, \cite{PM1}.  From \eqref{auxeq1}, and by using the equivalence of norms \eqref{equi} for $q=2j$ and $p=2$, which implies that  
	\begin{equation*}
		h\sum_{n=0}^{N-1}|\lambda_n|^{2j}\leq h^{1-j} \bigg(h\sum_{n=0}^{N-1}|\lambda_n|^{2}\bigg)^j,
	\end{equation*}
	we get the estimate:
	\begin{align}
		h\sum_{n=0}^{N-1}|\phi_n|^2&=h\sum_{n=0}^{N-1}\big(\exp(|\lambda_n|^2)-1\big)=\sum_{n=0}^{N-1}\,\bigg(\sum_{j=0}^{\infty}\frac{|\lambda_n|^{2j}}{j!}-1\bigg)\nonumber\\
		&=h\sum_{n=0}^{N-1}\,\sum_{j=1}^{\infty}\frac{|\lambda_n|^{2j}}{j!}=\sum_{j=1}^{\infty}\frac{h}{j!}\,\sum_{n=0}^{N-1}|\lambda_n|^{2j}\nonumber\\
		&\le\sum_{j=1}^{\infty}\frac{h^{1-j}}{j!}\,\bigg(\sum_{n=0}^{N-1}h|\lambda_n|^{2}\bigg)^j=h\sum_{j=1}^{\infty}\frac{\left(h^{-1}\mathcal{N}\right)^j}{j!}\nonumber\\
		&=h\bigg(\sum_{j=0}^{\infty}\frac{\left(h^{-1}\mathcal{N}\right)^j}{j!}-1\bigg)=h\exp(h^{-1}\mathcal{N})-h.
	\end{align}
	From the conservation of $\mathcal{N}$, that is $\mathcal{N}(t)=\mathcal{N}(0)$ for all $t\geq 0$, it follows that
	\begin{equation*}
		h\sum_{n=0}^{N-1}|\phi_n(t)|^2\le h\exp(h^{-1}\mathcal{N}(0))-h,\qquad \forall t\ge 0,
	\end{equation*}
	which is the claimed estimate \eqref{nAL}.
\end{proof}

\begin{theorem}
	\label{proximityT}
	Let $u_n(0),\;\phi_{n}(0)\in\ell^2_{per}$ be the initial conditions of the DNLS \eqref{dnls_gl} and of the  AL-lattice \eqref{AL_eq}, respectively.  For the initial condition of the  DNLS \eqref{dnls_gl} we assume additionally that 
	\begin{equation}
		\label{estini}
		P_a[u(0)]:=\frac{1}{N}\sum_{n=0}^{N-1}|u_n(0)|^2< A_*^2.
	\end{equation}
	Then, the distance $\|Y(t)\|_{\ell^2_{per}}=\|u(t)-\phi(t)\|_{\ell^2_{per}}$  satisfies for all $t\geq 0$, the following estimates:
	\begin{itemize}
		\item {\em estimate I:} 
		\begin{equation}
			\label{proxest1}
			\|u(t)-\phi(t)\|_{\ell^2_{per}}\leq  \| u(0)-\phi(0)\|_{\ell^2_{per}} + \gamma F_1(t)+\big(\sqrt{\delta^2 + 1}\big)F_2(t)+2\left[\exp(\mathcal{N}(0))-1\right]^{3/2}t:=\mathcal{F}(t),\quad
		\end{equation}
		where the functions $F_1(t),F_2(t)$ are defined explicitly as
		\begin{align*}
			F_1(t)&=\int_{0}^{t}\sqrt{B(s)}ds=\frac{1}{\sqrt{\beta\gamma}}\ln\Big(\frac{\sqrt{\gamma\nu}-\sqrt{\beta}}{\sqrt{\beta}e^{-\gamma t}+\sqrt{(e^{2\gamma t}-1)\beta+\gamma\nu}}\Big),\\
			F_2(t)&=\int_{0}^{t}B(s)^{3/2}ds\nonumber\\
			&=\frac{1}{\beta^{3/2}}\bigg[\sqrt{\gamma}\mathrm{ArcSinh}\Big(\frac{\sqrt{\beta}e^{\gamma t}}{\sqrt{\nu\gamma-\beta}}\Big)-\frac{\sqrt{\beta\gamma}e^{\gamma t}}{\sqrt{\beta(e^{2\gamma t}-1)+\gamma\nu}}-\sqrt{\gamma}\mathrm{ArcSinh}\Big(\frac{\sqrt{\beta}}{\sqrt{\nu\gamma-\beta}}\Big)+\frac{\sqrt{\beta}}{\sqrt{\nu}}\bigg],\\
			\nu&=\|u(0)\|_{\ell^2_{per}}^{-2},\quad \beta=\frac{\tilde{\delta}}{Nh}=-\frac{\delta}{Nh}>0.
		\end{align*}
		\item {\em estimate II:} 
		\begin{align}
			\label{proxest3}
			\|u(t)-\phi(t)\|_{\ell^2_{per}}&\leq \| u(0)-\phi(0)\|_{\ell^2_{per}}+ \alpha t:=\mathcal{F}_b(t),\\
			\label{proxest4}
			\alpha &=	\gamma A_*\sqrt{Nh} + \big(\sqrt{\delta^2 + 1}\big)\sqrt{h}A_*^3N^{3/2} + 	2\sqrt{h}\left[\exp(h^{-1}\mathcal{N}(0))-1\right]^{3/2}	.
		\end{align}
	\end{itemize}
\end{theorem}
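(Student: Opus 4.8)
The plan is a direct energy estimate in $\ell^2_{per}$ for the difference $Y_n:=u_n-\phi_n$, treating the Ablowitz--Ladik nonlinearity, the linear gain $\gamma u_n$, and the nonlinear loss $\delta|u_n|^2u_n$ as inhomogeneous forcing terms whose $\ell^2_{per}$-norms are controlled a priori. First I would record that both solutions are globally defined: $u\in C^1([0,\infty),\ell^2_{per})$ by Theorem~\ref{T1}, while $\phi$ exists for all $t\ge0$ because on the finite-dimensional space $\ell^2_{per}$ the AL nonlinearity is locally Lipschitz and the conserved quantity $\mathcal{N}(t)\equiv\mathcal{N}(0)<\infty$ gives, through Lemma~\ref{ALper}, the uniform bound $\|\phi(t)\|_{\ell^2_{per}}^2\le h\big(\exp(h^{-1}\mathcal{N}(0))-1\big)$; in particular $\max_n|\phi_n(t)|^2\le\exp(h^{-1}\mathcal{N}(0))-1=:M$.

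Next I would subtract the two lattice equations \eqref{dnls_gl} and \eqref{AL_eq} written in the form $\dot u_n=\rmi\Delta_d u_n+(\rmi+\delta)|u_n|^2u_n+\gamma u_n$ and $\dot\phi_n=\rmi\Delta_d\phi_n+\rmi|\phi_n|^2(\phi_{n-1}+\phi_{n+1})$. The one bookkeeping point that matters is to keep the DNLS cubic term and the nonlinear-loss term combined as $(\rmi+\delta)|u_n|^2u_n$, because $|\rmi+\delta|=\sqrt{1+\delta^2}$ is exactly the constant that appears in the statement. Taking the $\ell^2_{per}$-inner product of the resulting equation for $Y$ with $Y$, the self-adjoint discrete-Laplacian term contributes nothing to the real part (exactly as in the derivation of the power-balance law \eqref{balance_eq}), and Cauchy--Schwarz on the three remaining terms gives, after the standard passage from $\|Y\|^2$ to $\|Y\|$,
\begin{equation*}
\frac{d}{dt}\|Y\|_{\ell^2_{per}}\le\gamma\|u\|_{\ell^2_{per}}+\sqrt{1+\delta^2}\,\bigl\|\,|u|^2u\,\bigr\|_{\ell^2_{per}}+\bigl\|\,|\phi|^2(\phi_{\cdot-1}+\phi_{\cdot+1})\,\bigr\|_{\ell^2_{per}}.
\end{equation*}
Integrating in time already produces the common shape $\|Y(t)\|_{\ell^2_{per}}\le\|Y(0)\|_{\ell^2_{per}}+\int_0^t(\cdots)\,ds$ of both estimates, so it remains only to bound the three integrands.

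For the AL term, the bound on $M$ together with $|\phi_{n-1}+\phi_{n+1}|^2\le2(|\phi_{n-1}|^2+|\phi_{n+1}|^2)$ and a reindexing yields $\bigl\|\,|\phi|^2(\phi_{\cdot-1}+\phi_{\cdot+1})\,\bigr\|_{\ell^2_{per}}\le2\sqrt h\,M^{3/2}$, the origin of the $t$-linear term in both estimates. For the DNLS terms there are two routes. The coarse one uses that \eqref{estini} propagates --- since $\tilde{\delta}/\gamma=A_*^{-2}$, inequality \eqref{eq27} forces $P_a[u(t)]<A_*^2$ for all $t$ --- hence $\max_n|u_n(t)|^2\le\sum_m|u_m(t)|^2=NP_a[u(t)]<NA_*^2$, which gives the uniform bounds $\|u(t)\|_{\ell^2_{per}}\le A_*\sqrt{Nh}$ and $\bigl\|\,|u(t)|^2u(t)\,\bigr\|_{\ell^2_{per}}\le\sqrt h\,N^{3/2}A_*^3$; inserting these constants and integrating yields \emph{estimate~II} with the displayed $\alpha$. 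The finer route retains the full time-dependent bound \eqref{eq27}, which states precisely that $\|u(t)\|_{\ell^2_{per}}^2\le B(t):=\gamma/\big(\beta+(\gamma\nu-\beta)e^{-2\gamma t}\big)$ with $\nu=\|u(0)\|_{\ell^2_{per}}^{-2}$ and $\beta=\tilde{\delta}/(Nh)$; here \eqref{estini} is equivalent to $\gamma\nu-\beta>0$, which makes $B$ positive and decreasing from $\|u(0)\|_{\ell^2_{per}}^2$ to $NhA_*^2$ and --- crucially --- makes the radicands $\sqrt{\gamma\nu-\beta}$ below real. Using the norm equivalence \eqref{equi} in the form $\bigl\|\,|u|^2u\,\bigr\|_{\ell^2_{per}}=\|u\|_{\ell^6_{per}}^3\le\|u\|_{\ell^2_{per}}^3\le B(t)^{3/2}$, integration of $\frac{d}{dt}\|Y\|\le\gamma\sqrt{B(t)}+\sqrt{1+\delta^2}\,B(t)^{3/2}+2\sqrt h\,M^{3/2}$ yields \emph{estimate~I} with $F_1(t)=\int_0^t\sqrt{B(s)}\,ds$ and $F_2(t)=\int_0^tB(s)^{3/2}\,ds$.

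The one genuinely laborious step --- and the one I would single out as the main obstacle, although it is mechanical rather than conceptual --- is producing the explicit closed forms of $F_1$ and $F_2$. Under the substitution $w=e^{-2\gamma s}$ both reduce to elementary integrals: $F_1$ to a constant multiple of $\int dw/\big(w\sqrt{\beta+(\gamma\nu-\beta)w}\big)$, which is a logarithm, and $F_2$ to a combination of that integral with $\int\big(\beta+(\gamma\nu-\beta)w\big)^{-3/2}\,dw$, which introduces the $\mathrm{ArcSinh}$ terms; fixing the constants of integration so that $F_1(0)=F_2(0)=0$ then gives the formulas in the statement. The positivity $\gamma\nu-\beta>0$ guaranteed by \eqref{estini} is precisely what makes these expressions real and finite for all $t\ge0$, and is the one point in the whole argument where the hypothesis \eqref{estini} is indispensable rather than merely convenient.
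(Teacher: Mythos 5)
Your proposal follows essentially the same route as the paper's proof: subtract the two lattice equations keeping the DNLS cubic and the nonlinear loss combined into a single term of modulus $\sqrt{1+\delta^2}$, take the $\ell^2_{per}$ inner product with $Y$ so that the discrete Laplacian drops out, bound the AL forcing uniformly via Lemma \ref{ALper}, and bound the DNLS terms either by the uniform bound $NhA_*^2$ (giving estimate II) or by the time-dependent bound $B(t)$ from \eqref{eq27} (giving estimate I after the explicit integration of $\sqrt{B}$ and $B^{3/2}$). The only cosmetic difference is that you control $\|\,|u|^2u\,\|_{\ell^2_{per}}$ through the $\ell^6_{per}$ embedding rather than through $\sup_n|u_n|^2\le h^{-1}\|u\|^2_{\ell^2_{per}}$ as in the paper's inequality \eqref{dist5}, which costs a factor $h^{-1}$ that your write-up (like the paper's own statement of estimate I, though not its estimate II) does not track.
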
	
\begin{proof}
	We derive the equation for $Y_n(t)=u_n(t)-\phi_n(t)$, by subtracting the equations \eqref{dnls_gl}-\eqref{AL_eq}. Then, $Y_n(t)$ satisfies 

	\begin{equation}\label{main}
		\rmi\dot{Y}_n +  \Delta_d Y_n =  \rmi \gamma u_n + (\rmi\delta - 1) |u_n|^2 u_n  +  |\phi_n|^2 (\phi_{n-1} + \phi_{n+1}).
	\end{equation}
	We multiply the equation \eqref{main} by $\overline{Y}_n$ in the $\ell^2_{per}$-inner product.  By keeping the imaginary  parts of the resulting equation, we get the balance law for $Y$,
	\begin{equation}\label{im_eq}
		\frac{1}{2} \frac{d}{dt} \|Y\|_{\ell^2_{per}}^2  =
		\gamma h\sum_{n=0}^{N-1} u_n \overline{Y}_n + h\mathrm{Im}\Big\{ (\rmi\delta - 1)\sum_{n=0}^{N-1} |u_n|^2 u_n \overline{Y}_n  \Big\} + h\,\mathrm{Im} \Big\{\sum_{n=0}^{N-1} |\phi_n|^2 (\phi_{n-1} + \phi_{n+1})\overline{Y}_n  \Big\}.  
	\end{equation}
	Using the  Cauchy--Schwarz inequality, the first term  at the right-hand side of \eqref{im_eq} can be estimated as
	\begin{equation}
		\label{dist1}
		h\sum_{n=0}^{N-1} u_n \overline{Y}_n=\sum_{n=0}^{N-1} \sqrt{h}u_n\sqrt{h} \overline{Y}_n  \le  \|u\|_{\ell^2_{per}}\|Y\|_{\ell^2_{per}}.
	\end{equation}
	For the second  term of the right-hand side of \eqref{im_eq} we have the estimate:
	\begin{align}
		\label{dist2}
		h\mathrm{Im}\Big\{ (\rmi\delta-1)\sum_{n=0}^{N-1} |u_n|^2|u_n| |\overline{Y}_n | \Big\}  & \le  \big(\sqrt{\delta^2 + 1}\big)\big(h^{-1}h\sup_{0\leq n\leq N-1}|u_n|^2\big) h\sum_{n=0}^{N-1}|u_n| |Y_n| \nonumber\\
		&\le  \big(\sqrt{\delta^2 + 1}\big)h^{-1}\|u\|^2_{\ell^2_{per}} h\sum_{n=0}^{N-1}|u_n||Y_n| \\
		&\le \big(\sqrt{\delta^2 + 1}\big)h^{-1}\|u\|_{\ell^2_{per}}^3\|Y\|_{\ell^2_{per}}.\nonumber
	\end{align}
	For the third term, let us note that due to the periodic boundary conditions \eqref{eq02}, we have that
	\begin{equation}
		\label{dist3}
		\sum_{n=0}^{N-1}|\phi_{n-1}|^2=\sum_{n=0}^{N-1}|\phi_{n+1}|^2=\sum_{n=0}^{N-1}|\phi_{n}|^2.
	\end{equation}
	Then,  by using \eqref{dist3}, the third term of the right-hand side of \eqref{im_eq}  can be  estimated as follows:
	\begin{align}
		\label{dist4}
		&h\mathrm{Im}\Big\{ \sum_{n=0}^{N-1}|\phi_n|^2 |\phi_{n-1} + \phi_{n+1}| |\overline{Y}_n| \Big\}  \le  
		\Big(h^{-1}h\sup_{0\leq n\leq N-1}|\phi_n|^2\Big)\Big\{h\sum_{n=0}^{N-1}|\phi_{n-1}|\,|Y_n|+h\sum_{n=0}^{N-1}|\phi_{n+1}|\,|Y_n|\Big\}
		\nonumber\\ 
		& \le h^{-1}\|\phi\|^2_{\ell^2_{per}}\bigg\{\Big(h\sum_{n=0}^{N-1}|\phi_{n-1}|^2\Big)^{1/2}\Big(h\sum_{n=0}^{N-1}|Y_n|^2\Big)^{1/2} + \Big(h\sum_{n=0}^{N-1}|\phi_{n+1}|^2\Big)^{1/2}\Big(h\sum_{n=0}^{N-1}|Y_n|^2\Big)^{1/2}\bigg\}
		\\
		& \le 
		2h^{-1}\|\phi\|_{\ell^2_{per}}^3\|Y\|_{\ell^2_{per}}.\nonumber
	\end{align}
	Now,  using the estimates \eqref{dist1}, \eqref{dist2} and \eqref{dist4} together with the equation
	$$
	\frac{1}{2}\frac{d}{dt}  \|Y\| _{\ell^2_{per}}^2 =  \| Y \| _ {\ell^2_{per}} \frac{d}{dt} \| Y \|_ {\ell^2_{per}} ,
	$$
	we derive from \eqref{im_eq}, the differential inequality
	\begin{align}\label{dist5}
		\frac{d}{dt}\| Y\|_ {\ell^2_{per}}  \le 
		\gamma \|u\|_{\ell^2_{per}} + \big(\sqrt{\delta^2 + 1}\big)h^{-1}\|u\|_{\ell^2_{per}}^3 + 	2h^{-1}\|\phi\|_{\ell^2_{per}}^3.
	\end{align}
	{\em Proof of estimate I:} We bound from above the right hand side of \eqref{dist5} by  using the estimate 
	\begin{equation}\label{eq27l}
		\|u(t)\|^2_{\ell^2_{per}} \leq \frac{\gamma}{\gamma\exp(-2\gamma t)\|u(0)\|^{-2}_{\ell^2_{per}} + \frac{\tilde{\delta}}{Nh}[1-\exp(-2\gamma t)]}:=B(t), \quad  \forall t\geq 0,
	\end{equation}
	(which can be easily deduced for $\|u(t)\|_{\ell^2_{per}}$ from the bound \eqref{eq27}) and the estimate \eqref{nAL} for $\|\phi(t)\|_{\ell^2_{per}}$. We remark that under the assumption \eqref{estini}, the function $B(t)$ satisfies
	
	\begin{equation}
		\label{B1}
		0<B(t)< A_*^2Nh,\quad \mbox{for all $t>0$},
	\end{equation}
	and recall from Theorem \ref{T1}, that 
	\begin{equation}
		\label{B2}
		\lim_{t\rightarrow\infty}B(t)=A^2_*Nh.
	\end{equation}
	Then we integrate in the arbitrary interval $[0,t]$ to get the claimed estimate \eqref{proxest1}. We also note that the assumption \eqref{estini} can be written alternatively as 
	\begin{equation}
		\label{B3}
		\nu\gamma>\beta,
	\end{equation}
	and due to \eqref{B3} the functions $F_1(t)$ and $F_2(t)$ are well defined for all $t>0$.\\
	{\em Proof of the estimate II:} Since the bound \eqref{B1} holds for all $t>0$, the right-hand side of \eqref{dist5} can be estimated alternatively, giving the differential inequality
	\begin{align}\label{dist6}
		\frac{d}{dt}\| Y\|_ {\ell^2_{per}}  \le 
		\gamma A_*\sqrt{Nh} + \big(\sqrt{\delta^2 + 1}\big)\sqrt{h}A_*^3(N)^{3/2} + 	2\sqrt{h}\big[\exp(h^{-1}\mathcal{N}(0))-1\big]^{3/2}.
	\end{align}
	Integration of \eqref{dist6} in the arbitrary interval $[0,t]$ proves the alternative estimate \eqref{proxest3} with the at most linear growth rate \eqref{proxest4}.
\end{proof}

\subsection{Remarks on the numerical observations of section \texorpdfstring{\ref{RWE}}{} under Theorem \texorpdfstring{\ref{proximityT}}.}
The numerical observations of section \eqref{RWE} can be explained under the light of Theorem \ref{proximityT}. It is more tractable for simplicity, to consider the estimate II given in \eqref{proxest3}-\eqref{proxest4}. The explicit linear growth rate $\alpha$ \eqref{proxest4} and its dependence on $\gamma>0$, $\delta<0$, $N$, suggests a moderate linear growth for short time intervals when $|\delta|$ and $|\gamma|$ are small, and $N$ is physically relevantly large. In particular, we observe that  the spatially averaged distance (which is equivalent and close to the  $\ell^{\infty}$ metric in the case of the finite lattices as it is justified from \eqref{equi})
\begin{equation}
	\label{spdist}
	D_a(t):=\frac{1}{\sqrt{Nh}}\|Y(t)\|_{\ell^2_{per}},
\end{equation}
has a linear growth rate, at most of $\mathcal{O}(1)$ for short time intervals, 
if 
\begin{equation}
	\label{small2}
	\gamma^3<\min\Big\{-\frac{\delta}{Nh},-\frac{\delta^3}{(\delta^2+1)hN^3}\Big\}\approx \mathcal{O}(1). 
\end{equation} 
The smallness requirement \eqref{small2}  is relevant in  physical setups where $\gamma$ and $-\delta$ are small, as chosen for our numerical experiments. It is highlighted in the top row of Figure \ref{Fig10}, depicting plots of the function \eqref{spdist} when $t\in [0,10]$ for the initial conditions \eqref{loc_ic} (panel (a)) and  \eqref{expini} (panel (b)). The rest of parameters are selected as in Figures \ref{Fig9A} and \ref{Fig9B}. In accordance with the theoretical estimates,  for these set of lattice parameters and time interval, $D_a(t)$  is smaller than $\mathcal{O}(1)$; we remark the almost linear growth after the minimum (attained close to the rogue wave event), which is still less than of $\mathcal{O}(1)$, for this time interval.

Moreover, the estimates \eqref{proxest3}-\eqref{proxest4} suggest that the distance between the structures for the above set of parameters, should be smaller when focused to the core of the MI pattern around $x_0=0$ for small time intervals. This prediction is  illustrated in the bottom row of Figure \ref{Fig10}, depicting  the time evolution of the  averaged distance 
\begin{equation}\label{Dr2}
	D_{a,r}:=  \frac{1}{\sqrt{hN_r}}\Big(h\sum_{-10}^{10}|u_n(t)- \phi_n(t)|^2\Big)^\frac{1}{2},
\end{equation}
where $N_r $ denotes the number of nodes in $x_n\in[-10,10]$.

\begin{figure}[!ht]
	\centering 
	\begin{tabular}{cc}
		(a)&(b)\\
			\includegraphics[scale=0.3]{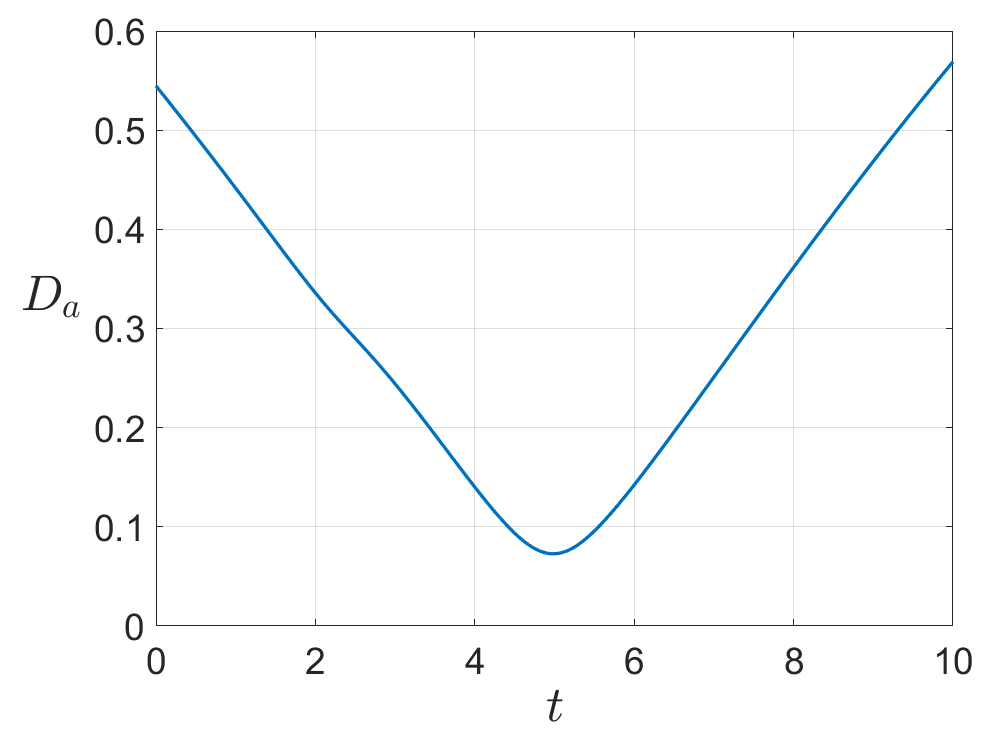}& 
					\includegraphics[scale=0.3]{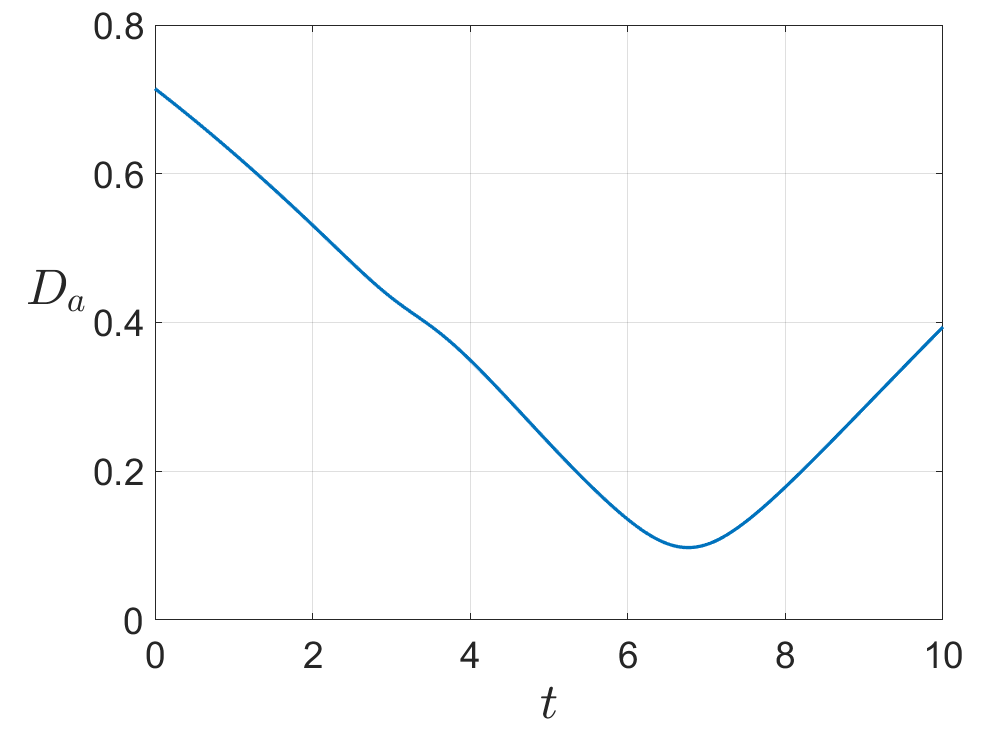}\\
				(c)&(d)\\
	\includegraphics[scale=0.3]{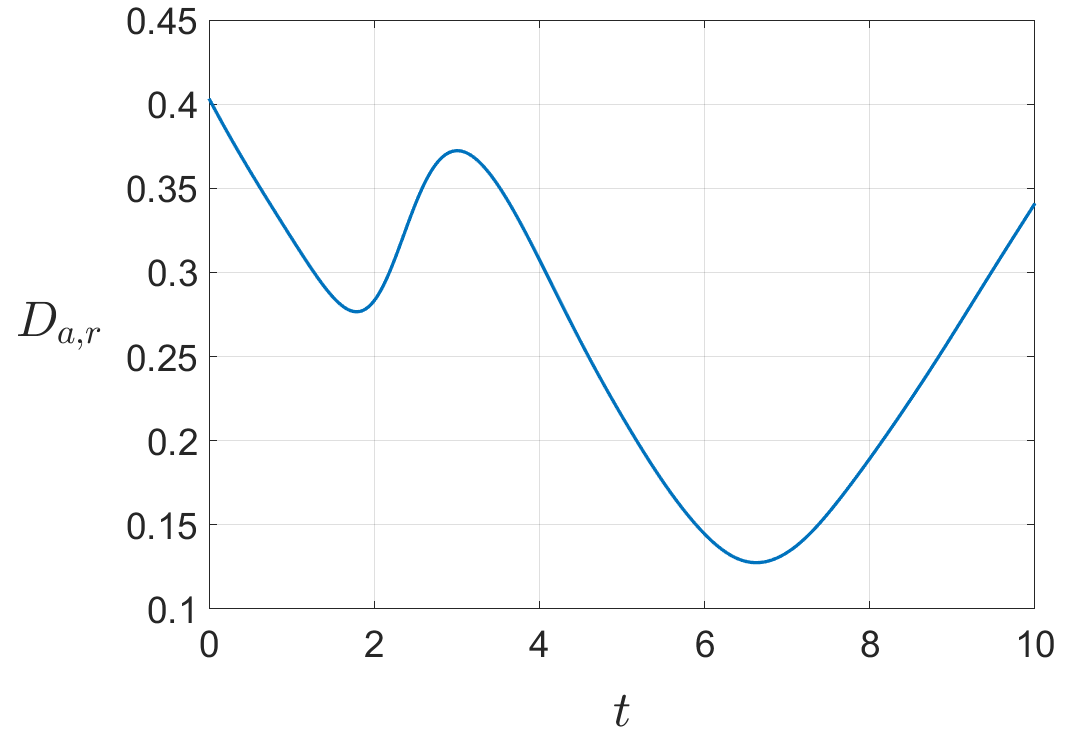}&
		\includegraphics[scale=0.3]{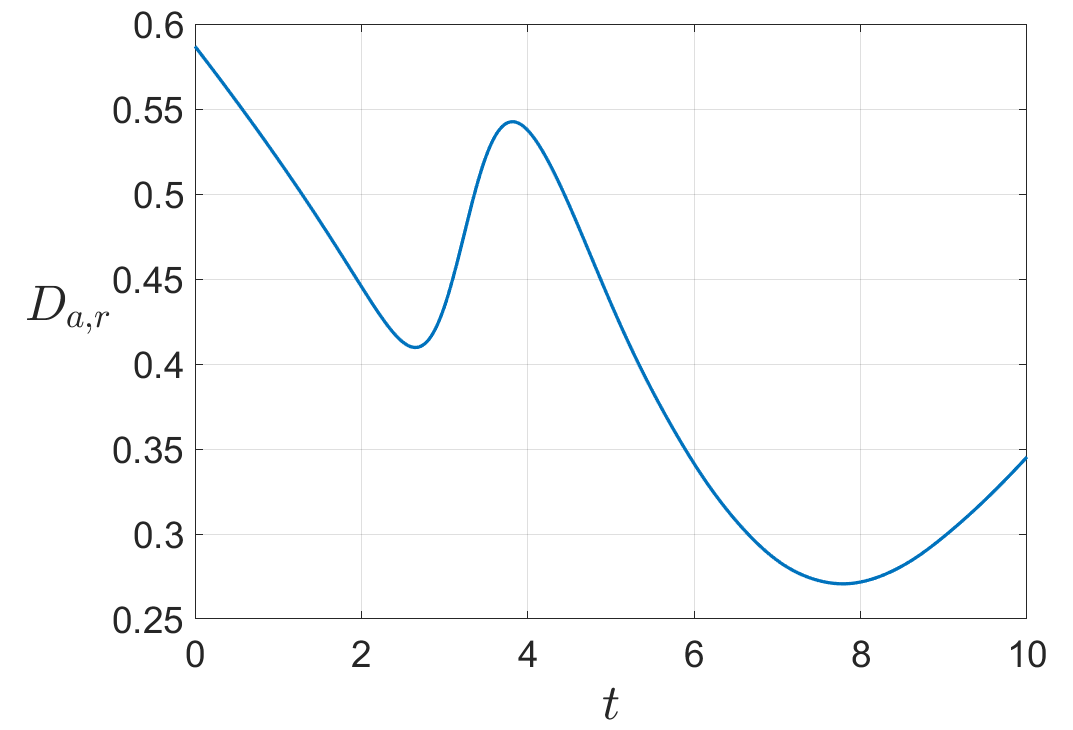}
	\end{tabular}
	\caption{Parameters as in Figures \ref{Fig9A} and  \ref{Fig9B}. Panels (a) and (b): Plot of the spatially averaged distance $D_a(t)$ \eqref{spdist} when $t\in [0,10]$,
			for the initial conditions \eqref{loc_ic} and  \eqref{expini}, respectively. Panels (c) and (d): Plot of $D_{a,r} (t)$ \eqref{Dr2} when $t\in [0,10]$ and  $x_n\in[-10,10]$, for the same initial conditions.}
	\label{Fig10}
\end{figure}

\subsection{Concluding remarks for the boundary conditions}
\begin{enumerate}
	\item
{\em Generalized boundary conditions at infinity: existence and uniqueness}. Motivated again by \cite{blmt2018} and \cite{bm2017}, instead of the boundary conditions \eqref{nv}, one may seek to consider more general nonzero boundary conditions at infinity, of the form 
\begin{equation}\label{nv2}
	\lim_{|n|\rightarrow\infty} u_n(t)=\lim_{|n|\rightarrow\infty}e^{i G^2t}\zeta_n,
\end{equation}
where $G\in\mathbb{R}$ and
\begin{equation}
	\label{nbc2}
	\zeta_n = \left\{
	\begin{array}{ll}
		\zeta_-, &n<0,
		\\
		\zeta_+, &n\geq 0,
	\end{array}
	\right.\;\;\;\mbox{$\zeta_{\pm}\in\mathbb{C}$ with $|\zeta_{\pm}|=\zeta$.}
\end{equation}
Using first the change of variables $u_n(t)=\psi_n(t)\exp(\rmi G^2t)$, and then,  the change of variables $U_n={\psi}_n-\zeta_n$,	
we derive that the modified system for $U_n$ reads now as 
\begin{align}
	\label{dnls_glb2}
	\mathrm{i}\dot{U}_n+ \Delta_d U_n+ \Delta_d \zeta_n-G^2(U_n+\zeta_n) &+|U_n+\zeta_n|^2(U_n+\zeta_n)\nonumber\\ &=\mathrm{i}\gamma (U_n+\zeta_n)+\mathrm{i}\delta|U_n+\zeta_n|^2(U_n+\zeta_n),\;\;
\end{align}
and the boundary conditions for $U_n$  are again \textit{zero}  at infinity, that is, the solution of \eqref{dnls_glb2}  must satisfy \eqref{Vbc1}. \emph{Note that the term $\Delta_d\zeta_n\in\ell^2$ by the definition of $\zeta_n$ in \eqref{nbc2}}.  The operators $\mathcal{F}_1$ and $\mathcal{F}_2$ are defined in this case by 
\begin{align*}
	\mathcal{F}_1(U_n)&=-G^2(U_n+\zeta_n)+|U_n+\zeta_n|^2(U_n+\zeta_n),\\
	\mathcal{F}_2(U_n)&=\mathrm{i}\gamma (U_n+\zeta_n)+\mathrm{i}\delta|U_n+\zeta_n|^2(U_n+\zeta_n).
\end{align*}
The operator $\mathcal{F}_1$ can be written as 
\begin{equation*}
	\mathcal{F}_1(U_n)=-G^2U_n-G^2\zeta_n+|\zeta_n|^2\zeta_n+\left\{|U_n|^2 U_n+2(|U_n|^2\zeta_n+U_n|\zeta_n|^2)+U_n^2\overline{\zeta}_n+\overline{U}_n\zeta_n^2\right\}, 
\end{equation*}
while the operator $\mathcal{F}_2$, as 
\begin{equation*}
	\mathcal{F}_2(U_n)=\mathrm{i}\gamma U_n+\mathrm{i}\gamma\zeta_n+\mathrm{i}\delta |\zeta_n|^2\zeta_n+\mathrm{i}\delta\left\{|U_n|^2 U_n+2(|U_n|^2\zeta_n+U_n|\zeta_n|^2)+U_n^2\overline{\zeta}_n+\overline{U}_n\zeta_n^2\right\}.
\end{equation*}
Carrying out the same analysis as for the proof of Theorem \ref{TH1} for the operators $\mathcal{F}_1$ and $\mathcal{F}_2$, we have that a unique solution $U\in C^1([0,T],\ell^p)$ of the modified equation \eqref{dnls_glb2} exists, if and only if 
\begin{equation*}
	-G^2\zeta_n+|\zeta_n|^2\zeta_n=0 \quad \text{and}\quad \mathrm{i}\gamma\zeta_n+\mathrm{i}\delta |\zeta_n|^2\zeta_n=0.
\end{equation*}
Therefore, we conclude with the following corollary.
\begin{corollary}
	\label{corol1}
	Consider the DNLS system \eqref{dnls_gl} with $\gamma>0,\delta<0$ in the infinite lattice supplemented with the nonzero boundary conditions \eqref{nv2}-\eqref{nbc2}. We also assume that the initial condition satisfies the boundary conditions \eqref{nv2}\eqref{nbc2}. Then the system has a solution if and only if $G^2=\zeta^2$ and $\zeta=A_*$. The solution is unique. 	
\end{corollary}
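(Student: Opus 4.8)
The plan is to reuse the machinery of Theorem~\ref{TH1} essentially verbatim, the only new bookkeeping being to track which $U$-independent terms survive the two changes of variables. First I would perform the gauge transformation $u_n=\psi_n\exp(\rmi G^2t)$ and the shift $U_n=\psi_n-\zeta_n$, arriving at the system \eqref{dnls_glb2} together with the \emph{vanishing} boundary conditions \eqref{Vbc1}; as already noted, the forcing term $\Delta_d\zeta_n$ causes no trouble, since by \eqref{nbc2} the sequence $\zeta_n$ is piecewise constant with a single jump, so $\Delta_d\zeta_n$ has finite support and lies in every $\ell^p$. Since the boundary conditions are nonzero we have $\zeta>0$, hence $|\zeta_n|=\zeta>0$ for all $n$, and the crucial observation I would then isolate is that $\zeta_n$ itself is \emph{not} in any $\ell^p$, $p<\infty$; consequently any term of $\mathcal{F}_1$ or $\mathcal{F}_2$ not carrying a factor of $U$ is fatal to an $\ell^p$ estimate unless it vanishes identically.

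Collecting, from the expansions of $\mathcal{F}_1$ and $\mathcal{F}_2$ displayed just above the corollary, all the terms free of $U$, the single obstruction is the sequence $\big[(\zeta^2-G^2)+\rmi(\gamma+\delta\zeta^2)\big]\zeta_n$, whose modulus is the constant $\zeta\,|(\zeta^2-G^2)+\rmi(\gamma+\delta\zeta^2)|$. Hence it belongs to $\ell^p$ if and only if the complex coefficient is zero, that is $G^2=\zeta^2$ (real part) and $\gamma+\delta\zeta^2=0$ (imaginary part); with $\gamma>0$, $\delta<0$, $\zeta>0$ the latter is equivalent to $\zeta=\sqrt{-\gamma/\delta}=A_*$. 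This pins down both implications. For the \emph{if} direction, assuming $G^2=\zeta^2=A_*^2$, the remaining terms of $\mathcal{F}_1$ and $\mathcal{F}_2$ all contain a power of $U$ multiplied by the \emph{bounded} sequence $\zeta_n$ ($\|\zeta\|_{\ell^\infty}=\zeta$), so the Lipschitz inequality for the cubic nonlinearity used in the proof of Theorem~\ref{TH1} together with the embedding \eqref{lp1} give that $\mathcal{F}_1,\mathcal{F}_2:\ell^p\to\ell^p$ satisfy the analogues of \eqref{op1}--\eqref{op2}; adding the bounded operator $\Delta_d$ and the fixed element $\Delta_d\zeta$, the equation \eqref{dnls_glb2} becomes $\dot{U}=\mathcal{L}(U)$ with $\mathcal{L}:\ell^p\to\ell^p$ locally Lipschitz, and the generalized Picard--Lindel\"of theorem \cite[Theorem 3.A, pg. 78]{zei85a} applied to the integral formulation delivers, for every $0<T<T^*(U(0))$, a unique $U\in C^1([0,T],\ell^p)$ with the usual blow-up alternative and continuous dependence. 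Undoing the two transformations recovers the unique solution of \eqref{dnls_gl}--\eqref{nv2}--\eqref{nbc2}. For the \emph{only if} direction I would argue by contradiction as in Theorem~\ref{TH1}: a solution on $[0,T]$ gives, after the transformations, $U\in C^1([0,T],\ell^p)$, hence $\rmi\dot{U}_n(t)\in\ell^p$; isolating $\rmi\dot{U}_n$ on one side of \eqref{dnls_glb2}, every term on the other side that involves $U$ (at least linearly) and the term $\Delta_d\zeta_n$ lie in $\ell^p$, which forces the leftover $\big[(G^2-\zeta^2)+\rmi(\gamma+\delta\zeta^2)\big]\zeta_n$ into $\ell^p$ and hence, by the constancy of its modulus, $G^2=\zeta^2$ and $\gamma+\delta\zeta^2=0$.

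I do not expect a genuine obstacle: the argument is structurally identical to Theorem~\ref{TH1}. The only two points needing care are (i) the bookkeeping that cleanly separates the ``bad'' $U$-independent part $\big[(\zeta^2-G^2)+\rmi(\gamma+\delta\zeta^2)\big]\zeta_n$ from the admissible $U$-dependent remainder — in particular not being distracted by the fact that $\zeta_-\neq\zeta_+$ is allowed, which is irrelevant since only $|\zeta_n|=\zeta$ enters the estimate of the modulus — and (ii) checking that the mixed terms $|U_n|^2\zeta_n$, $U_n|\zeta_n|^2$, $U_n^2\overline{\zeta}_n$, $\overline{U}_n\zeta_n^2$ remain locally Lipschitz from $\ell^p$ to $\ell^p$ despite $\zeta\notin\ell^p$, which is immediate from $\|\zeta\|_{\ell^\infty}=\zeta<\infty$.
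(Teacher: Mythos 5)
Your proposal is correct and follows essentially the same route as the paper: the same gauge transformation and shift, the observation that $\Delta_d\zeta_n\in\ell^p$ while $\zeta_n$ itself is not, the isolation of the $U$-independent terms whose constant modulus obstructs any $\ell^p$ bound unless $G^2=\zeta^2$ and $\gamma+\delta\zeta^2=0$, and the Picard--Lindel\"of argument for sufficiency. The only (harmless) cosmetic difference is that you merge the two obstruction terms into a single complex coefficient where the paper keeps the conditions $-G^2\zeta_n+|\zeta_n|^2\zeta_n=0$ and $\mathrm{i}\gamma\zeta_n+\mathrm{i}\delta|\zeta_n|^2\zeta_n=0$ separate; since the coefficient has real and imaginary parts that are each real numbers, the two formulations are equivalent.
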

	\item {\em Dirichlet boundary conditions}.   (a) As proven in Theorem \ref{TH1}, Dirichlet boundary conditions are relevant for approximating problem \eqref{dnls_gl}-\eqref{nv} if and only if one uses the equivalent modified DNLS equation \eqref{dnls_glb} supplemented with zero Dirichlet boundary conditions. This approach approximates the zero boundary conditions at infinity \eqref{Vbc1} for \eqref{dnls_glb}. Repeating the numerical studies of sections IV and V with this approximation and using the change of variables $u_n(t)=(U_n(t)+A)\exp(\rmi A^2t)$ produces exactly the same results as those reported in these sections for the DNLS \eqref{dnls_gl} supplemented with periodic boundary conditions.
	(b) Clearly, Dirichlet boundary conditions for the original, unmodified DNLS \eqref{dnls_gl} are not relevant for the finite lattice approximation of the non-zero boundary conditions at infinity \eqref{nv}.
	\item {\em Generalized boundary conditions: asymptotic behavior}. Corollary \ref{corol1} generalizes Theorem \ref{TH1} to the case of more general nonzero boundary conditions at infinity \eqref{nv2}-\eqref{nbc2}. Then, one can speculate that the results on the structure of the attractor of section III and the numerical results of sections IV and V can be of similar nature as for the simplest case of boundary conditions \eqref{nv}. However, this speculation, to be verified, needs theoretical and numerical analysis either (i) on the modified DNLS \eqref{dnls_glb2} supplemented with Dirichlet boundary conditions or (ii) on the unmodified DNLS \eqref{dnls_gl} supplemented, this time, with the relevant non-zero Dirichlet boundary conditions. Such an analysis may have similarities but also differences compared with the one of the boundary conditions \eqref{nv} and may deserve independent attention to be considered elsewhere. In this regard, even more general boundary conditions can be considered with a varying $\zeta_n$, such that $\lim_{n\rightarrow\infty} |\zeta_n|=\zeta$ (a discretized version of the boundary conditions of \cite[Section 3, pg. 127]{JDE2024}).
\end{enumerate}

\section{Conclusions}
\label{conclusions}
In this work, we studied the dynamics of the Discrete Nonlinear Schr\"odinger Equation incorporating linear gain and nonlinear loss effects, a  significant model in various physical contexts. In the case of the infinite lattice we considered the problem with nonzero boundary conditions proving that solutions exist if and only if the supporting background has a prescribed critical value $A_*$ defined by the gain and loss strengths. We argued that this property cannot be captured by finite lattice approximations, as those defined by the system supplemented with periodic boundary conditions, which are essential for the numerical simulations of the problem. For the periodic lattice, which is of physical and mathematical significance itself,  we proved that the dynamics are dominated by the convergence to a global attractor which is the unique plane wave of constant amplitude, namely $A_*$, with a novel argument which studies not only the limits of the amplitudes of the initial conditions but also their frequencies and spectrum.  The convergence arguments accompanied by the analysis for modulation instability justify that this fundamental mechanism for the emergence of localized structures can be only transient in the case of the periodic lattice. The dynamics of localized initial data are investigated numerically: in the case where $A=A_*$ which is the only relevant in order to approximate the infinite lattice, we identify the persistence of the major characteristics of the modulation instability patterns manifested in the dynamics of integrable NLS systems, as the Ablowitz-Ladik lattice: the oscillatory  wedge shaped sector and the two outer plane wave regions of the undisturbed background of amplitude $A_*$, which is  globally asymptotically stable in the case of the finite lattice. In the case $A\neq A_*$ which is only relevant for the finite lattice, the instability pattern still persists, this time on a varying background transiently, prior the convergence of the dynamics to the plane wave attractor. We may conclude that the analysis of the infinite lattice ``leaves its mark" on the dynamics of the finite-dimensional one, particularly regarding the manifestation of the  major characteristics  of the universal behavior in the nonlinear stage of modulation instability. 

At the early stages of the dynamics we identified the emergence of rogue wave events which are remarkably proximal to the analytical discrete Peregrine rogue wave solution of the Ablowitz-Ladik lattice. This proximity is justified analytically by suitable estimates for the distance between the solutions of the DNLS system and the Ablowitz-Ladik lattice. This proximity, in light of the theoretical analysis for the infinite lattice, may pave the way for new and interesting investigations for the potential construction of rogue wave-like solutions for the dissipative DNLS via fixed-point iterations, using the  numerical methods outlined in \cite{SDP2018,EP,WM}.

As a summary, we argue that the breaking of integrability by the considered dissipative DNLS system is not dramatic since the major features of the integrable dynamics are preserved. The above results showcase that the behavior due to modulation instability can be  universal and observed not only in Hamiltonian NLS systems but also in even more realistic set-ups described by non integrable systems where dissipation and forcing effects are present.  
\begin{acknowledgments}
	The authors G.F. and N.I.K.  acknowledge that this work was partially supported by the Xiamen University Malaysia Research Fund (Grant No: XMUMRF/2022-C9/IMAT/0020).
\end{acknowledgments}
\vspace{2cm}
\noindent
\textbf{Authors Declarations}\\
The authors have no conflicts to disclose.\\
\\
\textbf{Authors Contributions Statement}\\
All authors contributed equally to the study conception, design and writing of the manuscript. Material preparation, data collection and analysis were performed equally by all authors.  All authors read and approved the final manuscript.


\begin{thebibliography}{99}
	\providecommand{\natexlab}[1]{#1}
	\expandafter\ifx\csname urlstyle\endcsname\relax
	\providecommand{\doi}[1]{doi:\discretionary{}{}{}#1}\else
	\providecommand{\doi}{doi:\discretionary{}{}{}\begingroup
		\urlstyle{rm}\Url}\fi
	\bibitem{GL1} G. Fotopoulos, N. I. Karachalios, V. Koukouloyannis and K. Vetas, {\em Collapse dynamics for the discrete nonlinear Schrödinger equation with gain and loss}, 
	Commun. Nonlinear Sci. Numer. Simul. \textbf{72} (2019), 213–231.
	%
	\bibitem{KevreDNLS} P. G. Kevrekidis, {\em The Discrete Nonlinear Schr\"odinger Equation: Mathematical Analysis,
		Numerical Computations and Physical Perspectives}
	(Springer, 2009).
	%
	\bibitem{Efrem1} F. Lederer, G. I. Stegemanb, D. N. Christodoulides, G. Assanto,
	M. Segev and Y. Silberberg, {\em Discrete solitons in optics},  Phys. Rep. \textbf{463} (2008), 1--126.
	%
	\bibitem{Boris1} E. Kenig, B. A. Malomed, M. C. Cross, and R. Lifschitz, {\em Intrinsic localized modes in parametrically driven arrays of nonlinear resonators}, Phys. Rev. E \textbf{80} (2009), 046202.
	%
	\bibitem{Boris2} B. A. Malomed, E. Ding, K. W. Chow and S. K. Lai, {\em Pinned modes in lossy lattices with local gain and nonlinearity}, Phys. Rev. E \textbf{86} (2012), 036608.
	%
	\bibitem{Boris3} G. Gligori\'{c},  A. Maluckov, L. Had\v{z}ievski and B. A. Malomed, {\em Soliton stability and collapse in the discrete nonpolynomial Schr\"odinger equation with dipole-dipole interactions}, Phys. Rev. E \textbf{79} (2009), 053609.
	%
	\bibitem{Nail1} N. Akhmediev and A. Ankiewicz, {\em Dissipative solitons: From Optics to Biology and Medicine}, Lecture Notes in Physics \textbf{751} (Springer-Verlag, 2008). 
	%
	\bibitem[{Hasegawa \& Kodama(1996)}]{KodHas87} A.  Hasegawa and Y. Kodama, {\em Solitons in optical communications}
	(Oxford University Press, 1996).
	\bibitem[{Agrawal(2003)}]{Agra1} G. P. Agrawal, {\em Nonlinear Fiber Optics} (Academic Press, 2012).
	\bibitem[{Kivshar \& Agrawal(2003)}]{Agra2} Yu. S. Kivshar and G. P. Agrawal, {\em Optical Solitons: From Fibers to Photonic Crystals} (Academic Press, 2003).
	\bibitem{akbook} N. N. Akhmediev and A. Ankiewicz,
	{\it Solitons. Nonlinear Pulses and Beams} (Chapman and Hall, 1997).
	\bibitem{Gagnon} L. Gagnon, {\em Solitons on a continuous-wave background and collision between two dark pulses: some analytical results}, J. Opt. Soc. Am. B \textbf{10} (1993), 469.
	%
	%
	\bibitem{blmt2018}
	G. Biondini, S. Li, D. Mantzavinos and S. Trillo,
	\textit{Universal behavior of modulationally unstable media}.
	SIAM Review \textbf{60} (2018), 888-908.
	%
	\bibitem{Nat2007}
	D. Solli, C. Ropers, P. Koonath and B. Jalali,
	\textit{Optical rogue waves},
	Nature \textbf{450} (2007), 1054-1057.
	%
	\bibitem{Nat2016}P. Suret, R. El Koussaifi, A. Tikan, C. Evain, S. Randoux,
	C. Szwaj and S. Bielawski, {\em Single-shot observation of optical rogue waves
		in integrable turbulence using time microscopy}, Nature Commun. \textbf{7} (2016), 13136.
		%
	\bibitem{APS} A. Ankiewicz, N. Akhmediev, and J. M. Soto-Crespo, {\em Discrete rogue waves of the Ablowitz-Ladik and Hirota equations}
	Phys. Rev. E \textbf{82} (2010), 026602.
	%
	\bibitem{Tsironis2009} A. Maluckov, Lj. Had\v{z}ievski, N. Lazarides and G. P. Tsironis {\em Extreme events in discrete nonlinear lattices}, Phys. Rev. E \textbf{79} (2009), 025601 (R).
	%
	\bibitem{SDP2018} C. Hoffmann, E. G. Charalampidis, D. J .Frantzeskakis and P. G. Kevrekidis, {\em Peregrine solitons and gradient catastrophes in discrete nonlinear Schrödinger systems}, Phys. Lett. A \textbf{382} (2018), 3064-3070.
	%
	\bibitem{EP} J. Sullivan, E. G. Charalampidis, J. Cuevas-Maraver, P. G. Kevrekidis and 
	N. I. Karachalios, {\em Kuznetsov–Ma breather-like solutions in the Salerno model}, Eur. Phys. J. Plus  \textbf{135} (2020), 607.
	%
	\bibitem{WM} E.G. Charalampidis  G. James, J. Cuevas-Maraver, D. Hennig,
	N.I. Karachalios and P.G. Kevrekidis, {\em Existence, stability and spatio-temporal dynamics of
		time-quasiperiodic solutions on a finite background in discrete
		nonlinear  Schrödinger models}, Wave Motion \textbf{128} (2024), 103324.
	%
	\bibitem{bm2017} G. Biondini and  D. Mantzavinos,
	\textit{Long-time asymptotics
		for the focusing nonlinear Schr\"odinger equation with nonzero boundary conditions at infinity and asymptotic stage of modulational instability.} 
	Comm. Pure Appl. Math. \textbf{70}  (2017), 2300-2365.
	%
	\bibitem{DJNa} D. Hennig, N.I. Karachalios and J. Cuevas-Maraver, {\em The closeness of the Ablowitz-Ladik lattice to the discrete nonlinear Schr\"odinger equation}. J. Differential Equations \textbf{316} (2022), 346-363.
	%
	\bibitem{DJNc} D. Hennig, N.I. Karachalios and J. Cuevas-Maraver, {\em The closeness of localized structures between the Ablowitz-Ladik lattice and discrete nonlinear Schr\"{o}dinger  equations: generalized AL and DNLS systems.} J. Math. Phys. \textbf{63} (2022), 042701.
	%
	\bibitem{DJNb} D. Hennig, N.I. Karachalios and J. Cuevas-Maraver, {\em Dissipative localised structures for the complex discrete Ginzburg-Landau equation}. J. Nonlinear Sci. \textbf{33}:51 (2023), (27pp).
	%
	\bibitem{zei85a} E. Zeidler, {\em Nonlinear Functional Analysis and its Applications, Vol. I: Fixed Point Theorems} (Springer-Verlag 1986). 
	%
	\bibitem{Kivshar} Y. S. Kivshar and M. Peyrard, {\em Moduation Instabilities in discrete lattices}, Phys. Rev. A \textbf{46} (1992), 3198-3205.
	%
	\bibitem{PM1} M. H. Hays, C .D. Levermore and P. D. Miller, {\em Macroscopic lattice dynamics}, Phys. D \textbf{79} (1994), 1--15.
	\bibitem{JDE2024}D, Hennig, N. I. Karachalios, D. Mantzavinos, J. Cuevas-Maraver and I. G. Stratis, {\em On the proximity between the wave dynamics of the integrable focusing nonlinear Schr\"{o}dinger equation and its non-integrable generalizations}, J. Differential Equations \textbf{397} (2024), 106–165.
\end{thebibliography}
\end{document}